\newcommand\too{\longrightarrow}
\newcommand\inv{^{-1}}
\DeclareMathOperator\Lie{Lie}
\DeclareMathOperator\diag{diag}
\DeclareMathOperator\id{id}
\newcommand{\Hol}{^\mathrm{hol}}
\newenvironment{softthm}{\begin{quote}\hspace{-0.5cm}}{\end{quote}}
\newenvironment{softqt}%
{\begin{list}{}%
	{\setlength{\leftmargin}{1cm}}
	\item[]%
}
{\end{list}}
\theoremstyle{plain}
\newtheorem{thm}{Theorem}
\newtheorem*{question'}{Question \arabic{question}$'$}
\newtheorem{lemma}{Lemma}[section]
\theoremstyle{definition}
\newtheorem{definition}{Definition}
\newtheorem{example}{Example}
\theoremstyle{remark}
\newtheorem*{rem}{Remark}
\newtheorem{remnum}{Remark}[section]
\newcommand\Aa{\mathcal{A}}  \newcommand\CC{\mathcal{C}} \newcommand\DD{\mathcal{D}}  \newcommand\FF{\mathcal{F}}  \newcommand\HH{\mathcal{H}} \newcommand\II{\mathcal{I}}  \newcommand\KK{\mathcal{K}}                  \newcommand\CCC{\mathbb{C}}     \newcommand\HHH{\mathbb{H}}           \newcommand\RRR{\mathbb{R}}        \newcommand\ZZZ{\mathbb{Z}}
\DeclareMathOperator\SL{SL}
\DeclareMathOperator\GL{GL}
\DeclareMathOperator\Mat{Mat}
\DeclareMathOperator\Hom{Hom}
\DeclareMathOperator\Cinf{C^\infty}
\DeclareMathOperator\Cliff{\CC\!\ell}
\newcommand\Lp{\ensuremath{L_+}}
\newcommand\Lpc{\ensuremath{L_+(\CCC)}}
\newcommand\Lpo{\ensuremath{L_+^\uparrow}}
\newcommand\Lpoc{\ensuremath{(L_+^\uparrow)^\CCC}}
\newcommand\Lpot{{\ensuremath{\tilde L_+^\uparrow}}}
\newcommand\Lpotc{\ensuremath{(\tilde L_+^\uparrow)^\CCC}}
\newcommand\Lpt{\ensuremath{{}^b\!\tilde L_+}}
\newcommand\Lpti{\ensuremath{{}^a\!\tilde L_+}}
\newcommand\Lptt{\ensuremath{\tilde L_+}}
\newcommand\Lpa{L_+^{\downarrow}}
\newcommand\Lia{\ensuremath{L_-^\downarrow}}
\newcommand\Lpat{\ensuremath{{}^b\!\tilde L_+^{\downarrow}}}
\newcommand\Lpati{\ensuremath{{}^a\!\tilde L_+^\downarrow}}
\newcommand\Lio{\ensuremath{L_-^\uparrow}}
\newcommand\sltc{\SL(2,\CCC)}
\newcommand\R{\ensuremath{\rho}}
\newcommand\RC{\R^\CCC}
\newcommand\RbF{\ensuremath{[\rho\omega]}}
\newcommand\RhbFc{\ensuremath{[\rho_\#\omega]}}
\newcommand\RbFc{\ensuremath{\RbF}}
\newcommand\RbFs{\ensuremath{\RbF}}
\newcommand\RbFH{\ensuremath{\RbF\Hol}}
\newcommand\RHbPF{\ensuremath{[\rho\Hol\omega']}}
\newcommand\RP{\R'}
\newcommand\RPbPF{\ensuremath{[\rho'\omega']}}
\newcommand\Kf{\KK^\mathrm{form}}
\newcommand\Kt{\KK^\mathrm{test}}
\newcommand\KfC{\KK^\mathrm{form}_+}
\newcommand\Kfc{\KK^\mathrm{form}_c}
\newcommand\Kfs{\KK^\mathrm{form}_s}
\newcommand\Df{\DD^{\mathrm{form}}}
\newcommand\Dd{\DD^{\mathrm{diff}}}
\newcommand\Fc{\ensuremath{\FF_c}} 
\newcommand\Fs{\ensuremath{\FF_{sc}}} 
\newcommand{\bd}{\begin{description}}
\newcommand{\ed}{\end{description}}
\newcommand{\bea}{\begin{eqnarray}}
\newcommand{\eea}{\end{eqnarray}}
\newcommand{\beal}{\begin{equation}\begin{aligned}}
\newcommand{\eeal}{\end{aligned}\end{equation}}
\newcommand{\beals}{\begin{equation*}\begin{aligned}}
\newcommand{\eeals}{\end{aligned}\end{equation*}}
\newcommand{\be}{\begin{equation}}
\newcommand{\ee}{\end{equation}}
\newcommand{\bdm}{\begin{displaymath}}
\newcommand{\edm}{\end{displaymath}}
\newcommand{\mc}[1]{\ensuremath{\mathcal{#1}}}
\newcommand{\bi}{\begin{itemize}}
\newcommand{\ei}{\end{itemize}}
\newcommand{\s}{\,\!}
\newcommand{\ben}{\begin{enumerate}}
\newcommand{\een}{\end{enumerate}}
\title{The CPT Theorem}
\author{Hilary Greaves and Teruji Thomas}
\begin{document}
\maketitle

\section{Introduction and Motivation}
\label{intro}

The CPT theorem says, roughly, that every relativistic quantum field theory has a symmetry that simultaneously reverses charge (C), reverses the orientation of space (or `parity,' P), and reverses the direction of time (T). In this paper we will state and prove a general version of this theorem, proceeding from first principles and explicitly setting out all required assumptions. 

Why re-examine a result that is so widely known?  The motivation stems from the fact that, as a general rule, the QFT literature splits rather sharply into two sectors. The first sector deals with `Lagrangian QFT'; it speaks the language of mainstream particle physics, but is often rather relaxed about mathematical rigour. The second sector is fully rigorous, but bears a much looser relationship to the QFTs that actually enjoy predictive success; it includes the axiomatic program of Streater and Wightman, and the purely algebraic approach (AQFT) associated with e.g. Araki, Haag and Kastler.

The literature on the CPT theorem is no exception to this general rule. In standard Lagrangian-QFT textbooks (e.g. \cite{psQFT}, \cite{izQFT}, \cite{weiQFT1}) the `theorem' is that Lagrangians of a certain kind are necessarily invariant under a CPT transformation of the fields; they establish this result via case-by-case calculations for the fields of most physical interest (e.g. vectors or Dirac spinors in 3+1 spacetime dimensions), and refer the reader to e.g. \cite{swPCT} for a more rigorous and general proof. If one follows up these references, one indeed finds a fully rigorous proof of a result called `CPT Theorem,' but the relationship of \emph{that} result to the CPT invariance of Lagrangians is obscure; the same remark applies to such AQFT results as that presented in \cite{byCPT}.
The literature contains a gap: there is no rigorous, general proof available of the CPT theorem within the framework of \emph{Lagrangian} QFT.

This paper aims to fill that gap. We present a rigorous proof using only the basic geometric and group-theoretic facts on which the CPT result essentially depends. Our approach has the following features.

(1)   We are concerned solely with the symmetries of Lagrangian densities, dynamical equations, and similar objects; we say only enough about quantum field theory \emph{per se} to motivate appropriate transformation laws.  
In fact, our results apply formally to classical relativistic field theories just as well as to quantum ones. We find that the quantum CPT theorem is an instance of a more general result, other instances of which can be seen as classical PT, classical CPT and quantum PT theorems. In standard approaches to the CPT theorem, the relationship between quantum and classical  symmetries is left unclear.

(2) We give a general construction of CPT transformations for an arbitrary field, based only on how that field transforms under proper orthochronous Lorentz transformations.   This construction is clearly tied to the requirements of our proof of the CPT theorem, so it is clear \emph{why} an invariance theorem results for these particular transformations. In the existing Lagrangian-QFT literature, the CPT transformations tend to be introduced \emph{ad hoc} and case-by-case.

(3) We rely on a few basic geometric properties of the Lorentz group, so that our results are valid for Minkowski space, and, indeed, for any non-Euclidean signature, in dimension at least three. These properties are absent in dimension two and for Galilean spacetimes (for which we show there is no analogous result). The standard approach relies on a detailed classification of the representations and invariants of the four-dimensional Lorentz group, thus obscuring the basic structure and generality of the result. 

(4)  Our key technique is passage from the real to the complex Lorentz group. This `complexification' is also the key idea used to prove the CPT theorem of axiomatic QFT, but it plays no overt role in standard approaches to the Lagrangian CPT theorem.\footnote{ Complexification \emph{does} play a key role in the treatment of tensors in an illuminating paper by J. S. Bell \citeyear{belTR}; the latter was the original inspiration for the present paper.}

\subsection*{}
We develop our argument pedagogically, treating first the simpler case of fields taking values in true representations of the Lorentz group (i.e. tensor fields), and later generalising to include properly projective representations (spinor fields). The reader interested only in the broad outline of our results can skip sections \ref{pt_tensor}--\ref{SR_spinor}.

The structure of the paper is as follows. Sections \ref{ffts}--\ref{overview} lay the conceptual foundations.  Section \ref{ffts} 
introduces our basic notion of a `formal field theory,' and explains how it can be used to study the symmetries of classical and quantum field theories.  Section \ref{ptcpt} explains the distinction between PT and CPT transformations, and the related idea of charge conjugation.  
Section \ref{overview} uses this framework to give a detailed overview of our results.

Sections \ref{pt_tensor}--\ref{SR_spinor} form the technical heart of the paper. 
Section \ref{pt_tensor} states and proves a `classical PT theorem':
we show that for classical field theories whose dynamical fields take values exclusively in \emph{true} representations of the Lorentz group (thus excluding spinor fields), proper orthochronous Lorentz invariance entails `PT invariance.'  Section \ref{SR_tensor} generalises the result of section \ref{pt_tensor}: we prove a general invariance theorem that has `tensors-only' versions of the classical PT theorem, the quantum CPT theorem, and classical CPT and quantum PT theorems as corollaries. Of these, the classical PT and quantum CPT theorems are the most interesting, because their premisses are widely accepted.  

We next generalise to spinorial field theories. Section \ref{covers} lays out the basic facts concerning covers of the proper Lorentz group. 
Section \ref{holomorphic} explains how the most straightforward attempt to generalise our classical tensorial PT theorem to include spinors fails.
Section \ref{SR_spinor}, building on this instructive failure, further generalises the results of section \ref{SR_tensor} to the spinorial case; this includes the full quantum CPT theorem. 

Section \ref{axioms} examines how our methods apply beyond Minkowski space. We generalise our results to arbitrary non-Euclidean signatures in dimension at least 3. We also point out why our methods fail in various settings where there is provably no analogue of the CPT theorem.  Section \ref{conclusion} is the conclusion.

Some mathematical background is presented in Appendix \ref{maths}, to which the reader should refer as necessary.  Appendix \ref{Clifford} relates our treatment of the covering groups of the Lorentz group to the usual approach in terms of Clifford algebras. Detailed proofs are relegated to Appendix \ref{proofs}.

\section{Field Theories and Their Symmetries}
\label{ffts}

We will state and prove our invariance theorems in a setting of `formal field theories,' in which the objects of study are formal polynomials that can equally well be interpreted as dynamical equations or as defining Lagrangian or Hamiltonian densities for classical or quantum field theories. The advantage of this framework (over, say, one that takes the objects of study to be spaces of kinematically allowed fields and their automorphisms) is its neutrality between classical and quantum field theories, and between various interpretations of QFTs (as dynamical constraints on operator-valued distributions, formal algorithms for the generation of transition amplitudes, or anything else). 

In this section we explain in detail what a formal field theory is, and how they can be used to describe classical and quantum field theories. In particular, we explain how to analyse \emph{space-time symmetries} of classical and quantum field theories in terms of an analogous notion for formal field theories. 

Initially, `spacetime' $M$ can be any vector space.\footnote{As a matter of convenience, we choose an origin for $M$ (thus making it a \emph{vector} space instead of an affine space). When we discuss symmetries, this choice allows us to focus on the Lorentz group rather than the full Poincar\'e group; it is justified by an implicit assumption that our field theories are, in an appropriate sense, translation invariant.} We must eventually suppose that $M$ has enough structure for us to speak of `time-reversing' transformations.

\subsection{Classical field theories.} A \emph{classical field theory} is a set $\mc D \subset \mc K$, where the set $\mc K \equiv C^\infty(M,V)$ of kinematically allowed fields consists of all smooth functions from spacetime to some finite-dimensional real vector space $V$.\footnote{If the theory `contains two or more dynamical fields,' as e.g. electromagnetic theory contains the Maxwell-Faraday tensor field $F_{\alpha\beta}$ and the charge-current density vector field $J^\alpha$, then $V$ will naturally be written as a direct sum of two or more spaces: $V_{EM} := V_F \oplus V_J$. See Example \ref{Example:Maxwell}.} $\mc D$ is the set of dynamically allowed fields.  
We are mainly interested in theories $\DD$ that consist of the solutions to a system of differential equations with constant coefficients -- for brevity, we say that $\DD$ is \emph{polynomial}, because these field equations depend polynomially on the field components and their derivatives.  

We will allow our differential equations to have complex coefficients. This requires some comment. If we were only interested in classical field theories, it would suffice to consider differential equations with \emph{real} coefficients. By way of example, it is true that the Dirac equation
\be\label{DiracEq}-i\gamma^\mu\partial_\mu\psi + m\psi=0\ee
has complex coefficients; however, by taking real and imaginary parts, we may consider this as a system of two differential equations with real coefficients. As the example also shows, however, it is nevertheless convenient to allow for complex coefficients, of which real coefficients are a special case. More importantly, the use of complex coefficients will be crucial for the study of symmetries in \emph{quantum} field theory.  There  the complex structure of the coefficients can be identified with the complex structure of Hilbert space, but must be sharply distinguished from any complex structure that $V$ may happen to possess (e.g. the way in which a complex scalar field or a Dirac spinor is complex). The latter structure is fundamentally irrelevant to our purposes (cf. Example \ref{complexKG}). 

We now spell out the notion of a polynomial classical field theory more precisely.  
First, let $W=\Hom(V,\CCC)$ be the space of real-linear maps $V\to\CCC$. Given $\Phi\in\KK$, by a \emph{derived component} of $\Phi$ we mean one of the functions
\be\label{component}\Phi^\lambda_{\xi_1\cdots\xi_n}:=\partial_{\xi_1}\cdots\partial_{\xi_n}(\lambda\circ\Phi)\in \Cinf(M,\CCC)\ee
specified by the data of  $\lambda\in W$ and a (possibly empty) list of vectors $\xi_1,\ldots,\xi_n\in M$. A \emph{differential operator} (with constant complex coefficients) is a map $\KK\to\Cinf(M,\CCC)$ that assigns to every $\Phi\in\KK$ a fixed polynomial combination of its derived components -- that is, a finite sum of finite products of them, along with complex scalars. We say that a classical field theory $\DD\subset\KK$ is \emph{polynomial} if there is a set $\Dd$ of differential operators such that
\be\label{polynomial}\Phi\in\DD \iff [D(\Phi)=0\mbox{ for all }D\in\Dd].\ee
The vast majority of classical field theories considered in physics are polynomial in this sense.

\begin{example}\label{Example:Maxwell}
Consider the Maxwell equation usually written (with implicit summation over $\beta$) as 
\be\label{eq:Maxwell}
F^{\alpha \beta}\s_{,\beta} - J^\alpha = 0.
\ee
To illustrate our notation, let $V_F\subset M\otimes M$ be the space of contravariant\footnote{Since the Maxwell-Faraday tensor is the exterior derivative of a one-form,  it is of course most fundamentally a \emph{co}variant anti-symmetric rank two tensor. We ignore this nicety for simplicity of exposition; the background Minkowski metric allows us to raise and lower indices at will.}, skew-symmetric rank-two tensors at a point, and  $V_J=M$ the space of vectors. A pair consisting of a particular Maxwell-Faraday tensor field $F \in C^\infty (M,V_F)$ and a particular charge-current density vector field $J \in C^\infty (M, V_J)$ can then be seen as a single field $\Phi \equiv F \oplus J\in \Cinf(M,V_F \oplus V_J)$. 
With respect to an orthonormal basis $e_0,e_1,e_2,e_3\in M^*$ of covectors, we can rewrite \eqref{eq:Maxwell} as
\be\label{eq:Maxwell2}
\Phi^{e_\alpha e_\beta\oplus 0}_{e^\beta} - \Phi^{0 \oplus e_\alpha} = 0. 
\ee
For each $\alpha\in\{0,1,2,3\}$, the left-hand side of \eqref{eq:Maxwell2} is a differential operator applied to the field $\Phi=F\oplus J$; the set $\Dd$ of these four operators specifies the dynamics of Maxwell field theory, which is therefore a polynomial field theory. 
\end{example}

\begin{example}\label{Sine-Gordon}  Here are two standard examples of non-polynomial field theories.  First, consider the Sine-Gordon equation for a scalar field $\phi$: 
$$\partial_\mu\partial^\mu\phi+\sin\phi=0.$$
Since sine is not a polynomial function, this does not define a polynomial field theory. However, our results could be extended (or applied indirectly) to the Sine-Gordon equation and similar cases in which the field equations involve power series (e.g. the Taylor series of sine) rather than polynomials.

A second type of example is a `non-linear $\sigma$ model,' in which the target space $V$ is not even a vector space, but a manifold.  If $V$ is an algebraic variety, then there is still a notion of `polynomial field theory,' and it should be possible to extend our results in at least some cases. However, in this paper we will only consider the most important case of polynomial field theories with a linear target space.  
 
\end{example}

\newcommand\uPhi{\underline{\Phi}}
\subsection{Formal field theories.}  
We now shift attention from differential operators to the formulae that define them. This abstraction will allow us to treat classical and quantum field theories on the same footing.

A \emph{differential formula} is a polynomial combination of the derived components of a purely symbolic field $\uPhi$. We call these derived components \emph{field symbols}. 
A differential formula $F$ determines a differential operator $D_F$ that assigns to each classical field $\Phi\in\KK$ the same polynomial combination of \emph{its} derived components.
 
Let $\Kf$ be the set of all differential formulae. 
To be quite precise, we understand each field symbol  $\uPhi^\lambda_{\xi_1\cdots\xi_n}$
 as an element $\lambda\otimes(\xi_1\cdots\xi_n)$ of the complex vector space $W\otimes_\RRR TM$, where $TM$ is the tensor algebra of $M$. Then we formally define $\Kf$ to be the \emph{free algebra} $\Kf=\FF(W\otimes_\RRR TM)$ (see Appendix \ref{tensor_algebras}--\ref{free_algebras}).

Our basic objects of study are certain nice sets of differential formulae:

\begin{definition} A \emph{formal field theory} is a complex affine subspace $\Df\subset\Kf$ (see Appendix \ref{app:affine}). 
 \end{definition}

Thus a formal field theory $\Df$ defines a polynomial classical field theory $\DD$ via
\be\label{polynomial2}\Phi\in\DD \iff [D_F(\Phi)=0\mbox{ for all }F\in\Df].\ee
Conversely, given a polynomial classical field theory $\DD$, we obtain a formal field theory $\Df$ as the \emph{largest} collection of differential formulae $F$ satisfying \eqref{polynomial2}. In this case, $\Df$ is actually a complex subspace of $\Kf$.

But this is only one way of interpreting formal field theories. We have so far noted that a single differential formula $F$ determines a dynamical equation $D_F(\Phi)=0$; but we could instead consider $D_F$ as a Lagrangian or Hamiltonian density, from which dynamical equations are to be derived. In this case, we can take $\Df$ to be the set of all differential formulae defining the same density $\II$. This is not a complex subspace of $\Kf$,  since it does not contain zero (unless $\II=0$);  but it is still a complex 
\emph{affine} subspace.

Moving beyond classical field theories, an important feature of our definition is that $\Kf$ is a \emph{non-commutative} algebra. For example, given $\lambda,\mu\in W$, the products $\uPhi^\lambda\uPhi^\mu$ and $\uPhi^\mu\uPhi^\lambda$ are generally different elements of $\Kf$ -- different formulae -- even though 
$\Phi^\lambda\Phi^\mu=\Phi^\mu\Phi^\lambda$ for any $\Phi\in\KK$.
By maintaining this distinction, we leave open the possibility of taking $\uPhi$ to represent a \emph{quantum} field, whose components do not generally commute.\footnote{It is also possible to make sense of non-commutative \emph{classical} fields in various ways -- see our discussion of  supercommutativity in section \ref{SR_spinor}.}

This is exactly what is done in standard approaches to QFT, where the `theory' is specified by a density $\II$, presented, as in the classical case, by a differential formula.   How exactly $\II$ is interpreted may depend on whether we are interested in canonical quantization, path integrals, or other methods; but these questions are largely irrelevant insofar as we can focus not on $\II$ itself, but on the collection $\Df$ of all differential formulae that define it. (We make these comments more precise in section \ref{classical_and_quantum}.)

Thus the formal field theory approach is broadly neutral about \emph{what kind} of field theories we wish to study (classical or quantum?) 
and about \emph{how} we wish to study them. (Lagrangians, Hamiltonians, or dynamical equations? Operator distributions or path integrals?).
Because we are interested in \emph{symmetries} of field theories, the only general requirement is that the theory of interest $\DD$ is specified by a complex affine subspace $\Df\subset\Kf$,  in such a way that symmetries of $\DD$ correspond to some appropriate notion of  symmetries for $\Df$.  Our next task is to explain just what the appropriate notions are.

\subsection{Classical spacetime symmetries} 
\label{ss_classical} First let us consider the situation for classical field theories. A permutation of $\KK$ is a \emph{symmetry} of $\DD$ if it leaves $\DD$ invariant. One typically studies \emph{groups} of symmetries: if a group $G$ acts on $\KK$, we can ask whether $G$ acts by symmetries, i.e. whether $\DD$ is $G$-invariant.
   
We are particularly interested in \emph{spacetime symmetries.} This means that the action $u$ of $G$ on $\KK$ is determined by the data of a representation $(\omega,G,M)$ of $G$ on $M$ and a representation $(\rho, G, V)$ of $G$ on $V$ (cf. \ref{representations} on representations). Namely,
\begin{equation}\label{classical-geom}u(g)\Phi=\rho(g)\circ \Phi\circ \omega(g\inv)\qquad\forall g\in G,\Phi\in\KK.
\end{equation}
We summarize this situation by saying that $G$ \emph{acts geometrically} via, and that $u$ is the \emph{geometric action} corresponding to, $\rho$ and $\omega$.\footnote{This characterisation of spacetime symmetries in terms of geometric actions is general enough to include what are normally called `global internal symmetries' -- these come from geometric actions in which $\omega$ is the trivial representation. \label{fn_internal}}

\begin{example}
The basic example is when $G$ is a subgroup of the Lorentz group (or, later, a covering group of such a subgroup); $G$ then acts naturally on $M$, so to get a geometric action, it remains to specify a representation of $G$ on $V$. For example, in Maxwell theory (Example \ref{Example:Maxwell}), we specified that each summand of $V=V_M\oplus V_J$ was a space of contravariant tensors. 
\end{example}

How can we study spacetime symmetries using formal field theories?
In general, if a group $G$ acts on $\Kf$, we can ask whether a formal field theory $\Df$ is $G$-invariant.  
However, we are only interested in actions $\RbFc$ of $G$ on $\Kf$ that correspond appropriately to a geometric action $u$ of $G$ on $\KK$. The natural matching condition is   \be\label{classicalcorr}
D_{F}(u(g)\inv\Phi)=D_{\RbFc(g)F}(\Phi)\circ\omega(g)\quad\mbox{for all $g\in G,F\in\Kf,\Phi\in\KK.$}\ee
This and \eqref{classical-geom} determine $\RbFc$, which can be described more explicitly as follows.

\begin{definition} \label{f_defn}
Let $(\RbFc,G,\Kf)$ be the unique representation satisfying the following properties (cf. \ref{extending_reps}): 
	\ben
	\item For all $g\in G$, $\lambda \in W$, and $\xi_1\cdots\xi_n\in  M,$
		\begin{equation*}
		\RbFc(g) (\uPhi_{\xi_1\cdots\xi_n}^\lambda) = 
\uPhi_ {\omega(g)\xi_1\cdots\omega(g)\xi_n}^{\lambda\circ\rho(g\inv)}. \label{f_b}
		\end{equation*}
	\item Every  $\RbFc(g)$ is an automorphism of algebras: for all $X_1,X_2 \in \Kf$,
$$\begin{aligned} \RbFc(g)(X_1 X_2) &=  \RbFc(g)(X_1) \cdot \RbFc(g)(X_2)\\
\RbFc(g)(X_1+X_2) &=  \RbFc(g)(X_1) + \RbFc(g)(X_2).\end{aligned}$$
	\een 
We call $\RbFc$ the \emph{classical action} of $G$ on $\Kf$ induced by $\rho$ and $\omega$.
\end{definition}
The most important consequence of this definition is that a classical field theory $\DD$ is $G$-invariant with respect to the geometric action $u$ if and only if the largest corresponding  $\Df$ is $G$-invariant with respect to the classical action $\RbFc$ (or, as we normally say,  if and only if $\Df$ is $\RbFc(G)$-invariant). We can therefore analyse classical spacetime symmetries in terms of the  symmetries of formal field theories. 

\subsection{Quantum field theories and spacetime symmetries.} \label{classical_and_quantum}

The situation for quantum field theories is formally the same: a theory is typically specified by a Lagrangian or Hamiltonian density $\II$, which is given by a differential formula. If we take $\Df$ to be the set of all formulae that define the same $\II$, then the invariance of $\Df$ under appropriate transformations of the field symbols corresponds to the existence of spacetime symmetries of the quantum field theory.  

How exactly to interpret the density $\II$, and how to construct a field theory from it, are questions much more difficult in the quantum than in the classical case. They are, at least heuristically, the subject of standard textbooks on quantum field theory. The theorems in this paper focus on the narrow question of the invariance of formal field theories. In doing so, we will also be silent about spontaneous and anomalous symmetry breaking, which disrupt the inference from symmetries of $\II$ to symmetries of the quantum field theory. 

Nonetheless, we must at least say enough to make clear the quantum analogue of Definition \ref{f_defn}. That definition  was determined by the particular relationship between classical and formal field theories, so we cannot rely upon it here. It turns out that the right `quantum action' $\RbF_q$ differs from the classical action $\RbF$  for those $g\in G$ such that $\omega(g)$ is time-reversing. Here, and henceforth, we  assume that spacetime $M$ is equipped with a temporal orientation that is either preserved or reversed by each $\omega(g)$. Let us give the definition, and then some motivating comments.
\begin{definition} \label{f_defnq}
Let $(\RbFc_q,G,\Kf)$ be the unique representation satisfying the following properties: 
	\ben
	\item For all $g\in G$, $\lambda \in W$, and $\xi_1\cdots\xi_n\in  M,$
		\begin{equation*}
\RbFc_q(g) (\uPhi_{\xi_1\cdots\xi_n}^\lambda) = 
\begin{cases} 
\uPhi_ {\omega(g)\xi_1\cdots\omega(g)\xi_n}^{*\circ \lambda\circ\rho(g\inv)}
& \mbox{if  $\omega(g)$ is time-reversing} \\
\uPhi_ {\omega(g)\xi_1\cdots\omega(g)\xi_n}^{\lambda\circ\rho(g\inv)} 
&\mbox{otherwise.} 
\end{cases}\end{equation*}
(Here $*\colon\CCC\to\CCC$ is complex conjugation.) 
	\item Every  $\RbFc_q(g)$ is an automorphism of algebras: for all $X_1,X_2 \in \Kf$,
$$\begin{aligned} \RbFc_q(g)(X_1 X_2) &=  \RbFc_q(g)(X_1) \cdot \RbFc_q(g)(X_2)\\
 \RbFc_q(g)(X_1+X_2) &=  \RbFc_q(g)(X_1) + \RbFc_q(g)(X_2) .  \end{aligned}$$
	\een 
We call $\RbFc_q$ the \emph{quantum action} of $G$ on $\Kf$ induced by $\rho$ and $\omega$. 
\end{definition}
\noindent
Our general assumption, then, is that for each quantum field theory $\DD$ of interest, there exists a formal field theory $\Df$ such that $\Df$ is $\RbF_q(G)$-invariant if and only if $G$ acts by spacetime symmetries on $\DD$. Our theorems, which are results about formal field theories, will apply to quantum field theories insofar as this assumption holds.

In the remainder of this section, we sketch one story about why this assumption holds, following (and, we hope, clarifying) typical textbook treatments of CPT invariance. In doing so, our aim is solely to provide the reader with a bridge to the literature: we do not claim that the view of quantum field theory offered here is particularly perspicacious, and indeed it is well known that Haag's Theorem severely undermines the `interaction picture' to which we (following the textbooks) eventually appeal.\footnote{See \cite{Earman-Haag} for a discussion.
}
(The reader already happy that $\RbFc_q$ is the appropriate definition can skip to section \ref{ptcpt}.)

What, first of all, is a quantum field theory?  According to the ideal articulated by the Wightman axioms,
\footnote{We omit some features that are unimportant to our present aim. For a complete axiomatization, and a proof of the CPT theorem within this framework, see \cite{swPCT}. } 
a quantum field theory is at heart is a triple $(\Kt, \HHH, Q)$, where $\Kt$ is a space of `test functions' $M\to V^*$, $\HHH$ is a Hilbert space, and the `quantization map' $Q$ associates to each $f\in\Kt$ a Hermitian operator $Q(f)$ on $\HHH$. Let $\Aa$ be the space of all Hermitian operators.  The eponymous `quantum field' is a distribution $\Phi$ on $M$ with values in $\Aa\otimes_\RRR V$. It is defined by the property that $Q(f)$ is the integral of $f$ against $\Phi$, contracting $V$ with $V^*$. As in the classical case, we can speak of the \emph{derived components} of $\Phi$, defined by \eqref{component}; but these components are operator-valued distributions on $M$, rather than functions $M\to\CCC$.

A symmetry of $(\Kt,\HHH,Q)$ is naturally defined to be an automorphism of the data, i.e. a pair of maps $(u\colon\Kt\to\Kt,U\colon \HHH\to\HHH)$ such that
$$
U\circ Q(f)\circ U\inv  = Q(u(f)) \quad\mbox{for all $f\in\Kt$}.
$$
$U$ should also preserve some of the structure of $\HHH$: it should map rays to rays, and preserve transition probabilities. According to a theorem of Wigner, this means that 
$U$ is either complex-linear and unitary or else anti-linear and anti-unitary.\footnote{See the Appendix A to chapter 2 in \cite{weiQFT1}.}

As in the classical case, however, our interest is not in arbitrary `symmetries' in this minimal sense, but in those corresponding in a certain way to underlying actions of the same group on $V$ and on $M$. We again start from the notion of a geometric action of $G$ on $\KK$, as defined in section \ref{ss_classical}. In the quantum case, the issue is whether a given geometric action\footnote{Note that elements of $\Kt$ are classical fields with values in $V^*$ rather than $V$ (heuristically, elements of $\Kt$ are classical observables rather than classical fields). But if $G$ acts geometrically on $\KK$ via $\omega$ and $\rho$, then it also acts geometrically on $\Kt$ via $\omega$ and the dual representation $\rho^*$. }
$u$ of $G$ on $\Kt$ extends to an action of $G$ by symmetries of $(\Kt,\HHH,Q)$: that is, whether for each $g\in G$ there exists a transformation $U(g)$ of $\HHH$ such that
\be\label{quantumst} 
U(g)\circ Q(f)\circ U(g)\inv = Q(u(g)f)
\quad\mbox{for all $f\in\Kt$.}\ee
If so, we can say that $G$ acts by spacetime symmetries on the quantum field theory. Remember that, in principle, each $U(g)$ is allowed to be either complex-linear or anti-linear. However -- and here is the key point --  the requirement of a positive energy spectrum  entails that 
\be\label{al}
\mbox{$U(g)$ \emph{is anti-linear if and only if $\omega(g)$ reverses the direction of time}}\ee
(cf. \cite{weiQFT1}, ch. 2.6). This rule and \eqref{quantumst} completely determine how the derived components of $\Phi$ transform when conjugated by $U(g)$:  
\begin{equation}\label{symfields}
U(g)\circ \Phi_{\xi_1\cdots\xi_n}^\lambda(x)\circ U(g)\inv = 
\begin{cases} 
\Phi_ {\omega(g)\xi_1\cdots\omega(g)\xi_n}^{*\circ \lambda\circ\rho(g\inv)}(\omega(g)x)
& \mbox{if  $\omega(g)$ is time-reversing} \\
\Phi_ {\omega(g)\xi_1\cdots\omega(g)\xi_n}^{\lambda\circ\rho(g\inv)}(\omega(g)x) 
&\mbox{otherwise.} 
\end{cases}\end{equation}
Conversely, if \eqref{symfields} holds for all $g\in G$ and all derived components, then $U$ defines an action of $G$ by spacetime symmetries.  This establishes the salience of Definition \ref{f_defnq}:   suppose that $\II_F(x)$ is a polynomial in the derived components\footnote{There is no simple way to make sense of a `polynomial combination of the derived components' because of the distributional nature of the quantum field. Some regularization must be used. For example, when (as below) the field in question is the free `interaction picture' field, $\II_F$ can be defined by a
 \emph{normal ordered} polynomial.} 
 of $\Phi$, as specified by a differential formula $F\in\Kf$.
Then $\RbFc_q$ is the unique representation of $G$ on $\Kf$ such that 
\begin{equation}\label{sympoly}
U(g)\circ \II_{F}(x)\circ U(g)\inv
=\II_{\RbFc_q(g)F}(\omega(g)x).\end{equation}
This is the quantum analogue of \eqref{classicalcorr}.

Formulas \eqref{quantumst}--\eqref{sympoly} explain what it means for $U(g)$ to be a spacetime symmetry of a given quantum field theory, corresponding to a given geometric action $u$ of $G$.  But nothing we have said so far establishes whether such a symmetry $U(g)$ \emph{exists}.  
In order for our results concerning formal field theories to be relevant to
quantum field theories, we need this existence condition to be equivalent to the $\RbF_q(g)$-invariance of a formal field theory. To establish that it is so
equivalent, textbooks typically turn to the `interaction picture.'
One starts from a well-understood free (`interaction picture') quantum field theory. One constructs the interacting (`Heisenberg picture') theory using an `interaction Hamiltonian density' $\II$, a normal-ordered polynomial in the derived components of the free field. The construction is such that 
if $G$ acts by spacetime symmetries on the free theory, and the density $\II$ transforms as a scalar
\begin{equation}\label{symH}U(g)\circ\II(x)\circ U(g)\inv = \II(\omega(g)x),\end{equation}
then $G$ also acts by spacetime symmetries on the interacting theory. 
 However, given \eqref{sympoly}, \eqref{symH} is equivalent to the $\RbF_q(g)$-invariance of the set $\Df$ of all differential formulae defining $\II$. Thus, modulo the relatively straightforward study of free quantum field theories, the existence of quantum spacetime symmetries can be deduced from the invariance of this formal field theory. 
We will consider the case of CPT symmetries of free theories in section \ref{freesym2}. Of course, other (perhaps more satisfactory) ways of understanding interacting theories may not require any reduction to the free case.

\begin{remnum} In defining $\Df$ we were vague about \emph{which} formulae define the same density $\II$. This will be determined by the way in which the field components commute with one another, and hence relies on the spin-statistics connection.  A secondary consideration is that one may wish to consider Lagrangian densities to be `the same' if they differ only by a total derivative. 
\end{remnum}

\begin{remnum}  
Classical and quantum symmetries are closely related, even when they reverse time. Our perspective in sections \ref{SR_tensor} and \ref{SR_spinor} is that (C)PT theorems for classical and quantum field theories are immediate corollaries of the same more general result -- \emph{strong reflection invariance}. 

\end{remnum}

\section{PT, CPT, and Charge Conjugation}\label{ptcpt}

With our general framework in hand,  we can turn to the main focus of this paper: PT and CPT symmetries. 
Our characterisation of PT and CPT transformations does not presuppose the existence of transformations that separately reverse C, P, or T. However, to round out the picture, we also develop the notion of a \emph{charge conjugation} that relates PT to CPT.

In this section, we focus on Minkowski space $M$ of dimension at least 2, although most of what we say generalises to other spacetimes. Thus $M$ is equipped with an inner product $\eta$ of signature $(-+\cdots +)$ or $(+-\cdots-)$.

\subsection*{PT vs. CPT}
The Lorentz group $L$ consists of all linear isometries of $M$: 
$$L=\{g\in\GL(M)\,\mid\,\eta(gv,gw)=\eta(v,w) \mbox{ for all } v,w\in M\}.$$
$L$ has four connected components: the proper orthochronous Lorentz group $\Lpo$ (those transformations, including the identity, that preserve both spatial parity $P$ and time sense $T$),  the improper orthochronous component $\Lio$ (reversing $P$ only), the improper nonorthochronous component $\Lia$ (reversing $T$ only), and the proper nonorthochronous component $\Lpa$ (reversing both P and T).  

Both PT and CPT symmetries are spacetime symmetries corresponding to \emph{proper nonorthochronous} transformations of $M$, that is, to elements of $\Lpa$.  The nomenclature comes from the particle phenomenology of quantum field theory:  CPT transformations exchange particles and anti-particles (thus also reversing the \emph{charge}, C), while PT transformations do not.\footnote{In the literature on  CPT symmetry in four dimensions, it is very common to focus on the single element of $\Lpa$ given by `total reflection' $x\mapsto -x$. But note that
in odd spacetime dimensions, total reflection lies in $\Lia$ rather than $\Lpa$, and so has nothing to do with CPT.}

Although there are no particles in our framework,  we can nonetheless draw the appropriate formal distinction between PT and CPT.
To do this we need an additional datum:  a decomposition $W=W^{+}\oplus W^0\oplus W^{-}$ into complex subspaces, such that complex conjugation $\lambda\mapsto\lambda^*=*\circ\lambda$ interchanges $W^+$ and $W^-$ and fixes $W^0$.\footnote{Recall that $W:= \Hom (V, \CCC)$, where $V$ is the fields' target space.}  We call $W^{+}\oplus W^{0}$ the \emph{particle sector} and $W^{-1}\oplus W^0$ the \emph{anti-particle sector}. Thus $W^0$ corresponds to `neutral particles that are their own anti-particles.'

\begin{remnum}\label{decomposing} In practice this decomposition arises in the following way.
Suppose first that $V$ is given as a complex vector space. Then $W$ splits as $W=W^+\oplus W^{-}$, where $W^+$ consists of complex-linear maps, and $W^{-}$ of anti-linear maps. Second, if  $V$ is merely real, we define $W^0=W$. In general, $V$ is given as the direct sum of a complex and a merely real vector space, and therefore $W=W^{+}\oplus W^{-}\oplus W^0$. For motivation and a more detailed version of much the same story, see \cite{walAnt}. The question of whether $V$ counts as complex or merely real is tied to the existence of internal $U(1)$ symmetries.
\end{remnum}

\begin{definition}                            \label{def:conjugating}
We say that a real-linear map $\sigma\colon W\to W$ is \emph{charge-preserving} if $\sigma(W^\epsilon)=W^{\epsilon}$, and \emph{charge-conjugating} if 
$\sigma(W^\epsilon)=W^{-\epsilon}$, for every $\epsilon\in\{+,0,-\}$.  
\end{definition}

Note that if $W=W^0$,  then $\sigma$ may count as both charge-preserving and charge-conjugating, and, in general, $\sigma$ may be \emph{neither}.

Now suppose that $G$ acts geometrically.  Let $\sigma$ denote either the quantum action $\sigma=\RbFc_q$ or the classical action $\sigma=\RbFc$ of $G$ on $\Kf$. Either way, each $\sigma(g)$ preserves $W\subset\Kf$, and so may be charge-preserving or charge-conjugating (or neither).

\begin{definition}                            \label{def:PTvsCPT}
For any $g\in G$ with $\omega(g)\in\Lpa$,  $\sigma(g)$ is a \emph{PT transformation} if it is charge-preserving and it is a \emph{CPT transformation} if it is charge-conjugating.
\end{definition}

\begin{rem}
It is somewhat arbitrary how (and indeed whether) we choose to extend the PT/CPT distinction from quantum to classical field theories, since the state space of a \emph{classical} field theory does not decompose into particle and anti-particle sectors. 
One fairly natural stipulation would be that $\RbFc(g)$ is a CPT transformation if and only if $\RbFc_q(g)$ is a CPT transformation.  We choose instead to insist on Definition \ref{def:PTvsCPT}, which turns out to have the opposite effect: by our convention, 
$\RbFc(g)$ is a CPT transformation if and only if $\RbFc_q(g)$ is a PT transformation.  However, nothing beyond terminological convenience hangs on this choice. 
\end{rem}

\subsection*{Charge conjugation} We now define a general form of automorphism that will play a key role in the interpretation of our general theorems (i.e. Theorems \ref{thm_cpt_tensor} and \ref{thm_cpt_spinor}), and of which charge conjugation in the usual sense is a special case.

For the general construction, let $\$$ be an \emph{involution} of $W=\Hom(V,\CCC)$,  that is, a real-linear map $\$\colon W \rightarrow W$ such that $\$ \circ \$ = \id$. 
Define 
\be\label{Cact}C_\$(\uPhi^\lambda_{\xi_1\cdots\xi_n})= \uPhi^{\$(\lambda)}_{\xi_1\cdots\xi_n}\ee
and extend this to an automorphism of $\Kf$ by the rules
$$C_\$(XY)=C_\$(X)C_\$(Y)\qquad
C_\$(X+Y)=C_\$(X)+C_\$(Y)$$
for all $X,Y\in\Kf$. 
Assuming that $\$$ is either complex-linear ($\$(i \lambda)= i \$(\lambda)$) or anti-linear ($\$(i \lambda)= - i\$( \lambda)$),  this defines a unique complex-linear or anti-linear automorphism of $\Kf$, which we call \emph{$\$$-conjugation}.  There are two main cases of interest.

First, by an \emph{internal charge conjugation} we mean an involution $\#\colon V\to V$  such that $\lambda\mapsto \#(\lambda):=\lambda\circ\#$ is a charge-conjugating transformation of $W$. $C_\#$ is the type of `charge conjugation' standard in QFT. We claim that if $\sigma$ is a classical or quantum PT transformation, then $C_\#\circ\sigma$ is a similarly classical or quantum CPT transformation.  Indeed,  we can use $\#$ to define a new representation\footnote{Strictly speaking, for $\rho_\#$ to be a representation, we must assume that $\#$ commutes with every $\rho(g)$.} $\rho_\#$ of $G$ on $V$, by 
$$\rho_\#(g)=\begin{cases} \#\circ\rho(g) & \mbox{if $\omega(g)$ reverses time}\\
 \phantom{\#\circ{}}\rho(g) & \mbox{if $\omega(g)$ preserves it.}
\end{cases}
$$
Then, for example, if $\omega(g)$ is time-reversing, 
$C_\#\circ\RbFc(g)$ is just the classical action $\RhbFc(g)$, and it is clear that if $\RbFc(g)$ is charge-preserving then $\RhbFc(g)$ is charge-conjugating, and vice versa. 

It is also interesting to consider $\$=*$, i.e. $\$(\lambda)(v)=\lambda(v)^*$. Then the quantum and classical actions of a group $G$ are related by $C_*$:
$$
\RbFc_q(g)=\begin{cases} C_*\circ \RbFc(g) & \mbox{if $\omega(g)$ reverses time} \\
\phantom{C_*\circ{} } \RbFc(g) & \mbox{if $\omega(g)$ preserves it.}
\end{cases}
$$
Moreover, $C_*$ is always charge-conjugating.  Therefore $C_*$ relates classical PT to quantum CPT,  and classical CPT to quantum PT.

\begin{example} \label{complexKG}
Consider a theory of a `complex scalar field.' This means that the target space $V$ is $\RRR^2$, and that $L$ acts geometrically via the usual action $\omega$ on $M$ and the \emph{trivial} action $\rho$ on $V$. 
Define $\lambda\colon V\to\CCC$ by $\lambda(x,y)=x+iy$.  We divide $W$ into particle and anti-particle sectors by setting $W^+=\CCC\lambda,$ $W^-=\CCC\lambda^*$, and $W^0=0$.  Then we have an internal charge conjugation defined by $\#(x,y)=(x,-y)$.  Indeed $C_\#(\uPhi^\lambda)=\uPhi^{\lambda^*}$.
For any $g\in\Lpa$, one has $\RbF(g)(\uPhi^\lambda)=\uPhi^\lambda$, so in this case $\RbF(g)$ is charge-preserving, so a (classical) PT transformation. Thus $F=i\uPhi^\lambda=\uPhi^{i\lambda}\in\Kf$ transforms as  
$$\begin{aligned}
\RbF(g)(i\uPhi^\lambda)&=i\uPhi^{\lambda\phantom{^*}} 		\\
\phantom{C_\#\circ{}}\RbF_q(g)(i\uPhi^\lambda) & =-i\uPhi^{\lambda^*}	\\ 
C_\#\circ\RbF(g)(i\uPhi^{\lambda})&=i\uPhi^{\lambda^*}		\\
C_\#\circ \RbF_q(g)(i\uPhi^\lambda)&=-i\uPhi^{\lambda}
\end{aligned}$$
under classical PT, quantum CPT, classical CPT, and quantum PT respectively.

One usually says that $V=\CCC$ and that $\#$ is complex conjugation. This is convenient and harmless as long as one carefully distinguishes between the complex structure of $V$ and the complex structure of $W$ and $\Kf$. (This corresponds in QFT to the distinction between the way that fields can be complex and the way that Hilbert space is complex.) For example,
 $C_*$ and $C_\#$ are not equal, even though they are both `complex-conjugation.' Indeed, $C_\#$, as usual for charge-conjugation in QFT, is complex-linear on $\Kf$, while $C_*$ is anti-linear.\end{example}

\begin{rem} We can use an internal charge conjugation $\#\colon V\to V$ to define a geometric action of the group $\ZZZ_2=\{\pm1\}$, acting trivially on $M$. $C_\#$ is just the corresponding classical or quantum action of $\ZZZ_2$ on $\Kf$ (it makes sense in both contexts).  Thus, in our language, charge conjugation can count as a `spacetime symmetry' (cf. footnote \ref{fn_internal}). These comments do not apply to $C_*$, since it does not come from a transformation of $V$.
\end{rem}

\section{PT and CPT Theorems: An Overview}
\label{overview}

We now give an overview of our main results. From now on we assume that $M$ is Minkowski space of dimension at least $3$. We will consider the two-dimensional case and other possible generalisations in section \ref{axioms}.

\subsection{A Classical PT Theorem for Tensors.} \label{geomPTpreview} Initially we are interested in geometric actions of the proper Lorentz group $\Lp=\Lpo\cup\Lpa$. 
Such field theories are called \emph{tensorial}, in contrast to \emph{spinorial} theories in which the Lorentz group is replaced by a covering group.  
When speaking of geometric actions of $\Lp$, we assume in this section  that the action of $\Lp$ on $M$ is the standard one; in terms of differential operators, this means that partial derivatives transform as expected.

Our first result (section \ref{pt_tensor}) has the following form:
\begin{softqt}
\textbf{Classical PT Theorem for Tensors.}
Every geometric action of $\Lpo$ extends, in a certain way, to a geometric action of $\Lp$, such that, with respect to the corresponding \emph{classical} actions on $\Kf$:
\ben\item[(1)] every $\Lpo$-invariant formal field theory is $\Lp$-invariant; 
\item[(2)] if  $\Lpo$ is charge-preserving, then so is $\Lpa$.   
\een
\end{softqt}
In short, the theorem predicts the existence of classical PT symmetries for any $\Lpo$-invariant formal field theory. 
It is obviously \emph{not} true that the invariance predicted in part (1) holds for an arbitrary geometric action of $\Lp$. Rather, our claim is that there exists a \emph{specific} universal way to extend 
geometric actions from $\Lpo$ to $\Lp$,  relative to which $\Lpo$-invariance implies $\Lp$-invariance.

\begin{example}\label{MaxwellPT}
 For the case of Maxwell's equations (Example \ref{Example:Maxwell}), we can observe that (a) the theory is invariant under $\Lpo$ \emph{and} $\Lpa$, if we stipulate that $F$ transforms as a contravariant rank-two tensor, and $J$ as a vector; (b)  the theory is invariant under $\Lpo$ \emph{but not} $\Lpa$ if we stipulate that $F$ transforms as a tensor and $J$ as a pseudo-vector (so that under a total reflection $r\colon x\mapsto -x$ of spacetime, we have $F \mapsto F\circ r$ and $J \mapsto J\circ r$). 
\end{example}

\subsection{A Quantum CPT Theorem for Tensors}\label{cpt_preview}
In section \ref{SR_tensor} we use the above classical PT theorem to derive a result that we call \emph{strong reflection invariance} (see \ref{SRpreview} below). This implies a quantum CPT theorem of the following form.
\begin{softqt}
\textbf{Quantum CPT Theorem for Tensors.}
Every geometric action of $\Lpo$ extends, in the same way as before, to a geometric action of $\Lp$.  With respect to the corresponding \emph{quantum} actions on $\Kf$:
\ben
\item[(1)] every $\Lpo$-invariant formal field theory (satisfying some conditions) is  $\Lp$-invariant; 
\item[(2)] if $\Lpo$ is charge-preserving, then $\Lpa$ is charge-conjugating.
\een
\end{softqt}
The extra conditions in (1) are that the formal field theory is \emph{Hermitian} and \emph{commutative}: the latter amounts to half of the spin-statistics connection, that tensor fields commute (see footnote \ref{spinstatfn} for discussion). Note that these conditions are irrelevant to the preceding classical PT theorem.
 
\subsection{A Quantum CPT Theorem for Spinors.} \label{scpt_preview}
If we were convinced that fields in all theories of interest to physics took values in true representations of $\Lpo$, the above results would suffice to establish the generality of classical PT and quantum CPT invariance.  However, this is not the case: in many examples, the fields take values in \emph{projective} representations of the Lorentz group.%
\footnote{One standard motivation for considering projective representations is that `physical states' in quantum theory correspond to rays, rather than vectors, in a Hilbert space $\HH$. Thus the action of the Lorentz group on the state space amounts to a  projective representation on $\HH$. Such a representation can be constructed by quantizing a classical field theory with values in a finite-dimensional projective representation, of the type we consider here. 

However, it also makes perfect sense to consider classical fields that transform under covering groups of $\Lp$, with or without the quantum-mechanical motivation. Indeed, such spinor fields play an important role in some approaches to general relativity.
} 
We call such field theories `spinorial.' They include the earlier `tensorial' theories as a special case.

Projective representations of $\Lpo$ are the same as \emph{true} representations of a double covering group $\Lpot$ of $\Lpo$.%
\footnote{$\Lpot$ is the universal covering group of $\Lpo$ (section \ref{covering_groups}), except when $\dim M=3$ (see Remark \ref{dim3}). For an arbitrary connected Lie group in place of $\Lpo$, projective representations may not correspond to representations of a covering group -- one must also allow for central extensions of the Lie algebra. See \cite[\S2.7]{weiQFT1}.
}
Thus the assumption of a spinorial (C)PT theorem is invariance under $\Lpot$, and the conclusion should be invariance under, not $\Lp$ itself, but a covering group of $\Lp$ containing $\Lpot$.
We investigate such covering groups in 
section \ref{covers}. It turns out (section \ref{holomorphic}) that the classical PT theorem fails to generalise naively to spinors, and yet strong reflection invariance \emph{does} generalise (section \ref{SR_spinor}). This yields (\textit{inter alia}) a theorem  of the following form.

\begin{softqt}
\textbf{Quantum CPT Theorem for Spinors.}
 There exists a covering group $\Lpt=\Lpot\cup\Lpat$ of $\Lp$,  such that
every geometric action of $\Lpot$ extends, in a certain way, to a geometric action of $\Lpt$. With respect to the corresponding \emph{quantum} actions on $\Kf$: 
\ben
\item[(1)] every $\Lpot$-invariant formal field theory (satisfying some conditions) is $\Lpt$-invariant;
\item[(2)] if $\Lpot$ is charge-preserving, then $\Lpat$ is charge-conjugating.  
\een
\end{softqt}
The conditions required in (1) are that the theory is \emph{Hermitian} and \emph{supercommutative}. The latter is a version of the full spin-statistics connection. 
\subsection{Strong Reflection Invariance}\label{SRpreview}

Our exposition of the quantum CPT theorems in both sections \ref{SR_tensor} and \ref{SR_spinor} proceeds by first establishing a more general invariance theorem, which predicts invariance under what the we call \emph{strong reflections}.\footnote{The idea of strong reflections is prevalent in the early CPT literature (see the discussion in Pauli \cite{pauCPT}, who attributes it to Schwinger). Some authors (e.g. \cite{ticQFT}) argue that strong reflection invariance is just what one should mean by `CPT invariance.' We are not convinced by these arguments, but since we prove both strong reflection invariance and CPT invariance, there is room to disagree.}
A strong reflection is a transformation of $\Kf$ defined by applying a classical PT transform to the field symbols, while reversing the order of products.  
Strong reflection invariance depends on $\Lpo$-invariance and spin-statistics; unlike the CPT theorems, it does not require any Hermiticity assumption. On the other hand, strong reflections cannot be directly interpreted as spacetime symmetries.%

Strong reflection invariance easily implies the quantum CPT and classical PT theorems, as well as quantum PT and classical CPT theorems (with restrictive premisses).  This justifies our earlier remark that classical and quantum invariance theorems are `instances of the same more general result.'

\section{The Classical PT Theorem for Tensor Fields}
\label{pt_tensor}

We now explain in detail the Classical PT Theorem of section \ref{geomPTpreview}.   

\subsection*{Extending Representations} 
We must first show how to extend any geometric action of $\Lpo$ to a geometric action of $\Lp$. This means that, given a representation $(\R,\Lpo,V)$ of $\Lpo$, we must extend it to a representation $(\RP,\Lp,V)$ of all of $\Lp$ on the same space $V$. We do this in such a way that if $\omega$ is the standard representation of $\Lpo$ on $M$, then (letting $V=\omega$ in our construction) $\omega'$ is also the standard representation of $\Lp$ on $M$.  
We proceed in three steps.

\subsection*{Step 1: \it Complexification} Recall (\ref{complexify_vector_spaces}--\ref{complexifying_representations}) 
that any connected Lie group $G$ has a \emph{complexification} $G^\CCC$, which is a complex Lie group, and any representation $(\R,G,V)$ extends canonically to a holomorphic representation $\RC$ of $G^\CCC$ on  $V^\CCC=\CCC\otimes V.$ We can apply this to the case $G=\Lpo$ to obtain a representation $(\RC, \Lpoc, V^\CCC)$.
Thus, to make explicit our requirements,  we have used
\begin{quote}
\textbf{(PT-1)} $\Lpo$ is connected.\end{quote}

\subsection*{Step 2: \it Restriction to $\Lp$} 
Now we want to restrict from a representation of $\Lpoc$ to a representation of $\Lp$. This uses:
\begin{quote}\textbf{(PT-2)} $\Lp$ is a subgroup of the complexification $\Lpoc$ of $\Lpo$.\end{quote}
To prove (PT-2), we identify $\Lpoc$ with something familiar: the \emph{proper complex} Lorentz group. Recall the definition. 
\emph{Complex Minkowski space} is the complex vector space $M^\CCC=\CCC\otimes M$. The inner product $\eta$ on $M$ extends by complex-linearity to a complex-valued inner product $\eta^\CCC$  on $M^\CCC$:
\be\label{etac}
\eta^\CCC(a+bi,c+di):=\eta(a,c)-\eta(b,d)+i[\eta(a,d)+\eta(b,c)].\ee
The \emph{complex Lorentz group} $L(\CCC)\subset\GL_\CCC(M^\CCC)$ consists of those complex-linear maps preserving $\eta^\CCC$:
$$L(\CCC)=\{g\in \GL_\CCC(M^\CCC) \,\mid\,\eta^\CCC(gv,gw)=\eta^\CCC(v,w) \mbox{ for all } v,w\in M^\CCC\}.$$
The \emph{proper} complex Lorentz group $\Lp(\CCC)$ is the identity component of $ L(\CCC)$; it consists of those elements with determinant $+1$.
In particular, $\Lp(\CCC)$ contains $\Lp$ as a subgroup, but it is \emph{connected}, unlike $\Lp$, which has two components. Here is a precise restatement of (PT-2).

\begin{lemma} \label{complexlorentz}\label{feature1} The inclusion $\Lpo\to\Lp(\CCC)$ identifies $\Lp(\CCC)$ with the complexification $\Lpoc$. In particular, $\Lp$ is a subgroup of $\Lpoc$.
\end{lemma}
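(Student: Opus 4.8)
The plan is to identify the complexification $\Lpoc$ with $\Lp(\CCC)$ directly, using the universal property of complexification together with the known fact that $\Lpoc$ and $\Lp(\CCC)$ have the same Lie algebra. First I would recall that the complexification $G^\CCC$ of a connected Lie group $G$ is characterized (up to isomorphism) by the universal property that $G$ embeds in $G^\CCC$ and every Lie-group homomorphism from $G$ to a complex Lie group factors uniquely through $G^\CCC$; equivalently, and more usefully here, $G^\CCC$ is the (unique) connected complex Lie group whose Lie algebra is the complexification $\gggg^\CCC = \CCC\otimes_\RRR\Lie(G)$ and which contains $G$ as a real form generating it. So the task reduces to two checks: (a) $\Lp(\CCC)$ is a connected complex Lie group containing $\Lpo$, and (b) $\Lie(\Lp(\CCC)) = \CCC\otimes_\RRR\Lie(\Lpo)$, i.e. the Lie algebra of the complex Lorentz group is the complexification of the Lie algebra of the real Lorentz group.

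For (a): connectedness of $\Lp(\CCC)$ holds by definition, since we defined $\Lp(\CCC)$ to be the identity component of $L(\CCC)$. That $\Lpo\subset\Lp(\CCC)$ follows because any real isometry of $(M,\eta)$ extends $\CCC$-linearly to a $\CCC$-isometry of $(M^\CCC,\eta^\CCC)$ by \eqref{etac}, giving an inclusion $\Lpo\hookrightarrow L(\CCC)$; since $\Lpo$ is connected and contains the identity, its image lands in the identity component $\Lp(\CCC)$. For (b): both Lie algebras are computed as the space of infinitesimal isometries, i.e. $\{X : \eta(Xv,w)+\eta(v,Xw)=0\}$ in the real case and the analogous condition with $\eta^\CCC$ in the complex case; since $\eta^\CCC$ is the $\CCC$-bilinear extension of $\eta$, the complex condition is exactly the complexification of the real one, so $\Lie(\Lp(\CCC)) = \CCC\otimes_\RRR\sooo(M,\eta) = \CCC\otimes_\RRR\Lie(\Lpo)$. (I would cite Appendix \ref{maths} for the relevant facts about complexification; in particular that $\Lpo$, being a connected real Lie group, has a complexification, and that a connected complex Lie group with Lie algebra $\gggg^\CCC$ containing a real form isomorphic to $\Lpo$ as a generating subgroup must be $\Lpoc$ — modulo issues of which covering one lands on, see below.)

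One subtlety I would flag and dispatch: complexification a priori produces \emph{some} connected complex group with the right Lie algebra, but there could be several (differing by the center / covering), so I must pin down that it is $\Lp(\CCC)$ on the nose and not, say, a cover of it. The cleanest route is to invoke the universal property: the inclusion $\Lpo\hookrightarrow\Lp(\CCC)$ into a complex Lie group induces a holomorphic homomorphism $\Lpoc\to\Lp(\CCC)$; by the Lie-algebra computation in (b) this is a local isomorphism, hence a covering map onto $\Lp(\CCC)$ (using connectedness of $\Lp(\CCC)$); and it is injective because... — here is where the main work lies — one checks it is in fact an isomorphism, e.g. by a dimension/center count, or by exhibiting $\Lpoc$ concretely. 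I expect \textbf{this identification step — showing the canonical map $\Lpoc\to\Lp(\CCC)$ is an isomorphism rather than merely a covering} — to be the main obstacle, since it is the one place where "same Lie algebra" is not automatically enough; in low dimensions one can verify it by hand (e.g. $\Lpoc[\dim M=4]\cong(\SL(2,\CCC)\times\SL(2,\CCC))/\{\pm1\}$ maps isomorphically onto $\Lp(\CCC)$), but for a dimension-independent argument I would appeal to the general structure theory cited in the appendix, namely that for $G$ with a faithful finite-dimensional representation, $G^\CCC$ is realized as the complex points of the corresponding algebraic group — which here is precisely $\Lp(\CCC)$. Once the isomorphism is in hand, the final sentence "$\Lp$ is a subgroup of $\Lpoc$" is immediate, since $\Lp\subset L(\CCC)$ and $\Lp$ meets both components of $\Lp$... — more precisely, $\Lp = \Lpo\cup\Lpa$ with $\Lpa$ obtained from $\Lpo$ by composing with a fixed proper non-orthochronous real isometry, which still has determinant $+1$ and still preserves $\eta^\CCC$, hence lies in $L(\CCC)$; and it lies in the \emph{identity component} $\Lp(\CCC)$ because $\Lp(\CCC)$ is connected and $\Lp(\CCC) = L(\CCC)\cap\{\det = 1\}\supset\Lp$. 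This last point — that in the complex group the two real components $\Lpo$ and $\Lpa$ are connected to each other by a path, which is the geometric heart of the whole CPT argument — deserves an explicit remark even though it follows formally from connectedness of $\Lp(\CCC)$.
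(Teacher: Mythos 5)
Your proposal correctly reduces the lemma to the statement that the canonical holomorphic map $j\Hol\colon\Lpoc\to\Lp(\CCC)$ (induced by the inclusion $j\colon\Lpo\hookrightarrow\Lp(\CCC)$) is an isomorphism, and you rightly observe that the Lie-algebra computation only shows $j\Hol$ is a covering map, leaving open whether it is a nontrivial cover. You correctly flag this as the crux. But then you don't actually close it: you gesture at ``a dimension/center count,'' ``exhibiting $\Lpoc$ concretely,'' and finally ``general structure theory, namely that for $G$ with a faithful finite-dimensional representation, $G^\CCC$ is realized as the complex points of the corresponding algebraic group.'' None of these is carried out, and the last, as stated, is false --- the paper's own Appendix \ref{complexify_Lie_groups} gives the counterexample $G=\GL_+(V)$ with $\dim V=1$, where $G^\CCC$ is the additive group $\CCC$ but the complex points of the algebraic group are $\CCC^\times$. (The claim could be salvaged by invoking semisimplicity of $\Lpo$ in dimension $\geq 3$, but you don't do so, and it would require real algebraic group theory the paper deliberately avoids.)

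The paper closes the gap with a short, self-contained topological argument that you should compare against. The two covering maps
$\Lp(\CCC)^\wedge \overset{\pi}{\to} \Lpoc \overset{j\Hol}{\to} \Lp(\CCC)$
compose to the universal covering map of $\Lp(\CCC)$, which is two-to-one (this is where $\dim M\geq 3$ enters, as you implicitly noted). Hence exactly one of $\pi$, $j\Hol$ is trivial. If $\pi$ were trivial, then the inclusion $i\colon\Lpo\to\Lpoc$ provided by the universal property of complexification would compose with $\pi\inv$ to give a section of the universal cover over $\Lpo$; but the preimage of $\Lpo$ in $\Lp(\CCC)^\wedge$ is a nontrivial double cover, so no such section exists. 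Therefore $j\Hol$ is the trivial covering, i.e. an isomorphism. This is exactly the kind of ``dimension-independent argument'' you were hoping for, and it uses nothing beyond the two topological facts about $\Lp(\CCC)^\wedge$ that the paper later needs anyway for the spinorial case.
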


For the proof of this and other intermediary results, see Appendix \ref{proofs}.

\subsection*{Step 3: \it Restriction to $V$} 
By (PT-2), we can restrict $(\RC,\Lpoc,V^\CCC)$ to a representation of $\Lp$ on $V^\CCC$; but what we want is a  representation of $\Lp$ on $V$. Fortunately, we have the following lemma.
\begin{lemma} \label{lem2}
The transformations $\RC(\Lp)$ preserve $V \subset V^\CCC.$
\end{lemma}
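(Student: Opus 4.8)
The plan is to exploit the fact that $\Lp$ has exactly two connected components, $\Lpo$ and $\Lpa$, together with the observation that the conclusion is trivially true on the identity component. Indeed, for $g \in \Lpo$ the operator $\RC(g)$ is by construction the canonical extension of $\R(g)$, which acts on $V$; more precisely, under the identification $V^\CCC = \CCC \otimes V$, the subspace $V = 1 \otimes V$ is preserved by $\RC(\Lpo)$ because $\RC$ restricts to $\R$ on $\Lpo$. So the entire content of the lemma is the claim that $\RC(r)$ preserves $V$ for a \emph{single} chosen element $r \in \Lpa$ (equivalently, any one such element), since every element of $\Lpa$ can be written as $rg$ with $g \in \Lpo$, and $\RC(rg) = \RC(r)\RC(g)$ preserves $V$ as soon as $\RC(r)$ does.

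First I would fix a convenient total reflection or other explicit element $r \in \Lpa$ — for instance, in an orthonormal basis, the map reversing the time axis and one spatial axis (this lies in $\Lpa$ in every dimension $\geq 2$, unlike $x \mapsto -x$). Then I would try to realize $r$ inside $\Lpoc = \Lp(\CCC)$ as a product of elements whose action on $V^\CCC$ is manifestly real, i.e. preserves $V$. The natural candidates are complex `rotations': for a pair of orthonormal vectors $e_0, e_1$ with $\eta(e_0,e_0) = -\eta(e_1,e_1)$, the one-parameter subgroup of boosts in the $e_0 e_1$-plane, continued to complex parameter, passes through an element that acts as $-1$ on $\span(e_0, e_1)$ and as the identity elsewhere. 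This element is precisely $r$ (up to relabeling), and it is reached by analytically continuing a path $t \mapsto \exp(t X)$ inside $\Lpoc$, where $X \in \Lie(\Lpo)$ generates the boost.

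The key step — and the one I expect to be the crux — is showing that $\RC(\exp(zX))$ preserves $V$ for the specific complex value $z = i\pi$ (or whatever value yields $r$), \emph{not} for generic $z$. For generic $z$ it certainly does not: that is the whole point of complexification. The mechanism should be that $\RC(\exp(zX)) = \exp(z\, d\R(X))$ acting on $V^\CCC$, where $d\R(X)$ is the (real!) infinitesimal generator on $V$; and $\exp(i\pi\, d\R(X))$ preserves $V$ provided $d\R(X)$, though real, has the property that $i\pi \, d\R(X)$ exponentiates back to a real operator. Equivalently, one wants the eigenvalues of $d\R(X)$ to be such that $e^{i\pi \mu}$ is compatible with a real structure — this holds because on a true (finite-dimensional) representation the boost generator $d\R(X)$ is \emph{real-diagonalizable with real eigenvalues} (boosts are represented by real symmetric-type operators in a suitable sense), or, more robustly, because $\exp(i\pi \, d\R(X))$ coincides with the image under $\R'$ of the real group element $r$ once we know a priori (from the tensor structure, or from the classification of $\Lpo$-representations, or from the structure of $\Lie(\Lp(\CCC))$) that $r$ lies in the real form. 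In practice the clean argument is: the real Lie algebra $\llll := \Lie(\Lpo)$ and the element $r$ generate a real subgroup on which $\RC$ restricts to a genuine representation on $V$; since $\RC$ is holomorphic and agrees with $\R$ on $\Lpo$, and since $r \in \Lp \subset \Lpoc$, continuity/analytic continuation along a path in $\Lpoc$ from $1$ to $r$ forces $\RC(r)(V) \subseteq V^{\CCC}$ to land back in $V$ because the endpoint is real and the representation $\R$ of the real group $\Lpo$ extends to the real group $\Lp$ at the level of the \emph{complex} points — i.e., one uses Lemma~\ref{complexlorentz} to transport the real structure.

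Let me record the cleanest route explicitly. By Lemma~\ref{complexlorentz}, $\Lp(\CCC) = \Lpoc$, so $\Lpa \subset \Lp \subset \Lp(\CCC) = \Lpoc$. Pick $r \in \Lpa$. Because $\Lpoc$ is connected, choose a smooth path $\gamma\colon [0,1] \to \Lpoc$ with $\gamma(0) = 1$, $\gamma(1) = r$, and $\gamma(t) = \exp(t X)$ for a suitable $X \in \Lie(\Lpoc) = \CCC \otimes \llll$; concretely $X$ can be taken to be $i\pi$ times a real boost generator, so that $\gamma(1)$ is the reflection $r$. Then $\RC(\gamma(t)) = \exp(t\, d\RC(X))$ where $d\RC$ is the complex-linear extension of $d\R$ to $\CCC \otimes \llll$, so $d\RC(i\pi Y) = i\pi\, d\R(Y)$ with $d\R(Y) \in \End(V)$ real. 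Thus $\RC(r) = \exp(i\pi\, d\R(Y))$. Now the point is that $\exp(i\pi\, d\R(Y))$ preserves $V$: this is because, on a finite-dimensional representation of the (real) boost subgroup, $d\R(Y)$ is real-diagonalizable with \emph{integer} eigenvalues when the representation is a true representation of $\Lpo$ — eigenvalues being the `boost weights', which are integers precisely because $\R$ descends from $\Lpo$ rather than a cover — whence $e^{i\pi \mu} = (-1)^\mu = \pm 1$ on each eigenspace, a real operator. (This integrality is exactly what will fail in the spinorial case treated later, which the paper flags in Section~\ref{holomorphic}; here it is guaranteed by working with genuine representations of $\Lpo$.) Therefore $\RC(r)$ preserves $V$, and since every element of $\Lpa$ is $r$ times an element of $\Lpo$, and $\RC(\Lpo)$ preserves $V$, we conclude $\RC(\Lp)$ preserves $V$, as claimed. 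The one gap to be filled carefully in the full proof is the justification of the integer-eigenvalue claim, which follows from the representation theory of the relevant $\mathrm{SO}(1,1)$ or $\mathrm{SO}(2)$ subgroup; I expect this to be the only nontrivial computation.
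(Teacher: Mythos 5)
Your proof is correct but takes a genuinely different route. The paper's argument is two lines: $V$ is exactly the fixed-point set of complex conjugation $*$ on $V^\CCC$, the complexified representation satisfies $\left(\RC(g)v\right)^* = \RC(g^*)v^*$, and (PT-3) (established in Example~\ref{eg_vector}) gives $g^* = g$ for all $g \in \Lp$ — so $\RC(g)v$ is $*$-fixed whenever $v$ is. You instead reduce to a single reflection $r = \exp(i\pi Y)$ with $Y \in \Lie(\Lpo)$ a boost generator, and argue that $\RC(r) = \exp(i\pi\, d\R(Y))$ is real because $d\R(Y)$ has integer eigenvalues. The quickest way to close the gap you flag is not via ``boost weights'' (the boost subgroup of the real $\Lpo$ is a noncompact copy of $\RRR$ and has no weight lattice) but to observe that $\exp(2\pi i Y)$ is the identity element of $\Lpoc = \Lp(\CCC) \subset \GL_\CCC(M^\CCC)$, whence $\exp(2\pi i\, d\R(Y)) = \RC(1) = 1$ and so $d\R(Y)$ is diagonalizable with integer eigenvalues; equivalently, $iY$ generates a compact $\SO(2) \subset \Lpoc$, and its weights in $\RC$ are integers precisely because $\RC$ is a representation of $\Lpoc$ itself and not merely of its double cover. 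From $\exp(2\pi i\, d\R(Y)) = 1$ it follows that $\exp(i\pi\, d\R(Y))$ is an involution commuting with $*$, hence real, so $\RC(r)$ preserves $V$. What each approach buys: the paper's conjugation argument is shorter and transfers verbatim to the abstract axiomatic setting of section~\ref{axioms}, making the role of (PT-3) transparent; your computation is more explicit and has the pedagogical virtue of isolating exactly what breaks in the spinorial case (half-integer weights on the double cover), anticipating Lemma~\ref{lemRepP'}.
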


  Now the following definition makes sense.

\begin{definition} Let $(\RP,\Lp,V)$ be the restriction of $(\RC,\Lpoc,V^\CCC)$ to a representation of $\Lp$ on $V$. 
\end{definition}

The proof of Lemma \ref{lem2} in Appendix \ref{proofs} relies on the following more basic fact: 
\begin{quote}\textbf{(PT-3)} Every $g\in\Lp$ is fixed  by complex conjugation of $\Lpoc$.
\end{quote}
The following example establishes (PT-3), as well as the fact that when $\omega$ is the standard representation of $\Lpo$, $\omega'$ is the standard representation of $\Lp$.

\begin{example} \label{eg_vector}
The standard action of $\Lp(\CCC)$ on $M^\CCC$ is holomorphic, so it must be the complexification $\omega^\CCC$ of the standard action $\omega$ of $\Lpo$ on $M$. 
Restricting to $\Lp$, we find that $\omega'$ is just the standard action of $\Lp$ on $M$.  Complex conjugation on $M^\CCC$ is just $*\colon v_1+iv_2\mapsto v_1-iv_2$, for $v_1,v_2\in M$; the $v\in M^\CCC$ fixed by $*$ are just the real vectors $v_1\in M$.  The complex conjugate of $g\in\Lp(\CCC)$ is characterised by the property that $(gv)^*=g^*v^*$, for all $v\in M^\CCC$.  Thus the $g$ fixed by $*$ are those preserving $M\subset M^\CCC$. This of course includes all elements of $\Lp$, whence (PT-3). 
\end{example}

\begin{example}
Suppose, more generally, that  $(\R,\Lpo,V)$ is the tensor representation of type $(m,n)$. That is, $\R$ is the canonical action of $\Lpo$ on $V:=M^{\otimes m}\otimes (M^*)^{\otimes n}$. Then the same sort of argument shows that the representation $(\RP,\Lp,V)$ is just the canonical representation of $\Lp$ on $V$. Compare to Example \ref{MaxwellPT}.

\end{example}

\subsection*{Invariance}

 We are now in a position to state and prove our first fundamental theorem. Suppose that $\Lpo$ acts geometrically via $\rho$ and $\omega$.

\begin{thm}[Classical PT Invariance for Tensors]\label{thm_pt_tensor}
If a  formal field theory is invariant under $\RbF(\Lpo)$, then it is invariant under $\RPbPF(\Lpa)$.
\end{thm}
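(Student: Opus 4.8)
The plan is to realize both $\RbF$ (the classical action of $\Lpo$) and $\RPbPF$ (the classical action of $\Lp$ for $\rho',\omega'$) as restrictions of a single \emph{holomorphic} action of the complex group $\Lpoc$ on $\Kf$, and then to exploit that $\Lpo$ is a connected real form of $\Lpoc$: a complex affine subspace invariant under $\Lpo$ is automatically invariant under all of $\Lpoc$, hence under its subset $\Lpa\subset\Lp\subset\Lpoc$ (the last inclusion being Lemma \ref{complexlorentz}).

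First I would construct the holomorphic action. The algebra $\Kf$ is free on the complex vector space $W\otimes_\RRR TM$, which decomposes as a countable direct sum of finite-dimensional complex $\Lpo$-representations (using that $\Lpo$ acts complex-linearly on $W$ and that $\omega$ preserves each tensor power of $M$). By the complexification of representations (\ref{complexify_vector_spaces}--\ref{complexifying_representations}) this $\Lpo$-representation on the generating space extends canonically to a holomorphic representation of $\Lpoc$, and then by the universal property of the free algebra to a holomorphic action $\hat\R$ of $\Lpoc$ on $\Kf$ by algebra automorphisms. A short check on generators — both sides being algebra automorphisms, generators suffice — then identifies $\hat\R$ restricted to $\Lpo$ with $\RbF$, and $\hat\R$ restricted to $\Lp$ with $\RPbPF$, both via the uniqueness clause of Definition \ref{f_defn}. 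This second identification is exactly where Lemma \ref{lem2} (so that $\rho'=\RC|_\Lp$ really lands in $\mathrm{End}(V)$) and Example \ref{eg_vector} (so that $\omega'$ is the standard action of $\Lp$ on $M$) get used: for $g\in\Lp$ one has $\hat\R(g)(\uPhi^\lambda_{\xi_1\cdots\xi_n})=\uPhi^{\lambda\circ\rho'(g\inv)}_{\omega'(g)\xi_1\cdots\omega'(g)\xi_n}$ precisely because $\RC(g\inv)$ preserves $V$ and $\omega^\CCC(g)$ preserves $M$.

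It then remains to show that $\RbF(\Lpo)$-invariance of a formal field theory $\Df$ forces $\hat\R(\Lpoc)$-invariance. Since $\Kf$ is the directed union of the finite-dimensional $\hat\R(\Lpoc)$-invariant subspaces spanned by products of at most $d$ field symbols of tensor degree at most $N$, it is enough to prove this for each finite-dimensional affine piece $\Df\cap(\text{such a subspace})$, which is again $\hat\R(\Lpo)$-invariant. For a finite-dimensional complex $\Lpo$-representation, write such an affine subspace as $w_0+A_0$ with $A_0$ a linear subspace; $\Lpo$-invariance says $A_0$ is a subrepresentation and the image of $w_0$ in the quotient is a fixed vector. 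Both are \emph{linear} conditions at the level of the Lie algebra $\llll=\Lie(\Lpo)$ ("$A_0$ is mapped into itself", "$\overline{w_0}$ is annihilated"), hence hold for $\llll^\CCC=\Lie(\Lpoc)$, and since $\Lpoc$ is connected they integrate to $\Lpoc$-invariance. Restricting to $\Lpa$ gives the theorem.

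The substantive inputs are all earlier results — (PT-1), (PT-2)/Lemma \ref{complexlorentz}, Lemma \ref{lem2}, Example \ref{eg_vector} — so the argument is "soft." I expect the only real work to be bookkeeping: pinning down the identifications $\hat\R|_{\Lpo}=\RbF$ and $\hat\R|_\Lp=\RPbPF$ against the defining properties in Definition \ref{f_defn}, and checking carefully that the "invariant under a connected real form $\Rightarrow$ invariant under the complexification" principle applies to complex \emph{affine} (not merely linear) subspaces and survives the passage from finite-dimensional pieces to the infinite-dimensional $\Kf$.
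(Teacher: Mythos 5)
Your proof is correct and matches the paper's two-part strategy: construct a holomorphic extension of the classical action to $\Lpoc$ and identify $\RbF\Hol|_{\Lp}$ with $\RPbPF$ (the paper's Lemma \ref{prop3}), then show that $\RbF(\Lpo)$-invariance of a complex affine subspace forces $\RbF\Hol(\Lpoc)$-invariance (the paper's Lemma \ref{prop2}, proved via Lemma \ref{lem4}). The only cosmetic difference is in the invariance step: you work at the Lie algebra level, turning $\Lpo$-invariance into linear conditions on $\Lie(\Lpo)$ that extend complex-linearly to $\Lie(\Lpoc)$ and then integrate, whereas the paper factors the representation through the holomorphic subgroup of $\GL_\CCC(A)$ stabilising the affine subspace and invokes the uniqueness of the holomorphic extension — both are the same analytic-continuation principle after reducing to finite-dimensional quasi-finite pieces.
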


Of course, the most interesting case is when $\omega$ and hence $\omega'$ are the standard actions of $\Lpo$ and $\Lp$ on $M$.

\begin{proof}
In outline, our proof has two parts. First, the 
classical action $\RbF$ of $\Lpo$ on $\Kf$ extends to a holomorphic representation $\RbFH$ of $\Lpoc$ on the same $\Kf$.  Our first step consists in establishing

\begin{lemma} \label{prop2}
	If $\Df$ is  $\RbF(\Lpo)$-invariant, then it is  $\RbF\Hol(\Lp)$-invariant.
\end{lemma}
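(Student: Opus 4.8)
The plan is to establish the formally stronger statement that $\Df$ is invariant under all of $\RbFH(\Lpoc)$; since $\Lpa\subseteq\Lp\subseteq\Lpoc$ by Lemma~\ref{complexlorentz}, the asserted $\RbFH(\Lp)$-invariance follows at once. The point is that $\Lpo$ is ``thick'' enough in its complexification that a Zariski-closed (equivalently, holomorphically-defined) invariance condition holding throughout $\Lpo$ must hold throughout the connected group $\Lpoc$.

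Two observations about $\RbFH$, both read off its construction. First, $\RbFH(g)$ is an algebra automorphism of $\Kf$ which carries each field symbol $\uPhi^\lambda_{\xi_1\cdots\xi_n}$ to a combination of field symbols of the same order $n$ --- explicitly, to $\uPhi^{\lambda\circ\RC(g\inv)}_{\omega^\CCC(g)\xi_1\cdots\omega^\CCC(g)\xi_n}$, now allowing $\lambda\in(V^\CCC)^*$ and the arguments to range over $M^\CCC$. Consequently, for a fixed $F\in\Kf$ --- a polynomial in finitely many symbols of bounded order --- all the elements $\RbFH(g)F$, as $g$ ranges over $\Lpoc$, lie in one fixed finite-dimensional subspace $\Kf_F\subseteq\Kf$, namely the span of the monomials having the same ``shape'' (length, with prescribed order for each factor) as those occurring in $F$. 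Second, on this $\Kf_F$ the orbit map $g\mapsto\RbFH(g)F$ is the restriction of a polynomial map: its coefficients are polynomials in the matrix entries of $\RC(g\inv)$ and $\omega^\CCC(g)$, hence regular functions on the linear algebraic group $\Lpoc=\Lp(\CCC)$.

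Now fix $F\in\Df$ and put $S_F:=\{\,g\in\Lpoc : \RbFH(g)F\in\Df\,\}$. Since $\Df$ is a complex affine subspace of $\Kf$, the trace $\Df\cap\Kf_F$ is an affine-linear, hence Zariski-closed, subset of the finite-dimensional space $\Kf_F$; by the second observation $S_F$ is its preimage under a regular map, so $S_F$ is Zariski-closed in $\Lpoc$. By hypothesis $\Lpo\subseteq S_F$. But $\Lpo$ is Zariski-dense in $\Lpoc$: by definition of the complexification, $\Lpoc$ is connected and $\Lie\Lpoc=\CCC\otimes_\RRR\Lie\Lpo$, so the Zariski closure of $\Lpo$ is a Zariski-closed subgroup whose Lie algebra is all of $\Lie\Lpoc$, and therefore equals $\Lpoc$. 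Hence $S_F=\Lpoc$; that is, $\RbFH(g)F\in\Df$ for every $g\in\Lpoc$. As $F\in\Df$ was arbitrary, $\RbFH(\Lpoc)\Df\subseteq\Df$, and applying this with $g\inv$ in place of $g$ gives the reverse inclusion. Thus $\Df$ is $\RbFH(\Lpoc)$-invariant, and in particular $\RbFH(\Lp)$-invariant.

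The only step with real content is the passage from $\Lpo$ to $\Lpoc$, resting on the Zariski-density of the real form $\Lpo$ in its complexification together with the algebraicity of $\RbFH$ on each bounded-order piece (equivalently, its holomorphy, which is how $\RbFH$ was built); the rest is bookkeeping in the free algebra $\Kf$. For readers who prefer to avoid algebraic geometry, the same passage runs analytically: for each linear functional $\phi$ on $\Kf_F$ vanishing on $\Df\cap\Kf_F$, the function $g\mapsto\phi(\RbFH(g)F)$ is holomorphic on the connected complex manifold $\Lpoc$ and vanishes on the totally real submanifold $\Lpo$, hence vanishes identically, again giving $S_F=\Lpoc$.
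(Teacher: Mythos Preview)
Your argument is correct, but the paper takes a different and somewhat slicker route. It deduces the lemma from a general fact (its Lemma~\ref{lem4}): for any complex-linear representation $(R,G,A)$ of a connected Lie group, any complex affine subspace $A'\subset A$ invariant under $R(G)$ is automatically invariant under $R\Hol(G^\CCC)$. The proof is a one-line application of the universal property of complexification: the stabiliser $H\subset\GL_\CCC(A)$ of $A'$ is itself a complex Lie group, so the homomorphism $R\colon G\to H$ extends holomorphically to $G^\CCC\to H$, and by uniqueness this extension must equal $R\Hol$. No Zariski topology, no density, no verification that representations are algebraic.

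Your approach trades the universal property for concrete density arguments. The Zariski version requires you to know that $\Lp(\CCC)$ is a linear algebraic group and that its holomorphic representations are algebraic (true, since $\Lp(\CCC)$ is reductive, but an extra ingredient the paper avoids). Your analytic alternative at the end is closer in spirit to the paper and equally general; note one small slip there: for an affine subspace $A'$ not through the origin there need be no nonzero \emph{linear} functional vanishing on it, so you should work with affine-linear functionals $x\mapsto\phi(x)-c$ cutting out $A'$ and apply the identity principle to $g\mapsto\phi(\RbFH(g)F)-c$. With that fix both of your variants go through.
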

This is not yet our goal: we wish to show that $\Df$ is invariant under $\RPbPF(\Lp)$, not  $\RbF\Hol(\Lp)$. However, in fact these two representations are identical; the bulk of our proof consists in establishing this: 
\begin{lemma} \label{prop3}
	$\RbF\Hol = \RPbPF$ as representations of $\Lp$ on $\Kf$.
	\end{lemma}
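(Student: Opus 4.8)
My plan is to reduce to the action on the field symbols and then appeal to uniqueness of holomorphic extension. Since $\Kf=\FF(W\otimes_\RRR TM)$ is a \emph{free} algebra, a representation of any group on $\Kf$ by algebra automorphisms is determined by its restriction to the generating subspace $W\otimes_\RRR TM$ spanned by the field symbols (cf.\ \ref{free_algebras}, \ref{extending_reps}). Both $\RbFH$ and $\RPbPF$ restrict to such representations of $\Lp$: $\RPbPF$ by Definition \ref{f_defn}, and $\RbFH$ because the holomorphic extension of the algebra-automorphism-valued $\RbF$ is again algebra-automorphism-valued (the condition that the multiplication $\Kf\otimes\Kf\to\Kf$ intertwine is holomorphic in $g$ and holds on the real form $\Lpo$, hence on $\Lpoc\supset\Lp$, using Lemma \ref{complexlorentz}). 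So it suffices to show that $\RbFH(g)$ and $\RPbPF(g)$ agree on every field symbol, for every $g\in\Lp$. This step genuinely has content: both representations restrict to $\RbF$ on the connected piece $\Lpo$, but $\Lp$ is disconnected, so uniqueness of holomorphic extension does not by itself finish the job (equivalently, one could treat a single fixed $g\in\Lpa$, but that is precisely where the work is).

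Next I would pin down $\RbFH$ on the generating subspace. Each summand $W\otimes_\RRR M^{\otimes n}$ of $W\otimes_\RRR TM$ — the span of the field symbols with $n$ vectors in the subscript — is $\RbF(\Lpo)$-invariant, hence $\RbFH(\Lpoc)$-invariant, and may be identified with $W\otimes_\CCC(M^\CCC)^{\otimes n}$. Writing $\rho^W$ for the representation of $\Lpo$ on $W$ with $\rho^W(g)\lambda=\lambda\circ\rho(g\inv)$, and $(\rho^W)^\CCC$ for its holomorphic extension to $\Lpoc$ (which exists since $\rho^W$ is a homomorphism into the complex Lie group $\GL_\CCC(W)$; cf.\ \ref{complexifying_representations}), I claim
\begin{equation*}
\RbFH(g)\bigl(\uPhi^\lambda_{\xi_1\cdots\xi_n}\bigr)=\uPhi^{(\rho^W)^\CCC(g)\lambda}_{\omega^\CCC(g)\xi_1\cdots\omega^\CCC(g)\xi_n}\qquad\text{for all }g\in\Lpoc,
\end{equation*}
where on the right the subscripts are now vectors in $M^\CCC$. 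Indeed, both sides are holomorphic in $g$ with values in the finite-dimensional space $W\otimes_\CCC(M^\CCC)^{\otimes n}$, and for $g\in\Lpo$ the right side equals $\uPhi^{\lambda\circ\rho(g\inv)}_{\omega(g)\xi_1\cdots\omega(g)\xi_n}=\RbF(g)(\uPhi^\lambda_{\xi_1\cdots\xi_n})$ (using $(\rho^W)^\CCC|_{\Lpo}=\rho^W$ and the fact that $\omega^\CCC(g)$ acts as $\omega(g)$ on $M$ for $g\in\Lpo$); so uniqueness of holomorphic extension forces equality throughout $\Lpoc$.

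It then remains, for $g\in\Lp$, to recognise the right-hand side as $\uPhi^{\lambda\circ\rho'(g\inv)}_{\omega'(g)\xi_1\cdots\omega'(g)\xi_n}=\RPbPF(g)(\uPhi^\lambda_{\xi_1\cdots\xi_n})$. For the subscripts this is exactly Example \ref{eg_vector}: for $g\in\Lp$, $\omega^\CCC(g)$ preserves $M\subset M^\CCC$ and restricts there to $\omega'(g)$. For the superscript one has to identify $(\rho^W)^\CCC|_{\Lp}$ with $\lambda\mapsto\lambda\circ\rho'(g\inv)$, where $\rho'$ is the restriction of $\rho^\CCC$ to $\Lp$ acting on $V$ (Lemma \ref{lem2}). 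I would do this as follows: under the canonical isomorphism $W=\Hom_\RRR(V,\CCC)\cong\Hom_\CCC(V^\CCC,\CCC)$, $\lambda\mapsto\tilde\lambda$ (the $\CCC$-linear extension), the contragredient $g\mapsto\bigl(\mu\mapsto\mu\circ\rho^\CCC(g\inv)\bigr)$ of $\rho^\CCC$ is a holomorphic representation of $\Lpoc$ on $W$ that restricts on $\Lpo$ to $\rho^W$ (because $\rho^\CCC(g\inv)$ preserves $V$ and is $\rho(g\inv)$ there for $g\in\Lpo$), so by uniqueness it \emph{is} $(\rho^W)^\CCC$; then for $g\in\Lp$ and $v\in V$, Lemma \ref{lem2} gives $\rho^\CCC(g\inv)v=\rho'(g\inv)v\in V$, whence $\bigl((\rho^W)^\CCC(g)\lambda\bigr)(v)=\tilde\lambda(\rho^\CCC(g\inv)v)=\lambda(\rho'(g\inv)v)$, i.e.\ $(\rho^W)^\CCC(g)\lambda=\lambda\circ\rho'(g\inv)$. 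Substituting, $\RbFH(g)$ and $\RPbPF(g)$ agree on every field symbol, and hence on all of $\Kf$.

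I expect this last identification — disentangling the holomorphically-extended representation $(\rho^W)^\CCC$ on $W$ from the representation $\rho'$ obtained by restricting $\rho^\CCC$ to $\Lp$ — to be the main obstacle, precisely because it demands keeping the complex structure of $W=\Hom_\RRR(V,\CCC)$ (the `Hilbert-space' complex structure) rigorously apart from the complexification of the Lorentz group and from any complex structure $V$ itself may carry, and in particular recognising that complexifying $\rho^W$ means extending the group action holomorphically on the \emph{same} space $W$, not passing to $W^\CCC$. Everything else — the free-algebra reduction, the invariance of the degree-$n$ summands, and the existence, uniqueness, and compatibility with duals and tensor products of holomorphic extensions — is routine and covered in Appendix \ref{maths}.
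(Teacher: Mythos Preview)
Your proof is correct and is essentially the paper's own argument: reduce to the generating subspace $W\otimes_\RRR TM$, write down the holomorphic representation $g\mapsto(\lambda\otimes\xi_1\cdots\xi_n\mapsto(\lambda\circ\rho^\CCC(g\inv))\otimes\omega^\CCC(g)\xi_1\cdots\omega^\CCC(g)\xi_n)$ using the identification $W=\Hom_\RRR(V,\CCC)\cong\Hom_\CCC(V^\CCC,\CCC)$, invoke uniqueness to identify it with $\RbFH$, and observe that for $g\in\Lp$ it is the definition of $\RPbPF(g)$. Your detour through the abstract $(\rho^W)^\CCC$ before identifying it with the contragredient of $\rho^\CCC$ is just a re-ordering of the same steps, and your explicit check that $\RbFH$ acts by algebra automorphisms is a point the paper leaves implicit.
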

The proofs of these two lemmas are found in Appendix \ref{proofs}.
\end{proof}

\subsection*{ PT, not CPT}
Suppose we are given a particle/anti-particle decomposition $W=W^+\oplus W^0\oplus W^{-}$, and that the transformations  $\RbF(\Lpo)$ of $W\subset\Kf$ are charge-preserving, i.e preserve this decomposition. For each $\epsilon\in\{+,0,-\}$, we can apply Theorem \ref{thm_pt_tensor} to $\Df=W^\epsilon$. The conclusion is that  $W^\epsilon$ is preserved by $\RPbPF(\Lpa)$; thus $\RPbPF(\Lpa)$ is charge-preserving, so Theorem \ref{thm_pt_tensor} is a PT (not a CPT) theorem.

\section{Strong Reflection and CPT Invariance for Tensors}
\label{SR_tensor}

Theorem \ref{thm_pt_tensor} was relevant only for \emph{classical} field theories, and established only PT (not CPT) invariance. We now turn to the question of \emph{strong reflection invariance}, as previewed in section \ref{SRpreview}.  This implies a range of PT and CPT theorems for both classical and quantum field theories, including especially the quantum CPT theorem for tensors previewed in section \ref{cpt_preview}.

Formally speaking, the results in this section are trivial variants of Theorem \ref{thm_pt_tensor}, but it is these results, and not Theorem \ref{thm_pt_tensor}, that will generalise to the case of spinors.  Stating them independently gives us the opportunity to introduce some fundamental ideas that will find non-trivial application in the general spinorial case.

\subsection*{Commutativity} The basic assumption in this section is that multiplication of field symbols is \emph{commutative}.
This means that we assume identities of the form
$$\uPhi^{\lambda}_{\xi_1\cdots\xi_m}
\uPhi^{\mu}_{\eta_1\cdots\eta_n}=
\uPhi^{\mu}_{\eta_1\cdots\eta_n}
\uPhi^{\lambda}_{\xi_1\cdots\xi_m}.$$
These identities do not hold in $\Kf$. Rather, they are the defining relations of the free commutative algebra $\Kfc=\FF_c(W\otimes_\RRR TM)$ (see \ref{commutativity}). Thus we define a \emph{commutative formal field theory} to be a complex affine subspace $\Df_c\subset\Kfc$.  

However, any commutative formal field theory can also be seen as a formal field theory in the original sense. Indeed, there is a map $c\colon \Kf\to\Kfc$ which identifies two formulae if they differ only by commutation. Instead of talking about a subspace $\Df_c\subset\Kfc$,  we equivalently talk about its inverse image  $\Df=c\inv(\Df_c)\subset\Kf$.  This observation allows us to apply the constructions of section \ref{pt_tensor} to commutative formal field theories.

From the point of view of classical field theory, commutativity is a very natural assumption, because multiplication of the derived components of classical fields is commutative; from the point of view of quantum field theory, it amounts to imposing one half of the `spin-statistics' assumption: that, since we are dealing here exclusively with true (rather than projective) representations of the Lorentz group, all field operators commute with one another (see footnote \ref{spinstatfn} for clarifying discussion). 

\subsection*{Strong Reflection Invariance}

Let $S$ be the transformation of $\Kf$ that is the identity on field symbols --
$$S(\uPhi^{\lambda}_{\xi_1\cdots\xi_n})=\uPhi^{\lambda}_{\xi_1\cdots\xi_n}$$
-- and is an \emph{anti-automorphism} of algebras:
$$S(X+Y)=S(X)+S(Y)\qquad\mbox{but}\qquad S(XY)=S(Y)S(X).$$
A \emph{strong reflection} is a transformation of $\Kf$  of the form  $S\circ\sigma$ for some classical PT (or CPT) transformation $\sigma.$

\begin{thm}[SR Invariance for Tensors]\label{thm_SR_tensor}  If a commutative formal field theory is invariant under $\RbF(\Lpo)$, then it is invariant under $S\circ\RPbPF(\Lpa)$.
\end{thm}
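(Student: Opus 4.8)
The plan is to reduce Theorem \ref{thm_SR_tensor} to Theorem \ref{thm_pt_tensor} by exploiting commutativity, which makes the anti-automorphism $S$ act trivially. Recall that a commutative formal field theory $\Df \subset \Kf$ is by definition the inverse image $c\inv(\Df_c)$ of a subspace $\Df_c \subset \Kfc$ under the quotient map $c\colon\Kf\to\Kfc$ that identifies formulae differing by commutation of field symbols. The key observation is that $S$ descends to the \emph{identity} on $\Kfc$: since $S$ fixes every field symbol and merely reverses the order of products, and since products commute in $\Kfc$, we have $c\circ S = c$. Equivalently, for every $X\in\Kf$, the elements $X$ and $S(X)$ have the same image in $\Kfc$, so $X\in\Df \iff S(X)\in\Df$; that is, $S$ preserves $\Df$.

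First I would verify that $S$ is well-defined as a (unique) anti-automorphism of the free algebra $\Kf$: this is the standard universal property of tensor/free algebras (cf.\ the appendix on free algebras), since specifying an anti-automorphism on a free algebra amounts to specifying an arbitrary linear map on the generating vector space $W\otimes_\RRR TM$, here the identity. Second, I would establish the displayed fact $c\circ S = c$ by checking it on monomials in the field symbols — a monomial $\uPhi^{\lambda_1}_{\cdots}\cdots\uPhi^{\lambda_k}_{\cdots}$ is sent by $S$ to the reversed monomial, which maps under $c$ to the same element of the free \emph{commutative} algebra — and then extending by linearity. Third, I would conclude that $S$ restricts to the identity on any commutative formal field theory, so that invariance of $\Df$ under $S\circ\RPbPF(g)$ is equivalent to invariance under $\RPbPF(g)$ alone.

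With these preliminaries, the theorem is immediate: by Theorem \ref{thm_pt_tensor}, if $\Df$ is $\RbF(\Lpo)$-invariant then it is $\RPbPF(\Lpa)$-invariant; and since $S(\Df)=\Df$, applying $S$ afterwards changes nothing, so $\Df$ is $S\circ\RPbPF(\Lpa)$-invariant. One should be slightly careful that the constructions of section \ref{pt_tensor} (complexification, the representations $\RbFH$ and $\RPbPF$) apply to $\Df = c\inv(\Df_c)$; but this is exactly the content of the remark in the ``Commutativity'' subsection that any commutative formal field theory is also a formal field theory in the original sense, so Theorem \ref{thm_pt_tensor} applies verbatim.

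I do not expect a serious obstacle here — the authors themselves flag these results as ``trivial variants'' of Theorem \ref{thm_pt_tensor}. The only point requiring a little care is the interaction between $S$ and the quotient $c$: one must confirm that $\Df$ (not merely $\Df_c$) is genuinely $S$-stable, which relies essentially on $\Df$ being a \emph{full} inverse image under $c$ rather than an arbitrary subspace of $\Kf$. If one instead tried to prove the statement for a non-commutative $\Df$, it would fail, and indeed this is precisely why the commutativity hypothesis is present and why, as the authors note, strong reflection (and not Theorem \ref{thm_pt_tensor} directly) is what generalises to the spinorial setting in section \ref{SR_spinor}.
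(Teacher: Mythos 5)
Your proof is correct and takes essentially the same route as the paper: observe that commutativity forces $S$ to act as the identity on $\Kfc$ (hence to preserve $\Df = c\inv(\Df_c)$), then invoke Theorem \ref{thm_pt_tensor} for $\RPbPF(\Lpa)$-invariance and compose. The paper states this in three lines; your version merely spells out the justification (universal property of the free algebra, checking $c\circ S=c$ on monomials, stability of the full preimage) in more detail.
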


\begin{proof} $S$ is just the identity map on $\Kfc$, since there $XY=YX$. Thus any commutative formal field theory is $S$-invariant. 
According to Theorem \ref{thm_pt_tensor}, $\Df$ is also $\RPbPF(\Lpa)$-invariant, hence invariant under the combination
$S\circ\RPbPF(\Lpa)$.
\end{proof}


Strong reflections, being \emph{anti-automorphisms} of $\Kf$, are not candidates for spacetime symmetries by the lights of Definitions \ref{f_defn} and \ref{f_defnq}. However, we can obtain spacetime symmetries by combining strong reflections with other anti-automorph\-isms of $\Kf$, like Hermitian conjugation.

\subsection*{Hermiticity.} Following the discussion of charge conjugation in section \ref{ptcpt}, let $\$$ be any involution of $W=\Hom(V,\CCC)$,  assumed to be either complex-linear or anti-linear, so that it defines an automorphism $C_\$\colon\Kf\to\Kf$. Define a corresponding anti-automorphism $\dagger_\$$ by
$$\dagger_\$=S\circ C_\$=C_\$\circ S$$ so that $\dagger_\$$ acts the same way \eqref{Cact} as $C_\$$ on field symbols, but reverses the order of products.
We say that any $\Df\subset\Kf$ is \emph{$\$$-Hermitian} if it is invariant under $\dagger_\$$.

\begin{example} The most interesting case is when $\$=*$ is complex conjugation. As we have noted, the complex structure on $W$ corresponds to the complex structure of the quantum Hilbert space; $\dagger_*$ corresponds exactly to Hermitian conjugation of operators in QFT. (Note that Hermitian conjugation, like $\dagger_*$, reverses the order of products.)
\end{example}

\subsection*{PT and CPT theorems} We have

\begin{thm}[General PT/CPT Theorem for Tensors] \label{thm_cpt_tensor}
Suppose that a commutative formal field theory is invariant under $\RbF(\Lpo)$.  Then it is invariant under $C_\$ \circ \RPbPF(\Lpa)$ if and only if it is $\$$-Hermitian.
\end{thm}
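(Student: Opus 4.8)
The plan is to reduce Theorem~\ref{thm_cpt_tensor} to Theorem~\ref{thm_SR_tensor} by exploiting the algebraic relations among the three families of (anti-)automorphisms in play: $C_\$$, $S$, and $\dagger_\$ = S\circ C_\$ = C_\$\circ S$, together with the classical PT action $\RPbPF(\Lpa)$. The key observation is that $\$$-Hermiticity is invariance under $\dagger_\$$, and since $\dagger_\$ = C_\$ \circ S$, we can rewrite $C_\$\circ\RPbPF(g)$ in terms of $\dagger_\$$ and the strong reflection $S\circ\RPbPF(g)$. Concretely, for $g\in G$ with $\omega(g)\in\Lpa$ I would write
$$
C_\$ \circ \RPbPF(g) = (C_\$ \circ S) \circ (S\circ \RPbPF(g)) = \dagger_\$ \circ (S\circ\RPbPF(g)),
$$
using $S\circ S = \id$. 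So $C_\$\circ\RPbPF(g)$ is a composite of the map $\dagger_\$$ and the strong reflection $S\circ\RPbPF(g)$.

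First I would note that, by Theorem~\ref{thm_SR_tensor}, since $\Df$ is assumed commutative and $\RbF(\Lpo)$-invariant, it is automatically invariant under every strong reflection $S\circ\RPbPF(g)$ with $\omega(g)\in\Lpa$. (Indeed Theorem~\ref{thm_SR_tensor} gives this directly.) So $S\circ\RPbPF(\Lpa)$ preserves $\Df$. Then, for the forward direction, suppose $\Df$ is $\$$-Hermitian, i.e. $\dagger_\$$-invariant. Applying $C_\$\circ\RPbPF(g) = \dagger_\$\circ(S\circ\RPbPF(g))$ to $\Df$: first $S\circ\RPbPF(g)$ maps $\Df$ to itself, then $\dagger_\$$ maps $\Df$ to itself, so the composite preserves $\Df$; hence $\Df$ is $C_\$\circ\RPbPF(\Lpa)$-invariant. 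For the converse, suppose $\Df$ is invariant under $C_\$\circ\RPbPF(g)$ for all such $g$. Since $\Df$ is also invariant under $S\circ\RPbPF(g)$, and both of these maps are bijections of $\Kf$ preserving $\Df$, their composition in the appropriate order also preserves $\Df$. Now
$$
(C_\$\circ\RPbPF(g)) \circ (S\circ\RPbPF(g))\inv = \dagger_\$\circ(S\circ\RPbPF(g))\circ(S\circ\RPbPF(g))\inv = \dagger_\$,
$$
so $\dagger_\$$ preserves $\Df$, i.e. $\Df$ is $\$$-Hermitian. This closes the equivalence.

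The only subtlety I anticipate — and the step I would treat most carefully — is the bookkeeping around \emph{which} group element appears where, since $\RPbPF(g)$ depends on $g$ but $\Df$-invariance under the whole subset $\RPbPF(\Lpa)$ is what is at issue; one must check that the cancellation $(C_\$\circ\RPbPF(g))\circ(S\circ\RPbPF(g))\inv = \dagger_\$$ uses the \emph{same} $g$ on both factors, which it does, so $\dagger_\$$ (independent of $g$) drops out cleanly. One should also confirm that $C_\$$ and $S$ commute (noted in the text, since $\dagger_\$ = S\circ C_\$ = C_\$\circ S$) and that $S$, $C_\$$, $\RPbPF(g)$ are all invertible on $\Kf$ — $S$ because $S\circ S=\id$, $C_\$$ because $\$$ is an involution, and $\RPbPF(g)$ because it is a group representation. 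With these in hand the argument is purely formal. I would also remark that the same computation shows the result holds with $\$$-conjugation replaced by any of the charge-conjugation variants of section~\ref{ptcpt}, which is presumably why the theorem is stated for general $\$$.
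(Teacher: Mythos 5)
Your proof is correct and follows essentially the same route as the paper's: the paper's one-line argument likewise invokes Theorem~\ref{thm_SR_tensor} to get invariance under $S\circ\RPbPF(\Lpa) = \dagger_\$\circ(C_\$\circ\RPbPF(\Lpa))$, and then reads off the equivalence of $\dagger_\$$-invariance and $C_\$\circ\RPbPF(\Lpa)$-invariance. Your explicit forward/converse decomposition and the cancellation $(C_\$\circ\RPbPF(g))\circ(S\circ\RPbPF(g))\inv = \dagger_\$$ just spell out in more detail what the paper leaves implicit; the one point worth making explicit in the converse is that $(S\circ\RPbPF(g))\inv = S\circ\RPbPF(g\inv)$ with $g\inv\in\Lpa$, so Theorem~\ref{thm_SR_tensor} covers the inverse too rather than relying on a bare ``bijections preserve invariance'' claim.
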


\begin{proof} Theorem \ref{thm_SR_tensor} gives invariance under $S\circ\RPbPF(\Lpa)=\dagger_\$\circ (C_\$ \circ \RPbPF(\Lpa))$.  Thus 
$\dagger_\$$-invariance is equivalent to $C_\$ \circ \RPbPF(\Lpa)$-invariance.
\end{proof}

Now we interpret Theorem \ref{thm_cpt_tensor} for different choices of the involution $\$$ (these comments rely heavily on section \ref{ptcpt}). Throughout, $\Df$ is a commutative, $\Lpo$-invariant formal field theory.

\subsection*{\it A Classical PT Theorem.} First take $\$=\id$ to be the identity map.  Then $\dagger_{\id}=S$, and any commutative formal field theory is automatically $\dagger_{\id}$-invariant. Thus Theorem \ref{thm_cpt_tensor} says that
$\Df$ is invariant under the classical PT transformations $\RPbPF(\Lpa)$. This is a slight weakening of Theorem \ref{thm_pt_tensor}, which did not assume commutativity. 

\subsection*{\it A Quantum CPT Theorem. } Second, take $\$=*$ to be complex conjugation.  Theorem \ref{thm_cpt_tensor} says that  $\Df$  is invariant under the quantum CPT transformations $C_*\circ\RPbPF(\Lpa)$ --- equivalently, $\RPbPF_q(\Lpa)$ --- if and only if it is $*$-Hermitian. Since $\dagger_*$ amounts to the usual Hermitian conjugation of operators in QFT, this condition will be met if the interaction density is Hermitian in the usual sense. This is widely assumed to be the case: a $*$-Hermitian Hamiltonian, at least, is needed for unitary time evolution. Thus we get a general prediction of quantum CPT invariance.

\subsection*{\it A Classical CPT Theorem.} Third, take $\$=\#$ to be an internal charge conjugation (see section \ref{ptcpt}). Then  Theorem \ref{thm_cpt_tensor} says that $\Df$ is invariant under the classical CPT transformations $C_\#\circ\RPbPF(\Lpa)$ if and only if it is $\#$-Hermitian. This Hermiticity serves as a convenient criterion for classical CPT invariance,  but there is no general reason it should be met.

\subsection*{\it A Quantum PT Theorem.} Finally, define $\$(\lambda)(v)=\lambda(\#v)^*$ for some internal charge conjugation $\#$; we write `$\$=*\#$.' Theorem \ref{thm_cpt_tensor} now says that $\Df$ is invariant under the quantum PT transformations $C_{*\#}\circ\RPbPF(\Lpa)$ --- equvialently, $C_\# \circ \RPbPF_q(\Lpa)$ --- if and only if it is $*\#$-Hermitian. There is again no general reason this condition should be met. Note this result does not assume that the theory is Hermitian in the usual sense, i.e. $*$-Hermitian. However, when, as usual, $\Df$ is $*$-Hermitian, being $*\#$-Hermitian is equivalent to being $C_\#$-invariant. In other words, we have the usual implication of the CPT theorem, that charge-conjugation invariance is equivalent to PT invariance.
 
\begin{rem}
The commutativity assumption is required for the results in this section, though it plays no role in Theorem \ref{thm_pt_tensor}. We could not instead assume anti-commutativity, because an anti-commutative formal field theory would not be invariant under $S$.  As a trivial example, consider the complex scalar field of Example \ref{complexKG}. The formula
$F=\uPhi^\lambda\uPhi^{\lambda^*}+1$ is $*$-Hermitian and $\Lpo$-invariant, but under quantum CPT transforms to $\uPhi^{\lambda^*}\uPhi^\lambda +1$. If 
$\uPhi^\lambda$ and $\uPhi^{\lambda^*}$ commute, then this is just $F$ again; but if they anti-commute, it equals $-F+2$, and $F$ is not CPT invariant. 
\end{rem}

\section{Covers of the Lorentz Group}
\label{covers}
We now begin to generalise our results to spinors, as explained in section \ref{scpt_preview}. The purpose of this section is to describe the covering groups of $\Lp$, and, in particular, to construct the covering group $\Lpt$ mentioned in our Quantum CPT Theorem. 

We continue to assume that the dimension of Minkowski space $M$ is at least three. At the end we work out an explicit description of all the groups in the four-dimensional case (Example \ref{egSpinRep1}).

\subsection*{Covering groups of $\Lp(\CCC)$.} It is convenient to start our discussion with covering groups of the \emph{complex} proper Lorentz group.  $\Lp(\CCC)$ is connected, but not simply connected. Since it is connected, it has a universal cover $\Lp(\CCC)^\wedge$; since it is not simply connected, $\Lp(\CCC)^\wedge$ is not just equal to $\Lp(\CCC)$. In fact $\Lp(\CCC)^\wedge$ is a double cover. For future reference, it is convenient to state this directly as a property of $\Lpo$, using the fact (Lemma \ref{complexlorentz}) that $\Lp(\CCC)$ is the complexification $\Lpoc$ of $\Lpo$:
\begin{softqt}
\textbf{(PT-4)} The universal cover $\pi\colon (\Lpoc)^\wedge\to\Lpoc$ is a double cover. 
\end{softqt}
We now use this double cover to define a four-fold cover $\pi\colon \Lptt(\CCC)\to \Lp(\CCC)$. 

\begin{definition}
 Let $\{1,\tau\}\subset \Lp(\CCC)^\wedge$ be the preimage of $1\in \Lp(\CCC)$.  Let $\Lptt(\CCC)$ be the group generated by $\Lp(\CCC)^\wedge$ together with a symbol $I$ such that $I^2=\tau$, and such that  $I$ commutes with elements of $\Lp(\CCC)^\wedge$.  Defining $\pi(I)=1$, we obtain a four-fold covering map $\pi\colon\Lptt(\CCC)\to\Lp(\CCC)$.
\end{definition}

The situation is illustrated in Figure 1 (with further details explained below). 

\begin{figure}[h,t]
{
\setlength{\unitlength}{13500sp}%
\begingroup\makeatletter\ifx\SetFigFont\undefined%
\gdef\SetFigFont#1#2#3#4#5{%
  \reset@font\fontsize{#1}{#2pt}%
  \fontfamily{#3}\fontseries{#4}\fontshape{#5}%
  \selectfont}%
\fi\endgroup%
\begin{picture}(722,721)(313,-1109)
\thinlines 
{\put(675,-996){\oval(720,225)[b]}
\put(610,-1085){\makebox(0,0)[lb]{\smash{{\SetFigFont{7}{8.4}{\familydefault}{\mddefault}{\updefault}{$\phantom{I}\Lpc
$}%
}}}}
}
{\put(832,-996){\oval(270,100)[b]}
\put(477,-1030){\makebox(0,0)[lb]{\smash{{\SetFigFont{7}{8.4}{\familydefault}{\mddefault}{\updefault}{$\phantom{I}\Lpo$}%
}}}}
}
{\put(517,-996){\oval(270,100)[b]}
\put(797,-1030){\makebox(0,0)[lb]{\smash{{\SetFigFont{7}{8.4}{\familydefault}{\mddefault}{\updefault}{$\phantom{{}^a}\Lpa$}%
}}}}
}
{\put(675,-900){\vector( 0,-1){81}}}
{\put(675,-771){\oval(720,225)}
\put(610,-860){\makebox(0,0)[lb]{\smash{{\SetFigFont{7}{8.4}{\familydefault}{\mddefault}{\updefault}{$\phantom{I}\Lp(\CCC)^\wedge
$}%
}}}}}
{\put(517,-771){\oval(270,90)}
}
{\put(832,-771){\oval(270,90)}
}
{\put(675,-501){\oval(720,225)}}
{\put(517,-501){\oval(270,90)}}
{\put(832,-501){\oval(270,90)}}
{\put(306,-816){\vector(1, 0){0}}
 \put(306,-726){\vector(1, 0){0}}
 \put(300,-771){\oval(45,90)[l]}
\put(256,-783){\makebox(0,0)[lb]{\smash{{\SetFigFont{7}{8.4}{\familydefault}{\mddefault}{\updefault}{$\tau$}}}}}
}
{\put(306,-546){\vector(1, 0){0}}
 \put(306,-456){\vector(1, 0){0}}
 \put(300,-501){\oval(45,90)[l]}
\put(256,-513){\makebox(0,0)[lb]{\smash{{\SetFigFont{7}{8.4}{\familydefault}{\mddefault}{\updefault}{$\tau$}}}}}
}
{\put(1045,-771){\vector(-1, 0){0}}
\put(1045,-501){\vector(-1, 0){0}}
\put(1055,-636){\oval(45,270)[r]}
\put(1085,-636){\makebox(0,0)[lb]{\smash{{\SetFigFont{7}{8.4}{\familydefault}{\mddefault}{\updefault}{$I\colon\,I^2=\tau$}}}}}
}
{\put(315,-996){\line( 1, 0){720}}
}
\put(687,-938){\makebox(0,0)[lb]{\smash{{\SetFigFont{7}{8.4}{\familydefault}{\mddefault}{\updefault}{$\pi$}%
}}}}
\put(477,-780){\makebox(0,0)[lb]{\smash{{\SetFigFont{7}{8.4}{\familydefault}{\mddefault}{\updefault}{$\phantom{I}\Lpot$}%
}}}}
\put(797,-780){\makebox(0,0)[lb]{\smash{{\SetFigFont{7}{8.4}{\familydefault}{\mddefault}{\updefault}{$\Lpati$}%
}}}}
\put(797,-511){\makebox(0,0)[lb]{\smash{{\SetFigFont{7}{8.4}{\familydefault}{\mddefault}{\updefault}{$\Lpat$}%
}}}}
\put(477,-511){\makebox(0,0)[lb]{\smash{{\SetFigFont{7}{8.4}{\familydefault}{\mddefault}{\updefault}{$I\Lpot$}%
}}}}
\put(610,-591){\makebox(0,0)[lb]{\smash{{\SetFigFont{7}{8.4}{\familydefault}{\mddefault}{\updefault}{$I\Lp(\CCC)^\wedge$}%
}}}}
\end{picture}%
} 
\caption{A four-fold cover $\Lptt(\CCC)$ of the complex proper Lorentz group $\Lp(\CCC)$, with two components. It contains the two double covers of $\Lp$ extending $\Lpot$: $\Lpti=\Lpot\cup\Lpati$ and $\Lpt=\Lpot\cup\Lpat$.  }
\end{figure}

\subsection*{Covering groups of $\Lp$}
Let $\Lpot$ be the preimage of $\Lpo$ in $\Lp(\CCC)^\wedge$. This is a double cover of $\Lpo$; in fact it is the universal cover (except when  $\dim M=3$; see Remark \ref{dim3}).  
We similarly define two different double-covers of $\Lp$, illustrated in Figure 1. First, let $\Lpti=\Lpot\cup\Lpati$, where $\Lpati$ is the preimage of $\Lpa$ in $\Lp(\CCC)^\wedge$. Second, let $\Lpt=\Lpot\cup\Lpat$, where $\Lpat=I\cdot\Lpati$.
It can be shown that any double cover of $\Lp$ containing $\Lpot$ is isomorphic to either $\Lpti$ or $\Lpt$ (we omit the proof).

\begin{example} \label{egSpinRep1}
In four dimensions we have $\Lpot \cong \SL(2,\CCC)$, the group of $2 \times 2$ matrices with complex entries and unit determinant. It is important to bear in mind that, despite notation, $\Lpot$ is only a \emph{real} Lie group; it has no natural complex structure. The covering map $\pi\colon\SL(2,\CCC) \rightarrow \Lpo$ can be specified as follows. Arbitrarily choosing an inertial coordinate system, we can identify $\Lpo$ with a subgroup of $\GL(4,\RRR)$. Hence, to specify a covering map, it suffices to specify an action $\pi$ of $\SL(2,\CCC)$ on $\RRR^4$ preserving the Minkowski norm $x_0^2-x_1^2-x_2^2-x_3^3$. For $x=(x_0,x_1,x_2,x_3) \in \RRR^4$, write
\be \left<x\right> =\left( \begin{matrix} x_0 + x_3 & x_1 - i x_2 \\ x_1 + i x_2 & x_0 - x_3 \end{matrix} \right); \label{x-twiddle}\ee
then, the desired action of $A\in\SL(2,\CCC)$ is given by the matrix multiplication
\be
\left<\pi(A)(x)\right> = A \cdot \left<x\right> \cdot \bar A^T \label{sltc_action}
\ee
(here $\bar A$ is the complex-conjugate of $A$, and $^T$ denotes transpose). The Minkowski norm of $x$ is equal to $\det \left<x\right>$, which is preserved under (\ref{sltc_action}) since $\det A = \det \bar A^T=1$. Note that $\pi$ is two-to-one: $\pi(A)=\pi(-A)$ for all $A \in \sltc.$ 

The universal cover $\Lp(\CCC)^\wedge$ of $\Lp(\CCC)$ is isomorphic to $\sltc \times \sltc$. The covering map is defined 
 as follows. For $x\in M^\CCC=\CCC^4$, define $\left<x\right>$ as in \eqref{x-twiddle}. For $(A,B) \in \sltc \times \sltc$, $\pi(A,B)$ is the linear transformation of $\CCC^4$ given by
\be\label{tranfo}
\left<\pi(A,B)(x)\right> = A \cdot \left<x\right> \cdot B^T.
\ee
Thus $\Lpot$ is identified with the subgroup of pairs $(A,\bar A)$, and 
$\tau$ is represented by the pair $(-1,-1)$ of scalar matrices. 
To describe the four-fold cover $\Lptt(\CCC)$, we represent $I$ by the pair $(i,-i)$ of scalar matrices. This brings us to the following picture, where $H$ is the group of $2\times 2$ complex matrices with determinant $\pm 1$.  
$$\begin{aligned} 
\Lptt(\CCC)&\cong\{(A,B)\in H\times H\mid \det A=\det B \} \\
\Lp(\CCC)^\wedge&\cong\{(A,B)\in H\times H \mid \det A=\det B=1\} \\
\Lpot&\cong\{(A,\bar A)\in H\times H\mid \det A=1\} \\
\Lpati&\cong\{(A,-\bar A)\in H\times H\mid \det A=1\} \\ \Lpat&\cong\{(A,-\bar A)\in H\times H\mid \det A=-1\}.\end{aligned}$$
The covering map $\pi: \Lptt(\CCC) \rightarrow \Lp(\CCC)$ is still given by \eqref{tranfo}. It is four-to-one: for all $(A,B) \in \Lptt(\CCC)$, $\pi(A,B)=\pi(-A,-B)=\pi(iA,-iB)=\pi(-iA,iB)$. 
\end{example}

\begin{remnum}\label{dim3} If $\dim M=3$, then $\Lpot$ is not the universal cover $(\Lpo)^\wedge$ (which turns out to be an infinite cover of $\Lpo$). However,  it is \emph{still} true that any projective representation of $\Lpo$  comes from a representation of $\Lpot$, so there is no loss of generality in considering $\Lpot$ rather than $(\Lpo)^\wedge$. Note that any representation of $\Lpot$ determines a representation of $(\Lpo)^\wedge$, by composing with the covering map $(\Lpo)^\wedge\to \Lpot$. The claim is that every representation of $(\Lpo)^\wedge$ arises in this way.  One can check that
the map $(\Lpo)^\wedge\to \Lpot\subset \Lp(\CCC)^\wedge$ identifies $\Lp(\CCC)^\wedge$ with the complexification of $(\Lpo)^\wedge$ (compare to Lemma \ref{complexlorentzcover}). 
This means that any representation $\rho$ of $(\Lpo)^\wedge$ on $V$ extends to a representation $\rho^\CCC$ of $\Lp(\CCC)^\wedge$ on $V^\CCC$, and therefore $\rho(g)$ depends only on the image of $g$ in $\Lpot$. 
\end{remnum}

\section{A Classical PT Theorem for Spinor Fields?}
\label{holomorphic}

Having described the covering groups of $\Lp$,  we now naively attempt to generalise Theorem \ref{thm_pt_tensor} to the case of spinors. In fact, we will fail in this attempt, but the argument will lead to a generalisation of Theorems \ref{thm_SR_tensor} and \ref{thm_cpt_tensor} in the next section.

\subsection*{} Following the exposition in section \ref{pt_tensor}, we can complexify any representation $(\rho,\Lpot,V)$ to get $(\RC,\Lpotc,V^\CCC)$. Next, we wish to restrict $\RC$ to either $\Lpti$ or $\Lpt$. 
In analogy
 to Lemma \ref{feature1}, we have

\begin{lemma} \label{complexlorentzcover}
The inclusion $\Lpot\to\Lp(\CCC)^\wedge$ identifies $\Lp(\CCC)^\wedge$ with the complexification  $\Lpotc$. In particular, $\Lpti$ is a subgroup of $\Lpotc$.
\end{lemma}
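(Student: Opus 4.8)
The plan is to imitate the proof of Lemma~\ref{complexlorentz} (= Lemma~\ref{feature1}), feeding in that lemma and the fact (PT-4) that $\Lp(\CCC)^\wedge$ is a \emph{double} cover of $\Lp(\CCC)$. Write $\llll=\Lie(\Lpo)$. Since $\Lpot\to\Lpo$ is a covering, $\Lie(\Lpot)=\llll$ as well; and by Lemma~\ref{complexlorentz} together with the fact that covering maps induce isomorphisms of Lie algebras, the Lie algebra of $\Lp(\CCC)=\Lpoc$, of its universal cover $\Lp(\CCC)^\wedge$, and of $\Lpotc$ is in each case the complexification $\llll^\CCC=\CCC\otimes\llll$. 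The goal is thus to identify the two connected Lie groups $\Lpotc$ and $\Lp(\CCC)^\wedge$ compatibly with the inclusion of $\Lpot$.

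First I would produce a candidate map. The complex structure on $\Lp(\CCC)$ lifts uniquely to its universal cover, so $\Lp(\CCC)^\wedge$ is itself a complex Lie group. Hence the universal property of complexification applies to the inclusion $j\colon\Lpot\hookrightarrow\Lp(\CCC)^\wedge$: it extends uniquely to a holomorphic homomorphism $\phi\colon\Lpotc\to\Lp(\CCC)^\wedge$ with $\phi\circ\iota=j$, where $\iota\colon\Lpot\to\Lpotc$ is the canonical map. Composing the defining square $\pi\circ j=i\circ p$ (here $p\colon\Lpot\to\Lpo$ is the covering, $i\colon\Lpo\hookrightarrow\Lp(\CCC)$ the inclusion, $\pi\colon\Lp(\CCC)^\wedge\to\Lp(\CCC)$ the covering) with differentials, and using Lemma~\ref{complexlorentz} to identify $di$ with the standard real-form inclusion $\llll\hookrightarrow\llll^\CCC$, one finds that $dj\colon\llll\to\llll^\CCC$ is the standard inclusion. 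Therefore its holomorphic extension $d\phi\colon\llll^\CCC\to\llll^\CCC$ is the identity.

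It follows that $\phi$ is a local isomorphism, so its image is an open — hence closed — subgroup of the connected group $\Lp(\CCC)^\wedge$, whence $\phi$ is surjective; and $\ker\phi$ is discrete. Thus $\phi$ exhibits $\Lpotc$ as a covering group of $\Lp(\CCC)^\wedge$. But $\Lp(\CCC)^\wedge$ is simply connected (being by definition a universal cover, so this invokes nothing beyond (PT-4)), so $\ker\phi$ is trivial and $\phi$ is an isomorphism. Since $\phi$ restricts to $j$ on $\Lpot$, the inclusion $\Lpot\hookrightarrow\Lp(\CCC)^\wedge$ indeed identifies $\Lp(\CCC)^\wedge$ with $\Lpotc$. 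The ``in particular'' clause is then immediate: $\Lpti$ was \emph{defined} as the subset $\Lpot\cup\Lpati$ of $\Lp(\CCC)^\wedge$, which under this identification is a subgroup of $\Lpotc$.

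As for where the work lies: there is little genuine content here beyond Lemma~\ref{complexlorentz} and (PT-4), and the proof is mostly assembly. The two points needing care are (a) that $\Lp(\CCC)^\wedge$ really is a complex Lie group, so that the universal property of complexification can be invoked with it as target, and (b) the bookkeeping that forces $d\phi=\id$, which amounts to checking that $\Lpot$ sits inside $\Lp(\CCC)^\wedge$ as a real form exactly as $\Lpo$ sits inside $\Lp(\CCC)$. I would also remark that the argument is insensitive to the exceptional status of $\Lpot$ in dimension three (Remark~\ref{dim3}): it uses only the simple-connectedness of $\Lp(\CCC)^\wedge$, never that of $\Lpot$.
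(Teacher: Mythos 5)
Your proof is correct and follows the same route as the paper's: extend the inclusion $j\colon\Lpot\hookrightarrow\Lp(\CCC)^\wedge$ to a holomorphic homomorphism via the universal property of complexification, use the Lie-algebra computation already done in Lemma~\ref{complexlorentz} to conclude this extension is a covering map, and then invoke the simple connectedness of $\Lp(\CCC)^\wedge$ to see it is an isomorphism. The only differences are cosmetic: the paper cites its appendix item~\ref{covering_groups} for the final step rather than arguing via $\ker\phi$, and (PT-4) is not actually needed anywhere in the argument (only the fact that $\Lp(\CCC)^\wedge$ is a universal cover, not that it is a \emph{double} cover).
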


The proof is in Appendix \ref{proofs}.
The result is that we can restrict $\RC$ to a representation of $\Lpti$ (but not of $\Lpt$) on $V^\CCC$. However, this does not mean that $\Lpti$ preserves $V\subset V^\CCC$, and, in fact, the analogue of Lemma \ref{lem2} fails; rather, one has

\begin{lemma} \label{lemRepP'}
Let $(\R,\Lpot,V)$ be a representation of $\Lpot$, and $(\RC, \Lpotc,V^\CCC)$ its complexification. Decompose $V$ as $V=V_0 \oplus V_1$ where $\R(\tau)$ acts as $(-1)^n$ on $V_n$. Then $\RC(\Lpati)$ preserves $V_0$ but maps $V_1$ to $i V_1\subset V^\CCC$.
\end{lemma}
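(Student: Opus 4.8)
The plan is to follow the proof of Lemma~\ref{lem2} for the tensorial case, tracking exactly how the analogue of (PT-3) fails. Write $c$ for complex conjugation of $\Lpotc$, i.e.\ the antiholomorphic group automorphism fixing $\Lpot$ pointwise; by Lemma~\ref{complexlorentzcover} this coincides with complex conjugation of $\Lp(\CCC)^\wedge$. Write $v\mapsto v^*$ for complex conjugation of $V^\CCC$, fixing $V$. Since $\RC$ is the canonical holomorphic extension of the real representation $\R$, one has the compatibility
$$(\RC(g)v)^*=\RC(c(g))(v^*)\qquad(g\in\Lpotc,\ v\in V^\CCC),$$
proved exactly as in Example~\ref{eg_vector}: both sides are antiholomorphic in $g$ and agree for $g\in\Lpot$ (there $c(g)=g$ and $\RC(g)=\R(g)$ is real), hence agree on the connected group $\Lpotc$. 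So everything reduces to computing $c$ on $\Lpati$.

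The first substantive step is to show that $c(g)=\tau g$ for every $g\in\Lpati$ --- this is the twisted replacement for (PT-3). By (PT-3) itself, each $\pi(g)\in\Lpa$ is fixed by complex conjugation of $\Lpoc$, so $c(g)$ again lies over $\pi(g)$; hence $\chi(g):=c(g)g\inv\in\{1,\tau\}$. Moreover $\chi$ is constant on $\Lpati$: for $g,h\in\Lpati$ the product $gh$ lies in $\Lpot$, hence is $c$-fixed, so $gh=c(gh)=c(g)c(h)=\chi(g)\chi(h)\,gh$ (using that $\tau$ is central), forcing $\chi(g)=\chi(h)$. To evaluate the constant, pick a timelike and a spacelike direction and let $J\in\Lie\Lpo$ generate the associated boost, so $\widetilde{\exp}(tJ)\in\Lpot$ for real $t$, while $\exp(i\pi J)=\diag(-1,-1,1,\dots,1)$ in a suitable orthonormal basis and so lies in $\Lpa$. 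Put $g_0:=\widetilde{\exp}(i\pi J)\in\Lpati$. Since $J$ is real, $c(g_0)=\widetilde{\exp}(-i\pi J)=\tau\,\widetilde{\exp}(i\pi J)=\tau g_0$, the middle equality because the two lifts $\widetilde{\exp}(\pm i\pi J)$ differ by $\widetilde{\exp}(2\pi iJ)$, which equals $\tau$ rather than $1$: the loop $\theta\mapsto\exp(i\theta J)$, $\theta\in[0,2\pi]$, represents the nontrivial class of $\pi_1\Lp(\CCC)=\ZZZ_2$ by (PT-4). Hence $\chi(g_0)=\tau$ and so $\chi\equiv\tau$. (In four dimensions this is immediate from Example~\ref{egSpinRep1}: there $c(A,B)=(\bar B,\bar A)$, so on $\Lpati=\{(A,-\bar A):\det A=1\}$ one gets $c(A,-\bar A)=(-A,\bar A)=(-1,-1)\cdot(A,-\bar A)=\tau\cdot(A,-\bar A)$; concretely $g_0=(\diag(i,-i),\diag(i,-i))$.)

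Given this, the lemma is a short bookkeeping computation with the $\R(\tau)$-grading. Because $\tau$ is central, every $\RC(g)$ commutes with $\R(\tau)$, so $\RC(\Lpati)$ preserves the $\R(\tau)$-eigenspaces $V_0^\CCC$ and $V_1^\CCC$ of $V^\CCC$. Fix $g\in\Lpati$ and $v\in V_n$ (so $v^*=v$ and $\R(\tau)v=(-1)^nv$), and set $w:=\RC(g)v\in V_n^\CCC$. Then
$$w^*=(\RC(g)v)^*=\RC(c(g))v=\RC(\tau g)v=\R(\tau)\RC(g)v=\R(\tau)w=(-1)^nw.$$
For $n=0$ this says $w$ is real, hence $w\in V\cap V_0^\CCC=V_0$; for $n=1$ it says $w\in iV$, hence $w\in iV\cap V_1^\CCC=iV_1$. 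That is exactly the assertion.

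The only genuinely non-routine point is the identity $c(g)=\tau g$ on $\Lpati$, and within it the fact that a complex $\pi$-boost does not lift to an involution of $\Lp(\CCC)^\wedge$; the holomorphic-extension/conjugation compatibility and the eigenspace manipulation are standard and parallel the tensorial proof of Lemma~\ref{lem2}.
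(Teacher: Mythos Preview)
Your proof is correct and follows essentially the same route as the paper: first establish (PT-5), i.e.\ that conjugation acts as $g\mapsto\tau g$ on $\Lpati$, by exhibiting a complex $\pi$-boost $g_0=\widetilde{\exp}(i\pi J)$ with $g_0^2=\tau$ and $c(g_0)=g_0^{-1}$, then deduce the $V_0$/$V_1$ behaviour by the same conjugation bookkeeping as in Lemma~\ref{lem2}. Your element $g_0$ is exactly the paper's $h=R^\wedge(\pi)$ (the paper writes the one-parameter group as a rotation in the $(ie_0,e_1)$-plane rather than as $\exp(i\theta J)$, but these coincide); the only cosmetic differences are that you prove constancy of $\chi$ on $\Lpati$ by a pairwise group argument where the paper writes $g=g_0h$, and you phrase the endgame as $w^*=(-1)^nw$ where the paper phrases it as $i^nw\in V$. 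One small quibble: (PT-4) only gives $\pi_1\Lp(\CCC)\cong\ZZZ_2$, not that your particular loop $\theta\mapsto\exp(i\theta J)$ is the nontrivial element---that is a separate (standard) topological fact, and the paper likewise just calls it ``well known.''
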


Since $\RC(\Lpati)$ does not preserve all of $V\subset V^\CCC$, there is no obvious way to define a representation of $\Lpti$ on $V$, and therefore no obvious way to associate PT transformations to elements of $\Lpati$. 

\begin{rem} A representation $V_0$ on which $\tau$ acts by the identity is the same thing as a representation of $\Lpo$. Thus we can speak of $V_0$ as the space of `tensors' and $V_1$ as the space of `pure spinors.' When $V=V_0$, $V$ \emph{is} preserved by $\Lpati$, and we \emph{do} get a PT theorem -- namely, Theorem \ref{thm_pt_tensor}.
\end{rem}

\begin{rem}
Just as Lemma \ref{lem2} relied on property (PT-3), so the proof of Lemma \ref{lemRepP'} in Appendix \ref{proofs} reduces to the following fact:
\begin{softthm}
\textbf{(PT-5)} For any $g\in\Lpati$, $g^*= g\tau$.
\end{softthm}
It is automatic from (PT-1)--(PT-4) and the fact that $\Lpa$ is connected, that either (PT-5) holds or else $g^*=g$ for all $g\in\Lpati$. In the latter case, we would have obtained a classical PT theorem in analogy to Theorem \ref{thm_pt_tensor}. 
\end{rem}

\begin{example} \label{eg_spinreps}
We continue Example \ref{egSpinRep1}. Let $\rho$ be the standard representation of  $\Lpot=\sltc$ on $\CCC^2$. The complexity of $V$  is completely irrelevant (cf. Example \ref{complexKG}), so to avoid confusion,  let us write $V=\RRR^4$. For $v=(x,y,z,w)\in V$ define $[v]=(x+iy,z+iw,x-iy,z-iw)\in \CCC^4$. Then $\rho$ is  given by
$$
[\R(A)v]= \left( \begin{matrix} A & 0 \\ 0 & \bar A  \end{matrix}\right)\cdot[v].
$$
Since $\tau$ is represented by $A=-1$, we find $V=V_1$.
The complexification of $V=\RRR^4$ is $V^\CCC=\CCC^4$. 
For $v\in V^\CCC$, define 
$[v]\in\CCC^4$ as before.
Then the complexified representation $\RC$ of $\Lpotc=\Lp(\CCC)^\wedge=\sltc\times\sltc$ on $V^\CCC$ is given by
$$
[\RC(A,B)v] = \left( \begin{matrix} A & 0 \\ 0 & B  \end{matrix}\right)\cdot[v] .
$$
Following Example \ref{egSpinRep1}, $\Lpati$ consists of pairs $(A,-\bar A)$. In particular, one finds that $(1,-1)\in\Lpati$ acts on $V^\CCC$ by 
$$
\RC(1,-1)(x,y,z,w)= (iy,-ix,iw,-iz).
$$
As predicted by Lemma \ref{lemRepP'}, this maps real vectors into purely imaginary ones. 
\end{example}

\subsection*{A Holomorphic Spinorial PT Theorem}
As consolation, there is a class of field theories for which we \emph{can} define a geometric action of $\Lpti$ and prove a PT invariance 
theorem. We will sketch the idea here, but this discussion is merely an aside, and is not used in the rest of the paper. 

Suppose that our representation $(\R,\Lpot,V)$ is complex in the sense that  $V$ is a complex vector space and $\Lpot$ acts complex-linearly. 
This is already enough to define a geometric action of $\Lpti$:  there is a unique extension of $\rho$ to a holomorphic representation $(\R\Hol, \Lpotc, V)$, and, in particular, an action of $\Lpti$ on $V$.  

However, we will only get $\Lpti$-invariance for certain special field theories.
Here is one class of them.  Let $W^+\subset W$ be the subspace of complex-linear maps $V\to\CCC$, and $\KfC=\FF(W^+\otimes_\RRR TM)$ the free complex algebra generated by $W^+\otimes_\RRR TM$. This is a subalgebra of $\Kf$, and it is invariant under the classical action of $\Lpot$.  We call a formal field theory $\Df\subset\Kf$ \emph{holomorphic} if it is 
contained  in $\KfC$. The following theorem is closely parallel to Theorem \ref{thm_pt_tensor}; we omit its proof.

\begin{thm}\label{thm_pt_hol}
If a holomorphic formal field theory is invariant under 
$\RbF(\Lpot)$,  then 
it is invariant under $\RHbPF(\Lpati)$. 
\end{thm}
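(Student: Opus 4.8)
The plan is to follow the three-step structure of the proof of Theorem \ref{thm_pt_tensor} (complexification, restriction to the cover, invariance), checking at each stage that the holomorphicity hypothesis does exactly the work that (PT-3)/Lemma \ref{lem2} did in the tensorial case. First I would note that for a complex representation $(\R,\Lpot,V)$ the classical action $\RbF$ of $\Lpot$ on $\Kf$ restricts to an action on the subalgebra $\KfC=\FF(W^+\otimes_\RRR TM)$, since $\R(g\inv)$ complex-linear means $\lambda\mapsto\lambda\circ\R(g\inv)$ preserves the space $W^+$ of complex-linear functionals. Because $\R$ is holomorphic-extendable to $(\R\Hol,\Lpotc,V)$ acting on the \emph{same} space $V$ (no complexification of $V$ is needed), the classical action $\RbF|_{\KfC}$ likewise extends to a holomorphic representation $\RbFH$ of $\Lpotc$ on $\KfC$; here one uses the universal property of the free algebra together with holomorphic dependence of $\R\Hol(g)$ on $g$ to see that the induced maps on generators, hence on all of $\KfC$, depend holomorphically on $g\in\Lpotc$.

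Next comes the analogue of Lemma \ref{prop2}: if a holomorphic formal field theory $\Df\subset\KfC$ is $\RbF(\Lpot)$-invariant, then it is $\RbFH(\Lpti)$-invariant. This is the complex-analytic continuation argument: $\Df$, being a complex affine subspace of the finite-dimensional space of formulae of bounded degree, is cut out by holomorphic (indeed affine-linear) equations, so the set $\{g\in\Lpotc \mid \RbFH(g)\Df\subseteq\Df\}$ is a closed complex-analytic subset of $\Lpotc$; it contains $\Lpot$, a totally real form whose complexification (by Lemma \ref{complexlorentzcover}) is all of $\Lpotc=\Lp(\CCC)^\wedge$, so by the identity theorem it is all of $\Lpotc$, and in particular contains $\Lpti=\Lpot\cup\Lpati$.

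Finally I would prove the analogue of Lemma \ref{prop3}: $\RbFH=\RHbPF$ as representations of $\Lpti$ on $\KfC$. On field symbols $\uPhi^\lambda_{\xi_1\cdots\xi_n}$ with $\lambda\in W^+$, both sides send this to $\uPhi^{\lambda\circ\R\Hol(g\inv)}_{\omega'(g)\xi_1\cdots\omega'(g)\xi_n}$ — for $\RbFH$ because that is how the holomorphic extension acts, for $\RHbPF$ because $\rho'$ restricted to $\Lpti$ is by definition (the relevant restriction of) $\R\Hol$ and $\omega'$ is the standard action — and both are algebra automorphisms, so they agree on all of $\KfC$. Combining the two lemmas gives $\RHbPF(\Lpati)$-invariance of $\Df$. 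The main obstacle is the analytic-continuation step: one must be slightly careful that $\Df$ need not be finite-dimensional, so the argument should be run degree-by-degree (the grading of $\Kf=\FF(W\otimes_\RRR TM)$ by word length is preserved by all the maps in play), reducing to the finite-dimensional case where the identity theorem applies cleanly; the rest is bookkeeping essentially identical to section \ref{pt_tensor}, which is why the authors omit it.
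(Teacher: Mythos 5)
Your proposal is correct, and its overall architecture (complexify, restrict, show the two representations agree) mirrors what the paper does for Theorem \ref{thm_pt_tensor}; the paper omits the proof precisely because it is ``closely parallel.'' There is, however, one genuine methodological divergence worth flagging. For the analogue of Lemma \ref{prop2} you reach for the identity theorem: the locus $\{g\in\Lpotc\mid\RbFH(g)\Df\subseteq\Df\}$ is a closed analytic set containing the totally real form $\Lpot$, hence all of $\Lpotc$. This is valid (and is exactly the analytic-continuation mechanism that powers the axiomatic Wightman--Jost CPT argument), but the paper's own toolkit makes it unnecessary: Lemma \ref{lem4} already states, for an arbitrary connected Lie group and complex-linear representation, that invariance of a complex affine subspace under $R(G)$ entails invariance under $R\Hol(G^\CCC)$, and it applies verbatim with $G=\Lpot$, $A=\KfC$, $A'=\Df$. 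The paper's proof of Lemma \ref{lem4} is purely algebraic (the holomorphic subgroup $H\subset\GL_\CCC(A)$ of automorphisms preserving $A'$, plus uniqueness of the holomorphic extension of a homomorphism), and so avoids any discussion of analytic sets, the quasi-finite bookkeeping you rightly worry about at the end, or the totally-real-form input. Your degree-by-degree caveat is in the right spirit but is not quite correct as stated---$\Df$ itself is affine, not graded, so it does not decompose by word length---and the paper's quasi-finite formalism (\S\ref{quasifinite}) is the cleaner way to reduce to finite dimensions. The prop3-analogue step is fine and matches the paper's Lemma \ref{prop3} argument: both $\RbFH$ and $\RHbPF$ act on a generator $\uPhi^\lambda_{\xi_1\cdots\xi_n}$ with $\lambda\in W^+$ by $\lambda\mapsto\lambda\circ\rho\Hol(g^{-1})$ and $\xi_i\mapsto\omega'(g)\xi_i$, and both are algebra maps, so they coincide on $\KfC$. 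In short: same skeleton, but you re-derive Lemma \ref{prop2} via analytic continuation where the paper would simply invoke its already-proved and more elementary Lemma \ref{lem4}.
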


 However, most theories of physical interest are not `holomorphic' in this sense. If the original representation $\R$ is not complex (e.g. the case of Majorana spinors), then Lemma \ref{complexlorentzcover} tells us that
$\Lpati$  takes fields with values in $V_1$ to fields with values in $iV_1$ -- i.e., outside the original space of kinematically allowed fields. Similarly, if the original representation $\R$ is complex but the dynamics cannot be described by a subspace of $\KfC$, then $\Lpati$ preserves the kinematical space, but not the dynamical subspace. Theorem \ref{thm_pt_hol} therefore does not yield an invariance theorem for field theories that are `non-holomorphic' in either of these two ways.

\begin{example}\label{egDirac}
For the standard theory of the Dirac equation, the space $V$ is a 4-dimensional complex vector space of `Dirac spinors.' The classical field theory given by the Dirac equation \eqref{DiracEq} 
can be modelled by a holomorphic  formal field theory, since the equation depends complex-linearly on the field $\psi$. Corresponding to the total reflection in $\Lp$, there are two elements of $\Lpati$, which yield classical PT transformations $\psi\mapsto \pm\gamma^5\psi=\pm i\gamma^0\gamma^1\gamma^2\gamma^3\psi$. It is easy to check that these are symmetries of the Dirac equation.
On the other hand, consider the constraint
$$\bar\psi\psi=1$$
in the standard Dirac-bilinear notation. This cannot be modelled by a holomorphic formal field theory, because $\bar\psi$ depends anti-linearly on $\psi$.  And in fact our PT transformation $\gamma^5$ is not a symmetry:   $\bar\psi\psi\mapsto -\bar\psi\psi$, whereas $1\mapsto 1$.  (Similarly, Theorem \ref{thm_pt_hol} cannot be applied to the standard Dirac Lagrangian.)
\end{example}

\section{Strong Reflection, PT, and CPT Invariance for Spinors}
\label{SR_spinor}

At last we turn to our main results, previewed in sections \ref{scpt_preview}--\ref{SRpreview}. Throughout we assume that $\Lpot$ acts geometrically via any representation $\rho$ on $V$ and the standard representation $\omega$ on $M$.
\subsection*{Extending Representations}
In the last section we tried and (in general) failed to define classical PT transformations for spinor fields.   Recall that $V$ splits as a direct sum $V=V_0\oplus V_1$, where $\rho(\tau)$ acts by $(-1)^n$ on $V_n$. The problem was that candidate transformations corresponding to elements of $\Lpati$
do not preserve the real space $V$ of field values, instead mapping the space $V_1$ of pure spinors into $iV_1$ (Lemma \ref{lemRepP'}).  
However, this very fact allows us to define a representation of $\Lpt$ (as opposed to: $\Lpti$) that \emph{does} preserve $V$:   

\begin{definition} \label{def2}
Let $(\R, \Lpot, V)$ be any representation of $\Lpot$. Define a representation $(\RP,\Lpt,V)$, extending $\R$, by:
	$$\RP(g)v  =  \begin{cases} \phantom{i^n}\RC(g)v &\mbox{for $g \in \Lpot$} \\ i^n \RC (I^{-1}g)v &\mbox{for $g \in \Lpat$ and $v\in V_n$}. \end{cases} 
$$
\end{definition}
It follows from Example \ref{eg_vector} that if $\rho=\omega$ is the standard representation of $\Lpot$ on $M$, then $\omega'$ is the standard representation of $\Lpt$ on $M$ -- that is, $\omega'$ is  just the projection $\Lpt\to\Lp\subset\GL(M)$.

\begin{example}
We continue Example \ref{eg_spinreps}. Recall (Example \ref{egSpinRep1}) that $\Lpati$ is represented by matrix pairs $(A, -\bar A)$ with $A\in\sltc$,  and $I$ by the pair of scalar matrices $(i,-i)$. Thus $\Lpat=I\cdot \Lpati$ is represented by matrix-pairs $(iA,i\bar A)$.  
Thus the action $\RP$ of $\Lpat$ on $V^\CCC$ is given by 
$$
[\RP(iA,i\bar A)v] = i\left( \begin{matrix} A & 0 \\ 0 & -\bar A  \end{matrix}\right)\cdot[v] .
$$
In particular, one finds that $(i,i)\in\Lpat$, corresponding to a total reflection of $M$,  acts on $V$ by 
$
\RP(i,i)(x,y,z,w)= (-y,x,-w,z).
$
\end{example}

Having shown how to extend geometric actions of $\Lpot$ to $\Lpt$,  we can at least \emph{formulate} analogues of Theorems \ref{thm_pt_tensor} and \ref{thm_SR_tensor}.  However, one cannot expect a direct generalisation of Theorem \ref{thm_pt_tensor} actually to hold, because $\RP$ is not merely  a restriction of the complexification of $\R$. 
 It turns out that we can nonetheless get a direct generalisation of Theorem \ref{thm_SR_tensor},
with the assumption of commutativity replaced by \emph{supercommutativity}, which we now explain.

\subsection*{Supercommutativity.} 
If $V=V_0$, then supercommutativity is just commutativity, as in section \ref{SR_tensor}. If $V=V_1$, we impose instead \emph{anti-commutativity},
$$\uPhi^{\lambda}_{\xi_1\cdots\xi_m}
\uPhi^{\mu}_{\eta_1\cdots\eta_n}=-
\uPhi^{\mu}_{\eta_1\cdots\eta_n}
\uPhi^{\lambda}_{\xi_1\cdots\xi_m}.$$
In general, the decomposition $V=V_0\oplus V_1$ leads to a decomposition 
$W=W_0\oplus W_1$, where $W_n=\Hom(V_n,\CCC)$. 
Then supercommutativity means that 
\be\label{superc}\uPhi^{\lambda}_{\xi_1\cdots\xi_m}
\uPhi^{\mu}_{\eta_1\cdots\eta_n}=
(-1)^{ab}
\uPhi^{\mu}_{\eta_1\cdots\eta_n}
\uPhi^{\lambda}_{\xi_1\cdots\xi_m}\ee
holds for all $\lambda\in W_a$ and $\mu\in W_b$. The relations \eqref{superc} define the
\emph{free supercommutative algebra} $\Kfs=\FF_s(W\otimes_\RRR TM)$ (\ref{free_algebras} and \ref{supercommutativity}).
Thus we can define a \emph{supercommutative formal field theory}
 to be a complex affine subspace $\Df\subset\Kfs$. As with commutative theories, we can consider a supercommutative formal field theory to be a special kind of formal field theory in the original sense, using the map $\Kf\to\Kfs$ that conflates all formulae related by supercommutation \eqref{superc}. 
 
Supercommutativity is our version of the full spin-statistics connection.%
\footnote{\label{spinstatfn}
There are several closely related statements that can be called `the spin-statistics connection.'  In our approach, we formalize it by taking the theory-specifying differential formulae to live in the supercommutative algebra $\Kfs$. This agrees with the functional-integral approach to QFT, in which the Lagrangian density is interpreted by means of  
Grassmann-valued fields, which supercommute exactly as we have described. 

From another point of view, however, our approach may \emph{seem} to involve a false premiss. If we are to interpret the field symbols $\uPhi^\lambda_{\xi_1\cdots\xi_n}$ as fields, then, on the face of it, we seem to claim that the values of these fields at any given point commute or anti-commute. This is of course false of the operator-valued fields of QFT, where commutators (or anti-commutators) vanish only at space-like separations.
The key to resolving this apparent contradiction is to remember that one \emph{cannot} simply multiply together quantum field components at a single point: such products are not usually well defined. One must regularize these products in some way, and whatever method is used should ultimately reproduce the supercommutativity seen in the functional integral approach.   For example, in 
the interaction picture of section \ref{classical_and_quantum}, 
the interaction density is not simply a sum of products of free quantum fields and their spacetime derivatives, but, rather, the \emph{normal-ordered} counterpart of such an expression. \emph{And field operators do strictly supercommute within normal-ordered expressions.}} 
It has a natural interpretation, and independent motivation via the spin-statistics theorem, in the quantum case. In contrast, our discussion of classical field theories in section \ref{ffts} leads to purely commutative rather than supercommutative formal field theories, since the derived components of classical fields commute. 
Nonetheless, it is possible to make some sense of supercommutative classical spinorial field theories. First a trivial but important example: field theories determined by \emph{linear} dynamical equations can be modelled in this way (see Remark \ref{classicalsc} below).  
In the absence of further compelling examples, we only sketch one general approach, which mirrors the non-commutativity of quantum fields.  Suppose that $\Aa=\Aa_0\oplus\Aa_1$ is a supercommutative algebra; let $\KK=\Cinf(M,\Aa_0\otimes_\RRR V_0\oplus \Aa_1\otimes V_1)$. Then the derived components of any $\Phi\in\KK$ are functions with values in $\Aa^\CCC$,  and thus supercommute.

\subsection*{Invariance}
We arrive at our main results:
\begin{thm}[Strong Reflection Invariance]\label{thm_SR_spinor}   
If a supercommutative formal field theory is invariant under $\RbF(\Lpot)$, then it is invariant under $S \circ \RPbPF (\Lpat)$.
\end{thm}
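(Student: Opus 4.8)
The plan is to imitate the proof of Theorems \ref{thm_pt_tensor} and \ref{thm_SR_tensor}, but to run the argument inside the supercommutative algebra $\Kfs$ and with the pair $\Lpa\subset\Lpoc$ of the tensor case replaced by $\Lpati\subset\Lpotc$. The opening remark is that $\rho(\Lpot)$ and its complexification $\RC(\Lpotc)$ preserve the splitting $V^\CCC=V_0^\CCC\oplus V_1^\CCC$ into $\RC(\tau)$-eigenspaces (for $V^\CCC$ because $\tau$ is central; compare Lemma \ref{lemRepP'}). Consequently the classical action $\RbF$ of $\Lpot$, its holomorphic extension, and the anti-automorphism $S$ all descend to $\Kfs$; $S$ then acts on a product of $m$ odd field symbols by the Koszul sign $(-1)^{m(m-1)/2}$, while $\RbF(\tau)$ acts as the parity involution.

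First I would establish the analogue of Lemma \ref{prop2}. By Lemma \ref{complexlorentzcover}, $\RbF$ extends to a holomorphic representation $\RbFH$ of $\Lpotc=\Lp(\CCC)^\wedge$ on $\Kfs$, and $\Lpti=\Lpot\cup\Lpati$ lies in $\Lpotc$. Exactly as in Lemma \ref{prop2} (a holomorphic function on $\Lpotc$ vanishing on the real form $\Lpot$ vanishes identically), $\RbF(\Lpot)$-invariance of $\Df\subset\Kfs$ upgrades to $\RbFH(\Lpotc)$-invariance, hence to $\RbFH(\Lpati)$-invariance. I would also record that, since $\RbF(\tau)$ is the parity involution, $\RbF(\Lpot)$-invariance forces $\Df=\Df^{\mathrm{ev}}+U^{\mathrm{odd}}$, where $\Df^{\mathrm{ev}}=\Df\cap\Kfs^{\mathrm{ev}}$ is a complex affine subspace and $U^{\mathrm{odd}}$ is the odd part of the linear part of $\Df$; both are preserved by the parity-respecting maps $\RbFH(\Lpati)$.

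The crux is the analogue of Lemma \ref{prop3}. Fix $g\in\Lpat$ and set $h:=I^{-1}g\in\Lpati\subset\Lpotc$; since $\pi(h)=\pi(g)\in\Lpa$, the maps $\omega'(g)$ and $\omega^\CCC(h)$ agree on $M\subset M^\CCC$ (Example \ref{eg_vector}). Evaluating on a field symbol $\uPhi^\lambda_{\xi_1\cdots\xi_n}$ with $\lambda\in W_a$, one uses Definition \ref{def2}, the identity $I^{-1}g^{-1}=h^{-1}\tau$, and the fact that $\RC(\tau)$ acts by $(-1)^a$ on $V_a^\CCC$, to obtain $\RP(g^{-1})|_{V_a}=(-i)^a\,\RC(h^{-1})|_{V_a}$ and hence $\RPbPF(g)(\uPhi^\lambda_{\xi_1\cdots\xi_n})=(-i)^a\,\RbFH(h)(\uPhi^\lambda_{\xi_1\cdots\xi_n})$. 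As $\RPbPF(g)$ and $\RbFH(h)$ are $\ZZZ_2$-graded algebra automorphisms agreeing up to these scalars on generators, $\RPbPF(g)=A\circ\RbFH(h)$, where $A$ is the algebra automorphism of $\Kfs$ multiplying the $m$-odd-symbol part by $(-i)^m$. Applying $S$ and using $(-i)^m(-1)^{m(m-1)/2}=1$ for $m$ even and $-i$ for $m$ odd, I get $S\circ\RPbPF(g)=B\circ\RbFH(h)$, with $B$ the identity on $\Kfs^{\mathrm{ev}}$ and multiplication by $-i$ on $\Kfs^{\mathrm{odd}}$. Since $B$ commutes with the parity-preserving $\RbFH(h)$, and $\RbFH(h)$ preserves $\Df^{\mathrm{ev}}$ and $U^{\mathrm{odd}}$ while $B$ fixes $\Df^{\mathrm{ev}}$ and merely rescales the linear subspace $U^{\mathrm{odd}}$, the composite $S\circ\RPbPF(g)$ preserves $\Df=\Df^{\mathrm{ev}}+U^{\mathrm{odd}}$; letting $g$ range over $\Lpat$ finishes the proof.

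The step I expect to be the main obstacle is this last computation: one must keep three independent scalars straight --- the $i^n$-twist built into $\RP$ on $\Lpat$ in Definition \ref{def2}, the $(-1)^a$ from commuting $\RC(\tau)$ past $\RC(h^{-1})$, and the Koszul sign produced by $S$ reversing products in $\Kfs$ --- and verify that it is precisely the \emph{supercommutativity} hypothesis that makes them collapse to the benign operator $B$ (the leftover scalar $-i$ on the odd sector being harmless only because the odd part of an $\Lpot$-invariant formal field theory is a \emph{linear} subspace). Everything else should be a faithful transcription of the tensor argument of Section \ref{pt_tensor}, with $\Lpa\subset\Lpoc$ replaced throughout by $\Lpati\subset\Lpotc$.
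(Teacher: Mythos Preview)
Your proposal is correct and follows essentially the same three-step architecture as the paper's proof: (1) upgrade $\RbF(\Lpot)$-invariance to $\RbFH(\Lpati)$-invariance via Lemma \ref{lem4}, (2) compute that $S\circ\RPbPF(g)$ differs from $\RbFH(I^{-1}g)$ only by a scalar depending on the odd-symbol degree $m$, and (3) show that this leftover scalar preserves $\Df$. The only real difference is in step (3): where you decompose $\Df=\Df^{\mathrm{ev}}+U^{\mathrm{odd}}$ and argue that $B$ preserves each summand (using that $U^{\mathrm{odd}}$ is a complex \emph{linear} subspace), the paper instead observes the identity $\alpha(F)=\tfrac{1+i}{2}F+\bigl(1-\tfrac{1+i}{2}\bigr)\RbF(\tau)F$ and invokes complex-affineness of $\Df$ together with $\RbF(\tau)$-invariance directly, avoiding any explicit decomposition. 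Both arguments encode the same content --- that $\tau$-invariance and complex-affineness together absorb the residual phase on the odd sector --- so this is a packaging difference rather than a different route.
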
 

Strong reflection invariance entails PT and CPT theorems, by the same arguments as in section \ref{SR_tensor}. To spell things out,  we consider, as in section \ref{SR_tensor}, an arbitrary complex-linear or anti-linear involution $\$$ of $W$, and we extend this to an automorphism $C_\$$ and an anti-automorphism  $\dagger_\$$ of $\Kf$. Then it is easy to deduce

\begin{thm}[General PT/CPT Theorem]\label{thm_cpt_spinor}   
Suppose that a supercommutative formal field theory is invariant under  $\RbF(\Lpot)$. Then it is invariant under $C_\$ \circ \RPbPF (\Lpat)$ if and only if it is  $\$$-Hermitian.
\end{thm}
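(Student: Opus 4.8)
The plan is to obtain Theorem~\ref{thm_cpt_spinor} from the Strong Reflection Invariance theorem~\ref{thm_SR_spinor} by exactly the formal argument that passes from Theorem~\ref{thm_SR_tensor} to Theorem~\ref{thm_cpt_tensor}: all of the geometric and group-theoretic content has already been spent in proving Theorem~\ref{thm_SR_spinor}, and what remains is bookkeeping with the maps $S$, $C_\$$, $\dagger_\$$ and $\RPbPF$ on $\Kf$. First I would record the relevant identities. Since $\$$ is an involution of $W$, the induced algebra automorphism satisfies $C_\$\circ C_\$=C_{\$\circ\$}=C_{\id}=\id$, so $C_\$$ is an involution of $\Kf$; combined with the defining relation $\dagger_\$=S\circ C_\$=C_\$\circ S$ this gives the factorisation $S=\dagger_\$\circ C_\$$. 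Hence, for every $g\in\Lpat$, the strong reflection $S\circ\RPbPF(g)$ equals $\dagger_\$\circ(C_\$\circ\RPbPF(g))$.

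Next, let $\Df$ be a supercommutative formal field theory invariant under $\RbF(\Lpot)$. By Theorem~\ref{thm_SR_spinor}, $\Df$ is invariant under $S\circ\RPbPF(\Lpat)$, hence under $\dagger_\$\circ(C_\$\circ\RPbPF(g))$ for every $g\in\Lpat$. Since $\dagger_\$$ and each $C_\$\circ\RPbPF(g)$ are invertible transformations of $\Kf$ (the former because $\dagger_\$\circ\dagger_\$=\id$, the latter because $C_\$$ is an involution and $\RPbPF(g)$ lies in a representation), invariance of $\Df$ under their composite makes $\dagger_\$$-invariance equivalent to $C_\$\circ\RPbPF(g)$-invariance -- one direction by writing $\dagger_\$=(S\circ\RPbPF(g))\circ(C_\$\circ\RPbPF(g))\inv$, the other symmetrically. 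Letting $g$ range over $\Lpat$, we conclude that $\Df$ is invariant under $C_\$\circ\RPbPF(\Lpat)$ if and only if it is $\dagger_\$$-invariant, that is, $\$$-Hermitian, which is the assertion of the theorem.

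I do not expect a genuine obstacle in this deduction: it is purely formal and essentially a transcription of the proof of Theorem~\ref{thm_cpt_tensor}. The points worth a line of care are merely that $C_\$$, and hence $\dagger_\$$, really is a well-defined (anti-)automorphism of $\Kf$ -- immediate from the standing hypothesis that $\$$ is complex-linear or anti-linear, since $C_\$$ is built by acting on field symbols only -- and that the orders of composition are read consistently. All of the substance -- in particular the fact that $\RP$ is \emph{not} merely a restriction of a complexification of $\rho$, and the sign bookkeeping needed to see that a supercommutative $\Df$ is still preserved by $S$ on $\Kfs$ (where, unlike in the commutative case underlying Theorem~\ref{thm_SR_tensor}, $S$ is \emph{not} the identity) -- is already absorbed into Theorem~\ref{thm_SR_spinor}, which we are entitled to assume.
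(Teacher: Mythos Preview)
Your proposal is correct and follows essentially the same route as the paper: the paper explicitly states that the deduction of Theorem~\ref{thm_cpt_spinor} from Theorem~\ref{thm_SR_spinor} is completely parallel to that of Theorem~\ref{thm_cpt_tensor} from Theorem~\ref{thm_SR_tensor}, namely rewriting $S\circ\RPbPF(g)=\dagger_\$\circ(C_\$\circ\RPbPF(g))$ and reading off the equivalence. Your write-up simply makes the bookkeeping (that $C_\$$ is an involution, hence $S=\dagger_\$\circ C_\$$, and that all maps involved are invertible) more explicit than the paper's one-line version.
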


Note that Theorems \ref{thm_SR_spinor} and \ref{thm_cpt_spinor} subsume Theorems \ref{thm_SR_tensor} and \ref{thm_cpt_tensor}, which correspond to the the special case $V=V_0$.
The proof of Theorem \ref{thm_SR_spinor} is in Appendix \ref{proofs}. The deduction of Theorem \ref{thm_cpt_spinor} from Theorem \ref{thm_SR_spinor} is completely parallel to the deduction of Theorem \ref{thm_cpt_tensor} from Theorem \ref{thm_SR_tensor}.

\subsection*{PT and CPT} The same argument as in section \ref{pt_tensor} shows that if $\RbF(\Lpot)$ is charge-preserving, then $\RPbPF(\Lpat)$ is too.  To spell it out: since each sector $W^\epsilon\subset\Kfs$ is assumed $\RbF(\Lpot)$-invariant, it is $S\circ\RPbPF(\Lpat)$-invariant, by Theorem \ref{thm_SR_spinor}; but it is obviously $S$ invariant, so it must be $\RPbPF(\Lpat)$-invariant. Thus $\RPbPF(\Lpat)$ is charge-preserving, as claimed.

   As in section \ref{SR_tensor}, a quantum CPT theorem is recovered from Theorem \ref{thm_cpt_spinor} by setting $\$=*$; for $\$=\id, \#, *\#$ we obtain, respectively, a classical PT theorem, a classical CPT theorem, and a quantum PT theorem. The quantum CPT theorem is the most important of these: it is \emph{the} CPT theorem of Lagrangian QFT, in which its premisses (supercommutativity and $*$-Hermiticity) are widely accepted. 

\begin{example}
Again the particular spin-statistics connection that we have assumed is indeed required for Theorem \ref{thm_cpt_spinor}. Suppose instead we assumed that spinors \emph{commute} with one another. Consider the equation
\be
\bar\psi\psi  = 1, \label{eg_spinor}
\ee
where $\psi$ is a Dirac spinor field (cf. Example \ref{egDirac}).
The total reflection in $\Lp$ corresponds to two elements of $\Lpat$, which act on $\psi$ by $\psi\mapsto\pm i\gamma^5\psi$ under $\RPbPF$.
 But if spinors commute then 
under the CPT transformation $\psi \mapsto (i\gamma^5 \psi)^*$ we have $\bar\psi\psi\mapsto - \bar\psi\psi$ (cf. the appeal to fermion anti-commutation in equation (3.147) of \cite{psQFT}).
Hence, \eqref{eg_spinor} transforms to $-\bar\psi\psi=1$, which 
is actually incompatible with \eqref{eg_spinor}.
\end{example}

\begin{remnum}\label{classicalsc}\label{freesym3}The classical PT theorem of section \ref{SR_tensor} applied only to commutative tensor fields, for which the requirement of $\id$-Hermiticity is trivial. It is  no longer trivial for spinor fields, although it holds for a wider class than merely tensor fields. For example, suppose that a classical spinorial field  theory $\DD$ is specified by \emph{linear} differential formulae, like the free Dirac equation.  The span $\Df\subset\Kf_s$ of those linear formulae is an $\id$-Hermitian,  supercommutative formal field theory, and its classical spacetime symmetries correspond exactly to spacetime symmetries of $\DD$. Thus if $\DD$ is $\Lpo$-invariant, so is $\Df$, and our present classical PT theorem predicts PT invariance.  Note that in this case, we have made $\Df$ supercommutative in order to apply the theorem, but this supercommutativity is irrelevant to the interpretation of $\Df$ as a classical field theory: for linear equations, there is no substantial question of commutativity or supercommutativity, since there are no products to commute or supercommute. 

\end{remnum}

\subsection{Symmetries of free quantum theories.}
 \label{freesym2} Following the discussion in section \ref{classical_and_quantum}, it is useful to explain separately how Theorem \ref{thm_cpt_spinor} yields symmetries of \emph{free} quantum field theories.  Recall that the free theory is specified by a quadratic Lagrangian density, giving rise to linear field equations. The Hilbert space is related by a Fock space construction to the classical theory defined by these linear equations.  As explained in \cite{walAnt}, the construction is such that classical and quantum $\Lpo$-invariance are equivalent, and classical PT invariance is equivalent to quantum CPT invariance. So we can argue as follows. If the free quantum theory is $\Lpo$-invariant, so is the classical theory. Our classical PT theorem (which applies by Remark \ref{freesym3}) then predicts classical PT invariance, which implies quantum CPT invariance. (The hypothesis that the Lagrangian density is Hermitian is implicit in this story.  For one thing, it guarantees that there are enough solutions to the classical field equations. It is also used to define the inner product on the Hilbert space.)

A similar argument establishes that a free QFT is PT invariant if the free Lagrangian density is $*\#$-Hermitian. This Hermiticity implies that the system of linear field equations is $*\#$-Hermitian as well. Now, classical field equations can always be written using only real coefficients (cf. the discussion around \eqref{DiracEq}). This shows that the system of field equations is $C_*$-invariant, hence $\#$-Hermitian, hence, by Theorem \ref{thm_cpt_spinor}, classically CPT invariant. And this implies that the free QFT is PT invariant.

\section{Other Spacetimes, Other Groups}\label{axioms}

Our theorems apply in principle to other spacetimes besides Minkowski space, and to other groups besides the Lorentz group. Any group $\Lp=\Lpo\cup\Lpa$ will satisfy `tensorial' invariance theorems like Theorems \ref{thm_pt_tensor}--\ref{thm_cpt_tensor} as long as it satisfies conditions (PT-1)--(PT-3) of section \ref{pt_tensor}. We will obtain  `spinorial' invariance theorems like Theorems \ref{thm_SR_spinor}--\ref{thm_cpt_spinor} if $\Lp$ also satisfies (PT-4) of section \ref{covers} and (PT-5) of section \ref{holomorphic}.

In this section we consider one setting to which our results extend, and several to which they don't. Of course, we have not shown that properties (PT-1)--(PT-3) are \emph{necessary} for any invariance theorem to hold.  However, in cases where these axioms fail, we have no general grounds to \emph{expect} invariance theorems, and understanding how they fail provides a conceptual complement to ad hoc counterexamples.

\subsection*{General Non-Euclidean Signatures} 
Let $M$ be a vector space with any non-Euclidean inner product (i.e. at least one `timelike' and one `spacelike' dimension), and $\dim M\geq 3$. Let 
$\Lp$ be the group of linear isometries of $M$ with determinant $+1$. Then $\Lp=\Lpo\cup\Lpa$ has two components, and satisfies (PT-1)--(PT-5); our constructions and Theorems \ref{thm_pt_tensor}--\ref{thm_cpt_spinor} work without change. However, there are two conceptual points to make. First, $\Lpa$ consists of transformations that reverse parity and the \emph{orientation} of time. We must understand `time-reversing' in Definition \ref{f_defnq} in this sense. Note that it is possible for some timelike vectors to be reversed but others fixed: for example, with two timelike and one spacelike direction, $(t_1,t_2,x)\mapsto(t_1,-t_2,-x)$ defines an element of $\Lpa$ that fixes the timelike vector $(1,0,0)$. 
The second point is that the preimage $\Lpot$ of $\Lpo$ in $\Lp(\CCC)^\wedge$ may not be the universal cover of $\Lpo$. However, as in Remark \ref{dim3}, there is no loss of generality in using $\Lpot$ instead of $(\Lpo)^\wedge$.

\subsection*{No CPT Theorem in Dimension 2}  Let $M$ be two-dimensional Minkowski space. Then $\Lp=\Lpo\cup\Lpa$ satisfies (PT-1) but not (PT-2) or (PT-3). In fact, if $g\in\Lpoc$ is fixed by complex conjugation, then $g\in\Lpo$.  This is closely related to the failure of  (PT-4): the universal cover of $\Lpoc$ is an infinite cover. Thus we do not expect any PT or CPT theorem;  the following example shows that there \emph{cannot} be one. 

\begin{example} 
Choose a null-vector $\xi\in M$.  Then for each $g\in\Lpo$ there is some $j(g)\in\RRR$ such that $g\cdot\xi=e^{j(g)}\xi$. This gives an isomorphism $j\colon\Lpo\to\RRR$.  Let $(\R,\Lpo,V)$ be the one-dimensional representation of $\Lpo$ given by $V=\RRR$, 
$\R(g)v=e^{j(g)/4}v$.  Consider the classical field theory given by the single 
dynamical equation
$$\Phi^3\partial_\xi\Phi=1.$$ 
It is easy to check that this equation is $\Lpo$-invariant. 
However, if there is a PT (or CPT) transformation, it must act on $V$ by some scalar $\alpha\in\RRR$, and transform $\partial_\xi\mapsto-\partial_\xi$. Then it transforms our dynamical equation by 
$\Phi^3 \partial_\xi\Phi \mapsto -\alpha^4\Phi^3\partial_{\xi}\Phi.$
Our theory will only be PT invariant if $\alpha^4=-1$; this is impossible since $\alpha$ is real. Nor can we salvage the situation by the use of anti-commuting fields.
\end{example}

\subsection*{No P or T Theorems}
If in Minkowski space we consider $\Lpo\cup\Lio$ or $\Lpo\cup\Lia$,
then  condition (PT-2) fails to hold: the only conjugation-fixed elements of $\Lpoc$ are elements of $\Lp$.  Thus we do not expect invariance under (C)P or (C)T transformations, and it is well known that such symmetries fail to exist in some physical examples.

\subsection*{No Galilean Theorems} Let $M$ be a Galilean spacetime of dimension $d$, with a fixed origin. In detail, the data is a quadruple $(M,M_s,\eta_s,\eta_t)$, where $M$ is a vector space, $M_s\subset M$ is a  hyperplane, $\eta_s$ is a Euclidean metric on `space' $M_s$, and $\eta_t$ is a Euclidean metric on `time' $M/M_s$. As in Minkowski space, the automorphism group $L$ of $(M,M_s,\eta_s,\eta_t)$ has four components, $\Lpo\cup\Lio\cup\Lia\cup\Lpa$. In this case, the only conjugation-fixed elements of $\Lpoc$ lie in $\Lpo$ itself. Thus we do not expect general (C)PT, (C)T, or (C)P theorems in Galilean spacetime; the following example shows that there cannot be one.
\begin{example} 
Consider fields taking values in $V=\RRR$, with the trivial action of $\Lpo$. Choose a basis $\xi_0,\xi_1,\ldots,\xi_{d-1}$ for $M$ with $\xi_1,\ldots\xi_{d-1}\in M_s$. Then the classical field theory given by the differential equations
$$\partial_{\xi_0}\Phi=\Phi\qquad\partial_{\xi_1}\Phi=\cdots=\partial_{\xi_{d-1}}\Phi=0$$
is $\Lpo$-invariant but \emph{cannot} be invariant under any transformation that reverses time. In particular, this rules out PT and CPT theorems.
\end{example}

\section{Conclusions}
\label{conclusion}

The CPT theorem follows from the following assumptions:
\ben
\item The theory is specified by polynomials in the fields and their spacetime derivatives, i.e. by a formal field theory $\Df$.
\item $\Df$ is invariant under a classical/quantum action of the proper orthochronous Lorentz group, or of a covering group (the classical and quantum actions being equivalent for the orthochronous group). 
\item The spin-statistics connection holds (in that $\Df$ is supercommutative).
\item $\Df$ is invariant under Hermitian conjugation. 
\een
These four assumptions lead to the CPT invariance of $\Df$; there is a further interpretive assumption that this implies the CPT invariance of the underlying (non-formal) theory.

We are not aware of any way of substantially weakening these assumptions while retaining the CPT result. There are two partial exceptions. 
First, it is possible to treat some non-polynomial theories (see Example \ref{Sine-Gordon}).
Second, one can replace the proper Lorentz group by any group satisfying our axioms (PT-1)--(PT-5) (see section \ref{axioms}).%

The proof of the CPT theorem given in the present paper differs from those in Lagrangian QFT texts primarily in its completeness and its reliance on basic geometric facts.  The usual approach works only in $3+1$ spacetime dimensions, and typically restricts consideration to scalars, vectors, and Dirac spinors.  It introduces apparently \emph{ad hoc} C, P, and T transformations in these cases, and establishes CPT invariance by computing its effects on an exhaustive list of sixteen `Dirac bilinears.' These calculations are then (if ever) generalised by appealing to  a detailed classification of representations of the four-dimensional Lorentz group. In contrast, we use the general notion of \emph{complexification} 
to define and study canonical CPT transformations, at no point appealing to exhaustive classifications. As a consequence, our proof applies at once to representations of higher spin, and to Weyl as well as Dirac spinors for the case of spin $\frac{1}{2}$. It also applies in any spacetime that shares the essential group-theoretic facts: it works for Minkowski space  and, with only minor modifications, for any non-Euclidean signature in dimension at least 3.   Our approach also differs from the axiomatic treatments of e.g. Streater and Wightman in that it applies directly to the the kind of Lagrangian quantum field theories that enjoy empirical success.

Our proof of the quantum CPT theorem also establishes quantum PT, classical PT and classical CPT theorems. All are equally theorems; the reason for regarding the quantum CPT result as the most interesting of the four is (`only') that the required supercommutativity assumption has independent motivation in the quantum but not in the classical case, while the required Hermiticity assumption has independent motivation in the quantum CPT but not in the quantum PT case.

\appendix

\section{Mathematical Background}
\label{maths}

\def\header#1{\bigskip\noindent{\textsc{\large #1}}}

In this appendix we recall the mathematical background used in the main text. The aim is to point out the important ideas, rather than to provide a detailed exposition. In doing so, we fix some useful terminology and notation.

\header{I. Representations and Algebras}
\subsection{Vector Spaces}\label{vector_spaces}

All vector spaces we consider will be \emph{real},  and all maps between vector spaces are \emph{real-linear}. We also consider complex vector spaces, which may be considered as real ones with additional structure. If $V,V'$ are complex vector spaces, then a (real-linear) map $f\colon V\to V'$ is \emph{complex-linear} if $f(iv)=if(v)$ for all $v\in V$, and \emph{anti-linear} if $f(iv)=-if(v)$ for al $v\in V$.

A \emph{complex subspace} of a complex vector space $V$ is a subset closed under addition and multiplication by complex numbers. In contrast, 
a \emph{complex affine subspace} $W\subset V$ is a subset such that if $x,y\in W$ and $a\in\CCC$ then $ax+(1-a)y\in W$. A complex affine subspace is a complex subspace if and only if it contains $0\in V$. 
\label{app:affine}

\subsection{Lie groups} \label{Lie_groups} 
A  \emph{Lie group} is a  manifold $G$ that is also a group, for which the multiplication and inversion maps are smooth. \emph{Homomorphisms} between Lie groups are required to be smooth.  

\begin{example} For any finite-dimensional vector space $V$, the group $\GL(V)$ of (real-linear) automorphisms of $V$ is naturally a Lie group. It has two connected components. The component $\GL_+(V)$ containing the identity consists of those linear maps with positive determinants. 
\end{example}

A Lie group is `complex' if it is in fact a holomorphic manifold, and the multiplication and inversion maps are holomorphic. A homomorphism between complex Lie groups may or may not be holomorphic in the usual sense for maps between holomorphic manifolds. 

\begin{example} If $W$ is a finite-dimensional complex vector space, then the group $\GL_\CCC(W)$ of complex-linear automorphisms is a complex Lie group. 
\end{example}

\medskip
\subsubsection{\sc Covering Groups} \label{covering_groups}
A homomorphism $G'\to G$ of Lie groups is a \emph{covering map} if it is surjective and a local diffeomorphism; we say that $G'$ is a \emph{covering group} or simply a \emph{cover} of $G$.
A covering $\pi\colon G'\to G$ is \emph{trivial} if there is a section, i.e. a homomorphism $s\colon G\to G'$ such that $\pi\circ s$ is the identity map. If $G'$ is connected, then $\pi$ is trivial if and only if it is an isomorphism.

Covering groups have a \emph{path lifting property}. The case we need is that any homomorphism $\RRR\to G$ of Lie groups lifts uniquely to a homomorphism $\RRR\to G'$.
 
If $G$ is connected, then it has a unique \emph{universal cover} $G^\wedge$. 
The covering map $\pi_G\colon G^\wedge\to G$ is uniquely characterised by the following property: for any covering map $\alpha\colon G'\to G$, with $G'$ connected, there exists a unique covering map $\pi_{G'}\colon G^\wedge\to G'$ such that $\pi_G=\alpha\circ\pi_{G'}$. It follows that this $\pi_{G'}$ is the universal covering map for $G'$. It also follows that if $\pi\colon G'\to G^\wedge$ is a covering map, and $G'$ is connected, then $\pi$ is an isomorphism. (Indeed, $\pi_G\circ\pi\colon G'\to G$ is a covering map, so we obtain a covering map $\pi_{G'}\colon G^\wedge\to G'$, which must be inverse to $\pi$.)
 
\medskip
\subsubsection{\sc Lie Algebras}\label{Lie_algebras}  Given a Lie group $G$, let $\Lie(G)$ denote the tangent space to $G$ at the identity. If $G$ is a complex Lie group, then $\Lie(G)$ is a complex vector space. A homomorphism $\alpha\colon G\to H$  induces a linear map $\Lie(\alpha)\colon \Lie(G)\to\Lie(H)$ (the derivative of $\alpha$ at the identity);  if $G$ is connected, $\alpha$ is completely determined by $\Lie(\alpha)$. 
If $G$ and $H$ are connected, then $\alpha$ is a covering map if and only if $\Lie(\alpha)$ is an isomorphism. 

\begin{example} For any finite-dimensional vector space $V$,  $\Lie(\GL(V))$ is the vector space of all linear maps $f\colon V\to V$. For $\Lpo\subset\GL(M)$, one can show that $\Lie(G)\subset\Lie(\GL(M))$ consists of those $f$ such that
$$\eta(f(v),w)+\eta(v,f(w))=0\mbox{ for all $v,w\in M$}.$$
If $V$ is complex, then $\Lie(\GL_\CCC(V))$ is the space of all \emph{complex}-linear maps $V\to V$. 
\end{example}

\subsection{Representations}\label{representations}  
All representations in this paper are either \emph{finite dimensional} or what we call \emph{quasi-finite}.
A finite-dimensional representation of a Lie group is a triple $(\rho,G,V)$, where  $V$ is a finite-dimensional vector space, $G$ is a Lie group, and $\rho\colon G\to \GL(V)$ is a homomorphism of Lie groups. Often we speak of `the representation (or action) $\rho$ of $G$ on $V$.'
A representation is \emph{holomorphic} if $V$ is a complex vector space, $G$ a complex Lie group, and $\rho$ is a holomorphic homomorphism.

\subsubsection{Quasi-finite Representations}\label{quasifinite}
We also consider certain infinite-di\-men\-sion\-al representations $(\rho,G,V)$. Again $V$ is a vector space and $\rho$ is a homomorphism  $G\to\GL(V)$.  But there is a technicality here in that if $V$ is infinite dimensional, then $\GL(V)$ is not a Lie group in the usual sense. 
However, all representations that we consider are \emph{quasi-finite} in the following sense: there is a sequence of finite-dimensional subspaces
$V^1\subset V^2\subset\cdots$
such that the union is $V$,  each $V^n$ is $\rho(G)$-invariant, and the resulting homomorphisms $\rho^n\colon G\to\GL(V^n)$ are smooth. 
These  conditions allow us to apply notions that are initially valid only for finite dimensional representations. For example, a quasi-finite representation is `holomorphic' if every $\rho^n$ is holomorphic. We will use this technical device without further comment.

\subsection{Algebras}\label{algebras}
An \emph{algebra} is a real vector space $\Aa$ with a bilinear, associative, unital multiplication $\Aa\otimes_\RRR\Aa\to \Aa$. 
A \emph{homomorphism} of algebras is a linear map preserving the multiplication and the unit.  An algebra is \emph{complex} if $\Aa$ is a complex vector space, and the multiplication is complex-bilinear.  A (real-linear) homomorphism $f$ between complex algebras may be complex-linear or anti-linear.

\subsection{Graded Vector Spaces and Algebras}\label{gradings}
A  \emph{vector space} $V$ is $\ZZZ$-graded (or merely \emph{graded}) if it is given as a direct sum $$V=\bigoplus_{n\in \ZZZ} V_n$$
of vector spaces $V_n$ labelled by integers. 
\label{graded_algebras} An \emph{algebra} is graded if it is a graded vector-space and  
$ab\in \Aa_{m+n}$ for all $a\in\Aa_m,b\in\Aa_n$.

\header{II. Free Algebras}

\subsection{Tensor Algebras}\label{tensor_algebras} Suppose that $V$ is a real vector space. The \emph{tensor} (or \emph{free}) algebra $TV$ consists of formal sums of products of elements of $V$. More precisely, 
 if $v_1,\ldots,v_n$ is a basis for $V$, then $TV$ consists of \emph{non-commutative} polynomials in variables $v_1,\ldots,v_n$, with real coefficients.  More conceptually, 
$$TV=\RRR\oplus V\oplus(V\otimes_\RRR V)\oplus(V\otimes_\RRR V\otimes_\RRR V)\oplus\cdots=\bigoplus_{n\geq0}V^{\otimes n}.$$
The key property of $TV$ is that any real-linear map $V\to V$ extends uniquely to a homomorphism $TV\to TV$ of algebras.

\subsection{Free Complex Algebras}\label{free_algebras}
The analogous construction for a complex vector space $V$ 
is the \emph{free complex algebra} $\FF(V)$ defined by 
$$\FF(V)=\CCC\oplus V\oplus(V\otimes_\CCC V)\oplus(V\otimes_\CCC V\otimes_\CCC V)\oplus\cdots=\bigoplus_{n\geq0}V^{\otimes n}.$$
If $v_1,\ldots,v_n$ is a complex basis for $V$, then $\FF(V)$ consists of non-commutative polynomials in variables $v_1,\ldots,v_n$, with complex coefficients.   The key property is that any complex-linear or anti-linear map $V\to V$ extends uniquely to a complex-linear or anti-linear homomorphism $\FF(V)\to\FF(V)$ of algebras. 

\subsection{Commutativity and Supercommutativity}\label{variations}\label{commutativity}
The \emph{free commutative complex algebra} $\FF_c(V)$ is constructed from $\FF(V)$ by imposing the relation
$$ab=ba\in\FF_c(V)\qquad\mbox{for all $a,b\in V.$}$$
Thus elements of $\FF(V)$ that differ only by commutation are considered equal as elements of $\FF_c(V)$. 
Concretely, if $v_1,\ldots,v_n$ is a basis for $V$, then $\FF_c(V)$ is the algebra of (usual, commutative) polynomials in variables $v_1,\ldots,v_n$, with complex coefficients.  

\label{supercommutativity} 
If $V$ is a graded complex vector space, then the 
\emph{free supercommutative complex algebra} $\Fs(V)$ is constructed from $\FF(V)$ by imposing the `supercommutativity' relations
$$ab=(-1)^{mn}ba\quad\mbox{for all $a\in V_m$, $b\in V_n$.}$$
For example, if $V=V_0$ then $\Fs(V)=\FF_c(V)$, and if $V=V_1$ then $\Fs(V)$ is known as the \emph{Grassmann} or \emph{exterior} algebra of $V$.
If $V$ is a graded vector space, then $\Fs(V)$ (like $TV$, $\FF(V)$, or $\FF_c(V)$) is naturally a graded algebra.
 Concretely, $(\Fs(V))_m$ is spanned by products $v_{d_1}\cdots v_{d_n}$, with each $v_{d_k}\in V_{d_k}$ and $\sum d_k=m$.

\subsection{Extending Representations}\label{extending_reps}
As we have said, any complex-linear or anti-linear map $V\to V$ extends uniquely to a homomorphism 
$\FF(V)\to\FF(V)$ of algebras.
In particular, suppose that $(\rho,G,V)$ is a representation such that each transformation $\rho(g)$ is either complex-linear or anti-linear; then there is a unique representation $(\rho^\FF,G, \FF(V))$ such that each $\rho^\FF(g)$ is an algebra automorphism.  Explicitly: $\rho^\FF(g)\colon \FF(V)\to\FF(V)$ is the unique homomorphism of algebras extending $\rho(g)\colon V\to V$.
One similarly obtains representations
 $(\rho^{\Fc},G,\Fc(V))$ and 
 $(\rho^{\Fs},G,\Fs(V))$.

\header{III. Complexification}

\subsection{Vector Spaces}\label{complexify_vector_spaces}
\emph{Complexification} is a universal way to replace real objects by complex ones.  For a real vector space $V$, its complexification $V^\CCC$ is  
the tensor product
$$V^\CCC=\CCC\otimes_\RRR V:=\{a+bi\,\mid\,a,b\in V\}.$$ 
It is characterised by the property that any real-linear map $\alpha\colon V\to V'$ into a complex vector space $V'$ extends uniquely to a complex-linear map $\alpha\Hol\colon V^\CCC\to V'$. In other words, the set of real-linear maps $V\to V'$ 
is naturally identified with the set of complex-linear maps $V^\CCC\to V'$. 

\begin{rem} If $V$ is complex, then we can still form its complexification $V^\CCC$. This is not equal to $V$, unless of course $V=0$. In general, if $V$ has $n$ real dimensions, then $V^\CCC$ has $2n$ real dimensions.
\end{rem}

\subsection{Lie Groups}\label{complexify_Lie_groups}
\label{complexifying_lie_groups}
Given a \emph{connected} Lie group $G$, its complexification $G^\CCC$ is a complex Lie group, equipped with a homomorphism $i\colon G\to G^\CCC$, characterised by the following property: any homomorphism $\alpha\colon G\to G'$ into a complex Lie group $G'$ extends uniquely to a holomorphic homomorphism $\alpha\Hol\colon G^\CCC\to G'$. That is, $\alpha\Hol$ uniquely satisfies $\alpha\Hol\circ i = \alpha$.

For the construction of $G^\CCC$, see \cite[Ch. 3, \S6, Prop. 20]{bouGAL}. The essential fact for us is that $\Lie(G^\CCC)=(\Lie G)^\CCC$
(the right-hand side being the complexification in the sense of vector spaces).

\begin{rem}Note that, in general, $G$ may not be a subgroup of $G^\CCC$; there is only a homomorphism between them.  However, this homomorphism happens to be injective in most cases of interest, and in particular for the Lorentz group.\end{rem}

\begin{example}  If $V$ is a finite-dimensional real vector space, then $\GL_+(V)$ is a connected Lie group. If $\dim V>1$ then $\GL_+(V)^\CCC = \GL_\CCC(V^\CCC)$. However, if $\dim V=1$ then $\GL_+(V)$ is isomorphic to the additive group of real numbers,  and $\GL_+(V)^\CCC$ to the additive group of complex numbers, whereas $\GL_\CCC(V^\CCC)$ is isomorphic to the multiplicative group of non-zero complex numbers.
\end{example}

\subsection{Extending Representations}\label{complexify_representations}
\label{complexifying_representations}
Consider a representation $(\rho,G,V)$ of a connected Lie group $G$.
If $V$ is complex, and every $\rho(g)$ is complex-linear, then $(\rho,G,V)$ extends uniquely to a \emph{holomorphic} representation $(\rho\Hol,G^\CCC,V)$. Namely, $\rho\Hol\colon G^\CCC\to\GL_\CCC(V)$ is the unique holomorphic homomorphism extending $\rho\colon G\to\GL_\CCC(V)$.
If $V$ is not assumed to be complex, $(\rho,G,V)$ extends uniquely to a holomorphic representation $(\rho^\CCC,V^\CCC,G^\CCC)$ of $G^\CCC$ on $V^\CCC$. That is, $\rho^\CCC\colon G^\CCC\to \GL_\CCC(V^\CCC)$ is the unique holomorphic homomorphism extending $\rho\colon G\to\GL(V)\subset\GL_\CCC(V^\CCC)$.

\begin{example} Suppose $V$ is a finite-dimensional vector space of dimension at least two, and $(\rho,\GL_+(V),V)$ is the usual representation of $\GL_+(V)$ on $V$. Then $\rho^\CCC$ is just the natural representation of $\GL_\CCC(V^\CCC)$ on $V^\CCC$. 
\end{example}

\subsection{Complex Conjugation}\label{complex_conjugation}

For a vector space $V$,  one has the standard complex conjugation on $V^\CCC$, defined by $*_V\colon a+bi\mapsto (a+bi)^*=a-bi.$  Note that such a map  is \emph{not} automatically defined on every complex vector space, but only on a complexification. 
Similarly, for a connected Lie group $G$ there is a complex conjugation 
 $*_G\colon G^\CCC\to G^\CCC$,   characterised by the equation
$\Lie(*_G)=*_{\Lie(G)}$ as maps $\Lie(G^\CCC)\to\Lie(G^\CCC).$
The essential fact for us is this:
\begin{quote} Let  $(\rho,G,V)$ be a representation of a connected Lie group. Then 
$(\rho^\CCC(g)v)^*=\rho^\CCC(g^*)v^*$ for all $g\in G^\CCC,v\in V^\CCC$.
\end{quote}

\section{Clifford Algebras and Pin Groups}\label{Clifford}

In section \ref{covers}, we described covering groups of $\Lp$ in terms of the universal cover $\Lp(\CCC)^\wedge$. Another description, more common but less suited to our needs, can be given in the language of \emph{Clifford algebras}.  Since Clifford algebras are routinely used in spinor theory,  we now explain how our covering groups appear in that framework. 

\subsection{} The \emph{Clifford algebra} $\Cliff$ is derived from the free algebra $\FF(M^\CCC)$ by imposing the relations $vw+wv=2\eta^\CCC(v,w)$ for all $v,w\in M^\CCC$.

 Any sequence $v_1,\ldots,v_n\in M^\CCC$ defines a product $v_1v_2\cdots v_n\in\Cliff$. The \emph{Pin group}\footnote{Sometimes this term refers only to what we call ${}^a\!P$.} $P$ consists of all such products in which $v_1,\ldots,v_n$ satisfy $\eta^\CCC(v_i,v_i)=\pm 1$. This is a group because each $v_i$ has a multiplicative inverse, equal to  $\pm v_i$. 
$P$ has four connected components
$$P= 
{}^a\!P_+\cup
{}^b\!P_+\cup
{}^a\!P_-\cup
{}^b\!P_-$$
 which can be described as follows. 
The identity component of $P$ is ${}^a\!P_+$. 
The subgroup $P_+= {}^a\!P_+\cup{}^b\!P_+$ consists of products $v_1\cdots v_n\in P$ with $n$ even; the subgroup ${}^a\!P= {}^a\!P_+\cup{}^a\!P_-$ consists of products $v_1\cdots v_n\in P$ with every $\eta^\CCC(v_i,v_i)=1$. 
The Pin group $P$ is a covering group of the complex Lorentz group $L(\CCC)$; the covering map $\pi\colon P\to L(\CCC)$ associates to a unit vector $v\in M^\CCC$ the reflection in the plane orthogonal to  $v$:
$$\pi(v)\colon x\mapsto 2\frac{\eta^\CCC(x,v)}{\eta^\CCC(v,v)}v -x.$$
This is a four-fold cover, with $\pi(g)=\pi(-g)=\pi(\pm ig)$ for all $g\in P$. In particular, the preimage of the identity element of $L(\CCC)$ is $\{\pm1,\pm i\}\subset\Cliff$.

\subsection{}We now describe the covering groups from \S\ref{covers} in terms of $P$. The four-fold cover $\Lptt(\CCC)$ of $\Lp(\CCC)$ can be identified with the preimage of $\Lp(\CCC)$ in $P$. Thus 
$$\Lptt(\CCC)=P_+={}^a\!P_+\sqcup{}^b\!P_+.$$
The double cover $\Lpot$ of $\Lpo$ is the preimage of $\Lpo$ in
${}^a\!P_+$. 
This extends to a double cover of $\Lp$ in two different ways: \begin{enumerate}
\item[(a)]  $\Lpti=\Lpot\sqcup\Lpati$, where $\Lpati$ is the preimage of $\Lpa$ in ${}^a\!P_+$. 
\item[(b)]  $\Lpt=\Lpot\sqcup\Lpat$, where $\Lpat$ is the preimage of $\Lpa$ in ${}^b\!P_+$. 
\end{enumerate}
The element called $\tau$ in \S\ref{covers}
is $-1\in\Cliff$, and what we called $I$ is $i\in\Cliff$.

\subsection{} \label{PT-5-alt}  We can view $\Cliff$ as the complexification of the real Clifford algebra $\Cliff_\RRR$, constructed from the tensor algebra $TM$ by the relations $vw+wv=2\eta(v,w)$. The subgroup of $P_+$ fixed by complex conjugation is exactly the cover $\Lpat$. This is a direct analogue of (PT-3) and can be used to give an alternative proof of (PT-5).

\section{Proofs}
\label{proofs}

\subsection{Proof of Lemma \ref{complexlorentz}}

Let $j$ be the inclusion of $\Lpo$ into $\Lp(\CCC)$. It extends to a holomorphic map $j\Hol\colon \Lpoc\to\Lp(\CCC)$. The claim is that $j\Hol$ is an isomorphism. 

First we prove that the corresponding map $\Lie(j^\CCC)$ of Lie algebras is an isomorphism, so that $j\Hol$ is a covering map (\ref{Lie_algebras}). 
This is perhaps clearest in coordinates. Choose an orthonormal basis of $M$, so that $M=\RRR^d$ and the metric is represented by a diagonal matrix $\eta=\diag(1,-1,\ldots,-1)$. Standard considerations identify $\Lie(\Lp(\CCC))$ with a set of matrices
$$
\Lie(\Lp(\CCC))=\{f\in\Mat(d,\CCC)\,\mid\,f\eta=-\eta f^T\}$$
and $\Lie(\Lpo)\subset\Lie(\Lp(\CCC))$ with the subspace of \emph{real} such matrices.  The claim is just that for any real matrices $f_1$ and $f_2$, $f=f_1+if_2$ lies in $\Lie(\Lp(\CCC))$ if and only if $f_1$ and $f_2$ lie in $\Lie(\Lpo)$. This is easy to check.

By \ref{covering_groups} we therefore have covering maps
$$\Lp(\CCC)^\wedge\overset{\pi}\too\Lpoc\overset{j\Hol}\too\Lp(\CCC)$$
whose composition is the universal covering map $\pi_{\Lp(\CCC)}\colon\Lp(\CCC)^\wedge\to\Lp(\CCC)$.
Now we appeal to two well-known topological facts. First, $\pi_{\Lp(\CCC)}$ is two-to-one.  
 Therefore either $\pi$ or $j\Hol$ must be an isomorphism. Second, the preimage of $\Lpo$ in $\Lp(\CCC)^\wedge$ is a non-trivial double cover of $\Lpo$ (in fact  it is the universal cover if $\dim M>3$). In particular,  
there can be no map $i\colon \Lpo\to\Lp(\CCC)^\wedge$ such that $\pi_{\Lp(\CCC)}\circ i=j$. But by definition of $j\Hol$, there \emph{is} a map $i\colon \Lpo\to\Lpoc$ such that $j\Hol\circ i=j$. Therefore $j\Hol$ rather than $\pi$ must be the isomorphism.

\subsection{Proof of Lemma \ref{lem2}}

The claim is that for any $g\in\Lp$ and $v\in V$, $\RC(g)v$ is in $V$. This is equivalent to saying that it is fixed by the complex conjugation on $V^\CCC$: 
$$\left( \RC(g)v\right)^* = \RC(g)v.$$
Since $\RC$ is a complexification, $(\RC(g)v)^* = \RC(g^*)v^*$ (\S\ref{complex_conjugation}), and by assumption $v^*=v$. The lemma therefore follows from the property  $g^*=g$ stated as (PT-3) and established by Example \ref{eg_vector}.

\subsection{Proof of Lemma \ref{prop2}}
This is just a special case of the following:
\begin{lemma} \label{lem4}
	Let $(R,G, A)$ be a  complex-linear representation of a connected Lie group $G$ on a complex vector space $A$, and $(R\Hol,G^\CCC,A)$ its holomorphic extension.  Any complex affine subspace $A' \subset A$ invariant under $R(G)$ is also invariant under $R\Hol(G^\CCC)$.
	\end{lemma}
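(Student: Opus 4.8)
The plan is to reduce the statement to the identity theorem for holomorphic functions of one complex variable, using only the structural facts recorded in \ref{complexify_Lie_groups}: $G^\CCC$ is a \emph{connected} complex Lie group with $\Lie(G^\CCC)=(\Lie G)^\CCC=\Lie G\oplus i\,\Lie G$, and there is a homomorphism $\iota\colon G\to G^\CCC$ (not assumed injective) with $R\Hol\circ\iota=R$. First I would reduce to the case where $A$ is finite-dimensional. It suffices, for each fixed $v\in A'$, to show that the whole $R\Hol(G^\CCC)$-orbit of $v$ lies in $A'$ (if $A'=\emptyset$ there is nothing to prove). Having fixed $v$, quasi-finiteness (\ref{quasifinite}) supplies a finite-dimensional $R\Hol(G^\CCC)$-invariant subspace $A^n$ containing $v$; it is also $R(G)$-invariant, and $A'\cap A^n$ is a complex affine subspace of $A^n$ that is $R(G)$-invariant, while $R\Hol|_{A^n}$ is the holomorphic extension of $R|_{A^n}$. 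Replacing $A,A'$ by $A^n,A'\cap A^n$, I may assume $A$ finite-dimensional; then $A'$ is the common zero set of finitely many complex-affine functions $\psi\colon A\to\CCC$.

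The crux is to show that for every $X\in\Lie G$ and every $v'\in A'$ one has $R\Hol(\exp(iX))v'\in A'$, where $\exp$ is the (holomorphic) exponential map of $G^\CCC$. For this I would consider the curve $\gamma\colon\CCC\to A$, $\gamma(z)=R\Hol(\exp(zX))v'$: it is holomorphic, because $z\mapsto\exp(zX)$ is a holomorphic homomorphism $\CCC\to G^\CCC$ and $R\Hol$ is holomorphic. For real $z=t$ we have $\exp(tX)=\iota(\exp_G(tX))$ (as $\iota$ intertwines exponentials), so $\gamma(t)=R(\exp_G(tX))v'\in A'$; hence each holomorphic function $\psi\circ\gamma\colon\CCC\to\CCC$ vanishes on $\RRR$ and therefore vanishes identically, and in particular $\gamma(i)=R\Hol(\exp(iX))v'\in A'$. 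Applying this with $X$ replaced by $tX$ shows that $R\Hol(\exp(itX))$ carries $A'$ into $A'$ for every $t\in\RRR$.

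Finally I would globalize. Let $S=\{h\in G^\CCC:R\Hol(h)A'\subseteq A'\}$. This is a subgroup: it is closed under products, and closed under inverses because $R\Hol(h)$ is a linear automorphism of $A$, so $R\Hol(h)A'$ is an affine subspace of $A'$ of the same finite dimension, hence equal to $A'$. By the previous step $S$ contains $\iota(G)$ and contains $\exp(itX)$ for all $t\in\RRR$, $X\in\Lie G$. Fixing a real basis $X_1,\dots,X_d$ of $\Lie G$, the map $(s_1,\dots,s_d,t_1,\dots,t_d)\mapsto\exp(s_1X_1)\cdots\exp(s_dX_d)\exp(it_1X_1)\cdots\exp(it_dX_d)$ is a local diffeomorphism at the origin, its differential there being the real-linear isomorphism $\RRR^{2d}\to\Lie G\oplus i\,\Lie G=\Lie(G^\CCC)$; since $\exp(s_kX_k)=\iota(\exp_G(s_kX_k))\in\iota(G)$, its image lies in $S$, so $S$ contains a neighbourhood of the identity. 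A subgroup of the connected group $G^\CCC$ containing a neighbourhood of the identity is all of $G^\CCC$, so $S=G^\CCC$, which is the assertion. The step I expect to require the most care is this last one — making precise that the one-parameter subgroups $t\mapsto\exp(itX)$ together with $\iota(G)$ generate $G^\CCC$ — although it uses nothing beyond $\Lie(G^\CCC)=\Lie G\oplus i\,\Lie G$ and connectedness of $G^\CCC$; the reduction to finite dimensions and the description of $A'$ by affine functions are routine. Note that no hypothesis on $G$ beyond connectedness is used.
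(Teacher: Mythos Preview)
Your proof is correct, but it takes a different route from the paper's. The paper argues more abstractly: after reducing to finite dimensions, it observes that the stabilizer $H=\{g\in\GL_\CCC(A):g(A')=A'\}$ is itself a complex Lie subgroup of $\GL_\CCC(A)$; since $R$ factors through $H$, the universal property of $G^\CCC$ yields a holomorphic extension $R_1\Hol\colon G^\CCC\to H$, and uniqueness of the holomorphic extension forces $R_1\Hol=R\Hol$, whence $R\Hol(G^\CCC)\subset H$. This is a two-line argument once one grants that $H$ is a complex Lie group.

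Your approach instead unpacks what that universal property \emph{means}: you analytically continue along each complex one-parameter subgroup via the identity theorem, and then use that these one-parameter subgroups, together with $\iota(G)$, topologically generate the connected group $G^\CCC$. This is longer but more self-contained: you never need to verify that the affine stabilizer $H$ is a holomorphic subgroup, which is the one non-obvious ingredient in the paper's argument. Conversely, the paper's route makes the \emph{reason} for the result transparent (it is a direct consequence of the defining property of $G^\CCC$), whereas your argument somewhat obscures this by hand-building the extension. Your careful reduction to finite dimensions via quasi-finiteness is also more explicit than the paper's one-line dismissal.
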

\begin{proof} We can assume that $A$ is finite-dimensional. Let $H\subset\GL_\CCC(A)$ be the group of all complex-linear maps preserving $A'\subset A$. This $H$ is a holomorphic subgroup of $\GL_\CCC(A)$. The representation $R$ factors through a homomorphism 
$R_1\colon G\to H$. This extends to a holomorphic map $R_1\Hol\colon G^\CCC\to H\subset\GL_\CCC(A)$. By the uniqueness of $R\Hol$, we must in fact have $R_1\Hol=R\Hol$, so $R\Hol(G^\CCC)$ preserves $A'$.
\end{proof}

\subsection{Proof of Lemma \ref{prop3}}

For each $g\in\Lp$, $\RbF\Hol(g)$ is a complex-linear automorphism of $\Kf$, as is $\RPbPF(g)$; to show that they are equal, it suffices to show that they transform the space $W\otimes_\RRR TM\subset\Kf$ of field symbols in the same way.  Consider the representation $R$ of $\Lpoc$ on $W\otimes_\RRR TM$ defined by
\be\label{Arr!}R(g)(\lambda\otimes(\xi_1\cdots\xi_n))= (\lambda\circ\rho^\CCC(g)\inv)\otimes(\omega^\CCC(g)\xi_1\cdots\omega^\CCC(g)\xi_n).\ee
Some explanation is needed for this formula to make sense. First, we have defined $W$ to be the space of real-linear maps $V\to\CCC$.  According to \S\ref{complexify_vector_spaces},  we could equivalently define $W$ to be the space of complex-linear  maps $\lambda\colon V^\CCC\to\CCC$.  With this in mind, the expression
$\lambda\circ\rho^\CCC(g)\inv$ defines an element of $W$.
Second, we consider the $\xi_i$ as \emph{complex} vectors, i.e. elements of $M^\CCC$, so that $\xi_1\cdots\xi_n$ and $\omega^\CCC(g)\xi_1\cdots\omega^\CCC(g)\xi_n$ both lie in the complex tensor algebra $\FF(M^\CCC)$. This is legitimate because any complex coefficients can be absorbed into $W$: formally, we have
$$W\otimes_\RRR TM=W\otimes_\CCC\CCC\otimes_\RRR TM=W\otimes_\CCC\FF(M^\CCC).$$
Thus the right-hand side of \eqref{Arr!} makes sense as an element of 
$W\otimes_\RRR TM$.  

Now $R$ is a holomorphic representation of $\Lpoc$ on $W\otimes_\RRR TM$, and its restriction to $\Lpo\subset\Lpoc$ is just $\RbF$. This property uniquely characterises $\RbF\Hol$, so $R=\RbF\Hol$.  On the other hand, for $g\in\Lp$, \eqref{Arr!} is the very definition of $\RPbPF(g)$. Therefore $\RbF\Hol=\RPbPF$ as representations of $\Lp$.

\subsection{Proof of Lemma \ref{complexlorentzcover}}
The inclusion $j\colon \Lpot\to\Lp(\CCC)^\wedge$ extends to a holomorphic homomorphism $j\Hol\colon \Lpotc\to\Lp(\CCC)^\wedge$.  
Since $\Lpot$ is a cover of $\Lpo$, and $\Lp(\CCC)^\wedge$ is a cover of $\Lp(\CCC)$, they have the same Lie algebras as $\Lpo$ and $\Lp(\CCC)$ respectively; the map $\Lie(j\Hol)$ of Lie algebras is \emph{exactly} the one appearing in the proof of Lemma \ref{complexlorentz}.
The argument there shows that $\Lpotc$ must be a covering space of $\Lp(\CCC)^\wedge$. But $\Lp(\CCC)^\wedge$ is the \emph{universal} covering space of $\Lp(\CCC)$.  Therefore the covering map $\Lpotc\to\Lp(\CCC)^\wedge$ must be an isomorphism (see \ref{covering_groups}).

\subsection{Proof of Lemma \ref{lemRepP'}}
Momentarily assume (PT-5), that $g^*=g\tau$ for all $g\in\Lpati$. Suppose given $v\in V_n$. It's enough to show is that $i^n\RC(g)v$ lies in $V_n$. 

First we show that it lies in $V$, i.e. that it is fixed by complex conjugation (cf. the proof of Lemma \ref{lem2}).  We have
$(i^n\RC(g)v)^*=(-i)^n\RC(g^*)v^*=(-i)^n\RC(g\tau)v=
(-i)^n\RC(g)\RC(\tau)v=(-i)^n\RC(g)(-1)^nv=i^n\RC(g)v$. Thus $i^n\RC(g)v$ lies in $V$.  

Next, to show that it lies in $V_n$, it remains to note that $\RC(\tau)(i^n\RC(g)v)=i^n\RC(g)\RC(\tau)v=(-1)^n(i^n\RC(g)v)$.

Finally, to establish (PT-5), choose a unit timelike vector $e_0$ and an orthogonal unit spacelike vector $e_1$. Let $R(\theta)\in\Lp(\CCC)$ be the transformation defined by rotating $M^\CCC$ by $\theta$ in the plane spanned by $ie_0$ and $e_1$. That is,
$$R(\theta)ie_0=(\cos\theta)ie_0-(\sin\theta)e_1\qquad
R(\theta)e_1=(\sin\theta)ie_0+(\cos\theta)e_1$$
and $R(\theta)v=v$ if $v$ is orthogonal to $e_0$ and $e_1$.
Thus $R$ is a homomorphism $\RRR\to\Lp(\CCC)$ of Lie groups,  and 
(\S\ref{covering_groups}) there is a unique way to lift $R$ to a homomorphism $R^\wedge\colon\RRR\to\Lp(\CCC)^\wedge$. 
$R([0,2\pi])$ is a loop in $\Lp(\CCC)$, starting and finishing at the identity. This loop is well known to be \emph{homotopically non-trivial}; that just means that $R^\wedge(2\pi)=\tau$.
It is also easy to check that $R(\theta)^*=R(\theta)\inv$, and the uniqueness of the lifting then implies $R^\wedge(\theta)^*=R^\wedge(\theta)\inv$.

Set $h=R^\wedge(\pi)$. Then $h\in\Lpati$, $h^2=R^\wedge(2\pi)=\tau$, and $h^*=h\inv$. Therefore $h^*=h\tau$.
It follows that $g^*=g\tau$ for \emph{any} $g\in\Lpati$: there exists $g_0\in\Lpot$ such that $g=g_0h$, and then $g^*=g_0^*h^*=g_0h\tau=g\tau$.

\subsection{Proof of Theorem \ref{thm_SR_spinor}} We proceed in three steps.

\subsubsection*{Step 1.}
Since the formal field theory $\Df$ is invariant under $\RbFs(\Lpot)$, it follows (via Lemma \ref{lem4}) that $\Df$ is invariant under 
$\RbFs\Hol(\Lpti)$. This is the analogue of Lemma \ref{prop2}.

\subsubsection*{Step 2.} Recall now that $\Kfs$ is a graded algebra, 
 $$\Kfs=(\Kfs)_0\oplus(\Kfs)_1\oplus(\Kfs)_2\oplus\cdots$$ 
 with $(\Kfs)_m$ spanned by monomials containing $m$ fermionic field symbols -- that is, $m$ factors from $W_1 \otimes_\RRR TM$.  We next establish that
\be\label{ii}S\circ\RPbPF(g)(X)=i^{m^2}\RbF\Hol(I\inv g)(X)\ee
for $X\in(\Kfs)_m$ and $g\in\Lpat$.  This is an analogue of Lemma \ref{prop3}.

Formula \eqref{Arr!} and subsequent comments define a representation $R$ of $\Lpotc$ on $W\otimes_\RRR TM$, and the argument there shows that $R=\RbF\Hol$. In this case, evaluating \eqref{Arr!} at $I\inv g\in\Lpati$ shows that
$$\RbF\Hol(I\inv g)(\lambda\otimes(\xi_1\cdots\xi_n))=\begin{cases}
\phantom{i}\RPbPF(g) (\lambda\otimes(\xi_1\cdots\xi_n)) & \mbox{for $\lambda\in W_0$ }\\
i \RPbPF(g) (\lambda\otimes(\xi_1\cdots\xi_n)) &  \mbox{for $\lambda\in W_1$}.\end{cases}$$
These formulae determine the action of  $\RbF\Hol(I\inv g)$ on all of $\Kfs$:  
\be\label {uu}\RbF\Hol(I\inv g)(X)=i^m \RPbPF(g)(X)\ee
for $X\in(\Kfs)_m$. On the other hand, it is easy to check that 
\be\label{yy}S(X)=(-1)^{m(m-1)/2}X.\ee
 This is analogous to the statement that $S$ is the identity map on $\Kfc$; it is here and only here that we use supercommutativity. Equations \eqref{uu} and \eqref{yy} imply \eqref{ii}.

\subsubsection*{Step 3.}Applying Steps 1 and 2, it only remains to argue that $\Df$ is invariant under the transformation $\alpha$ defined 
on $X\in(\Kfs)_m$ by
$$\alpha(X)=i^{m^2}X=\begin{cases} \phantom{i}X &\mbox{if $m$ is even} \\
 iX & \mbox{if $m$ is odd}.\end{cases}$$
 Note that
$$\RbF(\tau)(X)=\begin{cases} \phantom{-{}}X &\mbox{if $m$ is even} \\
 -X & \mbox{if $m$ is odd}.\end{cases}$$
From this it is easy to check that, for any $F\in\Kfs$,
\be\label{affinemagic}\alpha(F)=\left(\frac{1+i}2\right)F+\left(1-\frac{1+i}2\right)\RbF(\tau)F.\ee
Now, suppose given $F\in\Df$. Since $\Df$ is $\RbF(\Lpot)$-invariant, we also have $\RbF(\tau)F\in \Df$. 
Since $\Df$ is an affine subspace,  $F\in\Df$, and $\RbF(\tau)F\in\Df$, it follows from \eqref{affinemagic} that 
$\alpha(F)$ lies in $\Df$ as well.

\section*{Acknowledgements}

The authors would like to thank Robert Geroch and David Wallace for very helpful discussions, and Merton College, Oxford, for the Junior Research Fellowships that enabled us to forge our collaboration. Part of HG's contribution to this paper was supported by Perimeter Institute for Theoretical Physics; research at Perimeter Institute is supported by the Government of Canada through Industry Canada and by the Province of Ontario through the Ministry of Economic Development and Innovation. Part of TT's contribution was supported by a Seggie Brown Research Fellowship.

\bibliographystyle{apacite}

\bibliography{hils_bib}

\end{document}